\newtheorem{claim}{Claim}
\newtheorem{observation}{Observation}
\newtheorem{theorem}{Theorem}
\newtheorem{definition}{Definition}
\newtheorem{lemma}{Lemma}
\newenvironment{proofof}[1]{\smallskip\par\noindent{\sl Proof of #1}:\enspace}{\hspace*{\fill} $\openbox$}
\newenvironment{proof_of_claim}{\noindent {\it Proof of Claim: }}{\hspace*{\fill} }
\title{A Duality Based 2-Approximation Algorithm for Maximum~Agreement Forest\iftoggle{abs}{ [Extended Abstract]}{}\footnote{\iftoggle{abs}{Full version is available at \url{http://arxiv.org/abs/1511.06000}.}{}
This work was initiated when the authors were visitors of Leen Stougie at the Tinbergen Institute. FS was supported in part by the Simon Prize for Excellence in the Teaching of Mathematics at William \& Mary. AvZ was supported in part by a William \& Mary Summer Research Award, NSF Prime Award: HRD-1107147, Women in Scientific Education (WISE) and by a grant from the Simons Foundation (\#359525, Anke Van Zuylen).}}
\author[1]{Frans Schalekamp}
\author[2]{Anke van Zuylen}
\author[3]{Suzanne van der Ster}
\affil[1]{School of Operations Research \& Information Engineering, Cornell University, Ithaca, NY, \texttt{fms9@cornell.edu}}
\affil[2]{Department of Mathematics, College of William \& Mary, Williamsburg, VA, \texttt{anke@wm.edu}}
\affil[3]{Institut f\"ur Informatik, Technische Universit\"at M\"unchen, Germany, \texttt{ster@in.tum.de} }
\DeclareMathOperator{\lca}{lca}
\DeclareMathOperator{\minimize}{minimize}
\newcommand{\resolvepair}{\text{\sc ResolvePair}}
\newcommand{\resolveset}{\text{\sc ResolveSet}}
\newcommand{\preprocess}{\text{\sc Preprocess}}
\newcommand{\smallarbitrarytreeat}[1]{\draw #1 -- ++(-.5,-1) -- ++(1,0) -- ++(-.5,1);}
\newcommand{\drawnode}[1]{\draw [fill] #1 circle [radius=.07];}
\newcommand{\ot}{\ensuremath{\leftarrow}}
\newcommand{\dual}{\fbox}
\begin{document}
\maketitle

\setcounter{footnote}{0}
 \begin{abstract}
We give a 2-approximation algorithm for the Maximum Agreement Forest 
problem on two rooted binary trees. This NP-hard 
problem has been studied extensively in the past two decades, since it can be used to compute the Subtree Prune-and-Regraft (SPR) distance between two phylogenetic trees. Our result improves on the very recent 
2.5-approximation algorithm due to Shi, Feng, You and Wang (2015). Our 
algorithm is the first approximation algorithm for this problem that 
uses LP duality in its analysis.
\end{abstract}

\section{Introduction}
Evolutionary relationships are often modeled by a rooted tree, where the leaves are a set of species, and internal nodes are (putative) common ancestors of the leaves below the internal node. Such phylogenetic trees date back to Darwin~\cite{Darwin37}, who used them in his notebook to elucidate his thoughts on evolution.

The topology of phylogenetic trees can be based on different sources of data, e.g., morphological data, behavioral data, genetic data, etc., which can lead to different phylogenetic trees on the same set of species. Different measures have been proposed to measure the similarity of (or distance between) different phylogenetic trees on the same set of species (or individuals). 
Using the size of a maximum subtree common to both input trees as a similarity measure
was proposed by Gordon~\cite{Gordon79}. The problem of finding such a subtree is now known as the Maximum Agreement Subtree Problem, and has been studied extensively. Steel and Warnow~\cite{Steel93} are the first to give a polynomial-time algorithm for this problem. Their approach is refined to an $O(n^{1.5}\log n)$-time algorithm by Farach and Thorup~\cite{Farach94}, who subsequently show their algorithm is optimal, unless unweighted bipartite matching can be solved in near linear time~\cite{Farach97}.

There exist non-tree-like evolutionary processes that preclude the existence of a phylogenetic tree, so-called reticulation events (such as hybridization, recombination and horizontal gene transfer). In this context, a particularly meaningful measure of comparing phylogenetic trees is the SPR-distance measure (where SPR is short for Subtree Prune-and-Regraft): this measure provides a lower bound on a certain type of these non-tree evolutionary events. The problem of finding the exact value of this measure for a set of species motivated the formulation of the Maximum Agreement Forest Problem (MAF) by Hein, Jian, Wang and Zhang~\cite{Hein96}.  
Since the introduction by Hein et al., MAF has been extensively studied, including several variants, such as the problem where the input consists of more than two trees, whether the input trees are rooted or unrooted, binary or non-binary. In our paper, we concentrate on MAF on two rooted binary trees.
 
For ease of defining the solutions to MAF on two rooted binary trees, we think of the input trees as being directed, where all edges are directed away from the root. Given two rooted binary trees on the same leaf set ${\cal L}$, the MAF problem asks to find a minimum set of edges to 
delete from the two trees, so that the directed trees in the resulting two forests can be paired up into pairs of ``isomorphic'' trees. Two trees 
are isomorphic if they contain the same nodes from ${\cal L}$ and recursively removing nodes of out-degree zero that are not 
in ${\cal L}$ and contracting two arcs incident to a node of in-degree and out-degree one, results in the same binary tree. An alternative (but equivalent) definition will be given in Section~\ref{sec:prelim}.

The problem of finding a MAF on two rooted binary trees has been extensively studied, although unfortunately some of the published results later turned out to have subtle errors. First of all, Allen and Steel~\cite{Allen2001} point out that the claim by Hein et al. that solving MAF on two rooted directed trees computes the SPR-distance between the trees is incorrect. Bordewich and Semple~\cite{Bordewich04} show how to redefine MAF for rooted directed trees so that it is indeed the case that the optimal objective value of MAF is equal to the SPR-distance. In the paper in which they introduce MAF, Hein et al.~\cite{Hein96} proved NP-hardness and they give an approximation algorithm for the problem, the approximation guarantee of which turned out to be slightly higher than what was claimed. Bordewich and Semple~\cite{Bordewich04} show that, for their corrected definition of MAF, NP-hardness continues to hold. Other approximation algorithms followed~\cite{Rodrigues07,Bonet06,Bordewich08}. The current best approximation ratio for MAF is 2.5, due to Shi, Feng, You and Wang~\cite{Shietal15}, and Rodrigues~\cite{Rodrigues03} has shown that MAF is MAXSNP-hard. In addition, there is a body of work on other approaches, such as Fixed-Parameter Tractable (FPT) algorithms (e.g.,~\cite{Whidden09,Whidden13}) and Integer Programming ~\cite{Wu09,Wu10}.

In this paper we give an improved approximation algorithm for MAF with an analysis based on linear programming duality. 
Our 2-approximation algorithm differs from previous works in two aspects. First of all, in terms of bounding the optimal value, we construct a feasible dual solution, rather than arguing more locally about the objective of the optimal solution. Secondly, our algorithm itself also takes a more global approach, whereas the algorithms in previous works mainly consider local substructures of at most four leaves. In particular, we identify a minimal subtree\footnote{By subtree of a rooted tree, we mean a tree containing the leaves that are descendents of some particular node in the rooted tree (including this node itself), and all edges between them.}  of one of the two input trees for which the leaf set is incompatible, i.e., when deleting all other leaves from both trees, the remaining two trees are not isomorphic (such a minimal subtree can be found efficiently). We then use ``local'' operations which repeatedly look at two ``sibling'' leaves in the minimal incompatible subtree, and perform similar operations as those suggested by previous authors.

Preliminary tests were conducted using a proof-of-concept implementation of our algorithm in Java. Our preliminary results indicate that our algorithm finds a dual solution that in $44\%$ of the 1000 runs is equal to the optimal dual solution, and in $37\%$ of the runs is $1$ less than the optimal solution. The observed average approximation ratio is about $1.92$; following our algorithm with a simple greedy search algorithm decreases this to less than $1.28$. 

The remainder of our paper is organized as follows. In Section~\ref{sec:prelim}, we formally define the problem. In Section~\ref{sec:3approx}, we give a 3-approximation algorithm for MAF that introduces the dual linear program that is used throughout the remainder of the paper and gives a flavor of the arguments used to prove the approximation ratio of two. In Section~\ref{sec:key}, we give an overview of the 2-approximation algorithm and the key ideas in its analysis. In Section~\ref{sec:impl} we give more details on the randomly generated instances that we used to obtain our preliminary experimental results, and we conclude in Section~\ref{sec:concl} with some directions for future research.

\iftoggle{abs}{The full version of this paper~\cite{full} contains further details on the algorithm and a complete analysis that shows that its approximation ratio is 2.}

\section{Preliminaries}\label{sec:prelim}

The input to the Maximum Agreement Forest problem (MAF) consists of two rooted binary trees $T_1$ and $T_2$, where the leaves in each tree are labeled with the same label set ${\cal L}$. Each leaf has exactly one label, and each label in ${\cal L}$ is assigned to exactly one leaf in $T_1$, and one leaf in $T_2$. For ease of exposition, we sometimes think of the edges in the trees as being directed, so that there is a directed path from the root to each of the leaves.

We call the non-leaf nodes the internal nodes of the trees, and we let $V$ denote the set of all nodes (internal nodes and leaves) in $T_1\cup T_2$.
Given a tree containing $u$ and $v$, we let $\lca(u,v)$ denote the lowest (closest to the leaves, furthest from the root) common ancestor of $u$ and $v$. We let $\lca_1(u,v)$ and $\lca_2(u,v)$ denote $\lca(u,v)$ in tree $T_1$, respectively, $T_2$. We extend this notation to $\lca(U)$ which will denote the lowest common ancestor of a set of leaves $U$.
For three leaves $u,v,w$ and a rooted tree $T$, we use the notation $uv|w$ in $ T$ to denote that $\lca(u,v)$ is a descendent of $\lca(\{u,v,w\})$. 
A triplet $\{u,v,w\}$ of labeled leaves is {\em consistent} if $uv|w$ in $ T_1\Leftrightarrow uv|w$ in $ T_2$. The triplet is called {\em inconsistent} otherwise.
We call a set of leaves $L\subseteq {\cal L}$ a {\em compatible set}, if it does not contain an inconsistent triplet.

For a compatible set $L\subseteq {\cal L}$, define $V[L] := \{ v\in V : $ there exists a pair of leaves $u, u'$ in $L$ so that $v$ is on the path between $u$ and $u'$ in $T_1$ or $T_2\}$. Then, a partitioning $L_1, L_2, \ldots, L_p$ of ${\cal L}$ corresponds to a feasible solution to MAF with objective value $p-1$, if the sets $L_1, L_2, \ldots, L_p$ are compatible, and the sets $V[L_j]$ for $j=1, \ldots, p$ are node disjoint. 
Using this definition, we can write the following Integer Linear Program\footnote{This ILP was obtained in discussions with Neil Olver and Leen Stougie.} for MAF:
Let ${\cal C}$ be the collection of all compatible sets of leaves, and introduce a binary variable $x_L$ for every compatible set $L\in {\cal C}$, where the variable takes value 1 if the optimal solution to MAF has a tree with leaf set $L$. The constraints ensure that each leaf $v\in {\cal L}$ is in some tree in the optimal forest, and each internal node $v\in V\setminus{\cal L}$ is in at most one tree in the optimal forest. 
The objective encodes the fact that we need to delete $\sum_{L\in {\cal C}} x_L-1$ edges from each of $T_1$ and $T_2$ to obtain forests with $\sum_{L\in {\cal C}}x_L$ trees.

\[\begin{array}{lll}
\minimize &\sum_{L\in {\cal C}} x_L - 1,\\
\mbox{s.t.}& \sum_{L:v\in L} x_L = 1 & \forall v \in {\cal L},\\
&\sum_{L:v\in V[L]} x_L \le 1 & \forall v \in V\setminus {\cal L},\\
&x_L \in \{ 0, 1 \} &\forall L\in {\cal C}.
\end{array}\]

\paragraph*{Remark} 
The definition of MAF we use is {\it not} the definition that is now standard in the literature, but any (approximation) algorithm for our version can be used to get the same (approximation) result for the standard formulation: The standard formulation was introduced by Bordewich and Semple~\cite{Bordewich04} in order to ensure that the objective value of MAF is equal to the rooted SPR distance. They note that for this to hold, we need the additional requirement that the two forests must also agree on the tree containing the original root; in other words, the original roots of $T_1$ and $T_2$ should be contained in a tree with the same (compatible) subset of leaves. An easy reduction shows that we can solve this problem using our definition of MAF: given two rooted binary trees for which we want to compute the SPR distance, we can simply add one new label $\rho$, and for each of the two input trees, we add a new root which has an edge to the original root and an edge to a new node with label $\rho$.\footnote{This is essentially the formulation that is common in the literature, except that in order to ensure that {\it only leaves have labels}, we give the label $\rho$ to a new leaf that is an immediate descendent of the new root in both trees, rather than to the new root itself.} A solution to ``our'' MAF problem on this modified input defines a solution to Bordewich and Semple's problem on the original input with the same objective value and vice versa.

\section{Duality Based 3-Approximation Algorithm}\label{sec:3approx}

\subsection{Algorithm}
The algorithm we describe in this section is a variant of the algorithm of Rodrigues et al.~\cite{Rodrigues07} (see also Whidden and Zeh~\cite{Whidden09}).
The algorithm maintains two forests, $T'_1$ and $T'_2$ on the same leaf set ${\cal L}'$, and iteratively deletes edges from these forests. 
At the start, $T'_1$ is set equal to $T_1$, $T'_2$ to $T_2$ and ${\cal L}'$ to ${\cal L}$. 
The leaves in ${\cal L'}$ are called the {\it active} leaves. The algorithm will ensure that the leaves that are not active, will have been resolved in one of the two following ways: (1) they are part of a tree that contains only inactive leaves in both $T'_1$ and $T'_2$; these two trees then have the same leaf set, which is compatible, and they will be part of the final solution; or (2) an inactive leaf is {\em merged} with another leaf which is active, and in the final solution this inactive leaf will be in the same tree as the leaf it was merged with.

A tree is called active if it contains a leaf in ${\cal L'}$, and the tree is called inactive otherwise. 
An invariant of the algorithm is that there is a single active tree in $T_1'$.

We define the parent of a set of active leaves $W$ in a tree of a forest, denoted by $p( W )$, as the lowest node in the tree that is a common ancestor of $W$ and at least one other active node. (That is, $p( W )$ is undefined if there are no other active leaves in the tree that contains the leaves in $W$.) Note that the parent of a node is defined with respect to the current state of the algorithm, and not with respect to the initial input tree. 
If $W = \{ u \}$ is a singleton, we will also use the notation $p( u ) = p( \{ u \} )$. 
For a given tree or forest $T_i'$, for $i\in \{1,2\}$, we use the notation $p_i( W )$ to denote $p( W )$ in $T'_i$.

The operation in the algorithm that deletes edges from forest $T'_i$ is {\em cut off a subset of leaves} $W$ in $T'_i$. The edge that is deleted by this operation is the edge directly below $p_i( W )$ towards $W$ (provided $p_i( W )$ is defined). Note that this means that the algorithm maintains the property that each internal node has a path to at least one leaf in ${\cal L}$. 
This ensures that the number of trees with leaves in ${\cal L}$ in $T'_i$ is equal to the number of edges cut in $T'_i$ plus 1. It also ensures that the only leaves (nodes with outdegree 0) are the nodes in ${\cal L}$.

We will call two leaves $u$ and $v$ {\em a sibling pair} or {\em siblings} in a forest, if they belong to the same tree in the forest, and they are the only two leaves in the subtree rooted at the lowest common ancestor $\lca( u, v )$. Similarly, $u$ and $v$ are an {\it active} sibling pair in a forest, if they belong to the same tree in the forest, and  are the only two {\it active} leaves in the subtree rooted at the lowest common ancestor $\lca( u, v )$ (an equivalent definition is that $p( u ) = p( v )$ in the forest).

If leaves $u$ and $v$ are an active sibling pair in both $T'_1$ and $T'_2$, we {\em merge} one of the leaves (say $u$) with the other ($v$). This means that from that point on $v$ represents the subtree containing both $u$ and $v$, instead of just the leaf $v$ itself. This is accomplished by just making $u$ inactive.
Note that this merge operation can be performed recursively, where one or both of the leaves involved in the operation can already be leaves that represent subtrees. It is not hard to see that the subtree that is represented by an active leaf $v$ is one of the two subtrees rooted at the child of $p(v)$, namely the subtree that contains $v$.

If leaves $u$ and $v$ are not active siblings in $T'_2$ (and they are active siblings in $T_1'$), we can choose to {\em cut off an active subtree} between leaves $u$ and $v$. To describe this operation, let $W_1, W_2, \ldots, W_k$ be the active leaves of the active trees that would be created by deleting the path between $u$ and $v$ (both the nodes and the edges) in $T_2'$. Note that $p_2( W_\ell )$ is on the path between $u$ and $v$ for all $\ell\in\{1,2,\ldots,k\}$, because $u$ and $v$ are not active siblings. Cutting off an active subtree between leaves $u$ and $v$ now means cutting off any such a set $W_\ell$.

The algorithm is given in Algorithm~\ref{fig:3approx}. The boxed expressions refer 
to updates of the dual solution which will be discussed in Section~\ref{sec:3approxanalysis}. These expressions are only necessary for the analysis of the algorithm. 

\begin{theorem}\label{thm:3approx}
Algorithm~\ref{fig:3approx} is a 3-approximation algorithm for the Maximum Agreement Forest problem.
\end{theorem}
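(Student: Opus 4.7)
The plan is to appeal to weak LP duality. First, I drop the integrality constraint in the ILP of Section~\ref{sec:prelim} and dualize: introducing a free multiplier $y_v$ for each leaf equality and a nonnegative multiplier $z_v$ for each internal-node inequality gives
\begin{align*}
\maximize \quad & \sum_{v \in \mathcal{L}} y_v \; - \; \sum_{v \in V \setminus \mathcal{L}} z_v \; - \; 1, \\
\text{s.t.} \quad & \sum_{v \in L} y_v \; - \; \sum_{v \in V[L] \setminus L} z_v \;\le\; 1 \quad \forall L \in \mathcal{C}, \\
& z_v \ge 0 \quad \forall v \in V \setminus \mathcal{L}.
\end{align*}
Since the objective of any feasible $(y,z)$ lower bounds the integer optimum $\text{OPT}$, it suffices to exhibit such a pair whose value is at least $\text{ALG}/3$, where $\text{ALG}$ is the number of cuts the algorithm makes in each forest (which equals the primal cost of its output).

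I would interleave dual updates with the algorithm's execution, as the boxed expressions in Algorithm~\ref{fig:3approx} already suggest. Starting from $y=z=0$, whenever the algorithm cuts off an active subtree $W_\ell$ separating active siblings $u,v$ of $T_1'$, I raise $y$ and $z$ so that the dual objective grows by at least $1/3$ per cut. The key handle is that any $w\in W_\ell$ together with $u,v$ forms an inconsistent triple in $T_1\cup T_2$, so no compatible set $L$ can contain all three of $\{u,v,w\}$; distributing a small charge across such triples is precisely what lets three cuts be paid for by one unit of any dual constraint. The $z_v$'s, raised at appropriate internal ancestors (for instance at $p_1(u)$, $p_1(v)$, or on the $T_2'$-path between $u$ and $v$), serve to absorb the overcounting that would otherwise violate feasibility on compatible sets spanning many previously charged leaves.

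The main obstacle is verifying dual feasibility for \emph{every} compatible set $L\in\mathcal{C}$ at termination, not merely those realized by the final forest. The plan is an inductive invariant: after each boxed update, no constraint $\sum_{v \in L} y_v - \sum_{v\in V[L]\setminus L} z_v \le 1$ has become violated. The single-active-tree invariant in $T_1'$ and the recursive structure of the merge operation bound how much $y$-charge can accumulate on any single $L$, since any compatible $L$ must simultaneously respect the topologies of $T_1$ and $T_2$ and hence cannot avoid the triples already charged without being ``shielded'' by a $z_v$-ancestor inside $V[L]\setminus L$. Once feasibility is confirmed, $\text{ALG} \le 3\,(\sum_v y_v - \sum_v z_v - 1) + O(1) \le 3\cdot\text{OPT}$ follows from weak duality, with any additive constant absorbed by tightening the bookkeeping (for example by crediting the ``$-1$'' in the primal objective against the very first cut).
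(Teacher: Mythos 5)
Your overall strategy --- weak LP duality, with dual updates interleaved with the algorithm's execution and inconsistent triplets used to certify feasibility --- is exactly the paper's, but what you have written is a plan rather than a proof, and the steps you defer are precisely where all the work lies. Concretely: (i) you never pin down the dual updates (the paper's are definite: $y_u\leftarrow 1$ when a leaf $u$ is deactivated, a decrease of $1$ at $p_2(W)$ when a subtree $W$ is cut off between siblings, and a decrease of $1$ at $\lca_1(u,v)$ when both $u$ and $v$ are cut); (ii) feasibility of \emph{every} compatible set $L$ after each update is the heart of the argument, and you only assert that an ``inductive invariant'' will handle it --- the paper's Lemma~\ref{lemma:3dual} does this by a genuine case analysis, e.g.\ showing that any $L$ whose load could increase at line~\ref{alg:cutW} has $p_2(W)\in V[L]$ and is therefore shielded by the decrease there, and that at line~\ref{alg:cut2} only sets containing both $u$ and $v$ are at risk, which are then ruled out via an inconsistent triplet with a leaf of $W$ (Case~1) or with a surviving active leaf (Case~2); (iii) your claim that the dual objective ``grows by at least $1/3$ per cut'' is false step-by-step --- the cut at line~\ref{alg:cutW} increases the dual by $0$ --- and only holds amortized over the pair of operations in lines~\ref{alg:cutW}--\ref{alg:cut2} (at most three cuts, dual gain exactly $1$), an accounting you do not carry out; and (iv) the ``$O(1)$'' slop you propose to ``absorb by tightening the bookkeeping'' is exactly what the paper's reformulated dual (D) eliminates by introducing set variables $z_A$, one per active tree of $T_2'$, so that the initial $z_{{\cal L}}=1$ cancels the $-1$ in the objective and every split of a tree of $T_2'$ contributes $+1$ to $\sum_A z_A$. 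Your dual, with only node variables, has no mechanism tracking the number of trees of $T_2'$, so the identity ``number of cut edges equals number of trees minus one'' that drives the whole bound is not available in the form you need.

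A second, independent gap: the theorem asserts the algorithm is a 3-\emph{approximation}, which also requires showing that the output is a feasible agreement forest at all, i.e.\ that $T_1'$ and $T_2'$ induce the same partition of ${\cal L}$ into compatible, node-disjoint pieces. You never address this; the paper does it in Section~\ref{sec:3correct} by maintaining that the leaves represented by an active leaf form the compatible leaf set of a subtree in both forests and that inactive trees are created in matched pairs.
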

The proof of this theorem is given in the next subsection. It is clear the algorithm can be implemented to run in polynomial time. In Section~\ref{sec:3correct}, we show that the algorithm returns an agreement forest and we show that the number of edges deleted from $T_2$ by the algorithm can be upper bounded by three times the objective value of a feasible solution to the dual of a linear programming (LP) relaxation of MAF.  

\begin{algorithm}[ht]
\dual{$y_v \ot 0$ for all $v \in V$.} 
\;

\While{ there exist at least 2 active leaves}
{
	Find an active sibling pair $u,v$ in the active tree in $T_1'$.\;
	
	\eIf{ $u$ or $v$ is the only active leaf in its tree in $T_2'$ }
	{
		Cut off this node (say $u$) in $T_1'$ as well and make it inactive.  \dual{$y_u \ot 1$.} \;\label{alg:cut1}
	}	
	{
		\eIf{ $u$ and $v$ are active siblings in $T_2'$ }
		{
		Merge $u$ and $v$ (i.e., make $u$ inactive to ``merge'' it with $v$). \;\label{alg:merge}
		}
		{	
			\If{ $u$ and $v$ are in the same tree in $T_2'$, and this tree contains an active leaf $w$ such that $uv|w $ in $T_2$}
			{
				Cut off an active subtree $W$ between $u$ and $v$ in $T'_2$. \dual{Decrease $y_{p_2(W)}$ by $1$.} \;\label{alg:cutW}
			}
			Cut off $u$ and cut off $v$	in $T'_1$ and $T'_2$ and make them inactive. \dual{$y_u \ot 1$, $y_v \ot 1$, decrease $y_{\lca_1(u,v)}$ by 1.}  \;\label{alg:cut2}
		}
	}
}
\;
Make the remaining leaf (say $v$) inactive. \dual{$y_v \ot 1$.} \;\label{alg:cut3}

\caption{A 3-Approximation for Maximum Agreement Forest. 
The boxed text 
refers to updates of the dual solution as discussed in Section~\ref{sec:3approxanalysis}. } \label{fig:3approx}
\end{algorithm}
\setcounter{AlgoLine}{0}
\RestyleAlgo{boxed}
\subsection{Analysis of the 3-Approximation Algorithm}

\subsubsection{Correctness}\label{sec:3correct}
We need to show that the algorithm outputs an agreement forest. The trees of $T_1'$ and $T_2'$ each give a partitioning of ${\cal L}$, and clearly any internal node $v$ belongs to $V[L]$ for at most one set in the partitioning. It remains to show that the two forests give the {\em same} partitioning of ${\cal L}$ and that each set in the partitioning is compatible. 

The algorithm ends with all trees in $T_1'$ and $T_2'$ being inactive, and the algorithm maintains that the set of leaves represented by an active leaf $u$ (i.e., the leaves that were merged with $u$ (recursively), and $u$ itself) form the leaf set of a subtree in both $T'_1$ and $T'_2$. To be precise, it is the subtree rooted at one of the children of $p(u)$, namely the subtree that contains $u$. Furthermore note that this leaf set is compatible. This is easily verified by induction on the number of merges.

When $u$ is the only active leaf in its tree in both forests, then the trees containing $u$ in the two forests are thus guaranteed to have the same, compatible, set of leaves. Now, an inactive tree is created exactly when both $T_1'$ and $T_2'$ have an active tree in which some $u$ is the only active leaf (lines~\ref{alg:cut1}, \ref{alg:cut2} and \ref{alg:cut3}), and thus the two forests indeed induce the same partition of ${\cal L}$ into compatible sets.

\subsubsection{Approximation Ratio} \label{sec:3approxanalysis}
In order to prove the claimed approximation ratio, we will construct a feasible dual solution to the dual of the relaxation of the ILP given in Section~\ref{sec:prelim}. The dual LP is given in Figure~1(a). The dual LP has an optimal solution in which $0\le y_v\le 1$ for all $v\in {\cal L}$. The fact that $\{v\}$ is a compatible set implies that $y_v\le 1$ must hold for every $v\in {\cal L}$. Furthermore, note that changing the equality constraints of the primal LP to $\ge$-inequalities does not change the optimal value, and hence we may assume $y_v\ge 0$ for $v\in {\cal L}$. 

It will be convenient for our analysis to rewrite this dual by introducing additional variables for every (not necessarily compatible) set of labeled leaves. 
We will adopt the convention to use the letter $A$ to denote a set of leaves that is not necessarily compatible, and the letter $L$ to denote a set of leaves that is compatible (i.e., $L\in {\cal C}$). 
The dual LP can then be written as in Figure~1(b). Any solution to this new LP  can be transformed into a solution to the original dual LP by, for each $A$ such that $z_A> 0$, taking some leaf $v\in A$ and setting $y_v=y_v+z_A$ and $z_A=0$. This is feasible because the left-hand side of the first family of inequalities will not increase for any compatible set $L$, and it will decrease for $L$ such that $A \cap L \neq \emptyset$ and $v \not\in L$. Conversely, a solution to the original dual LP is feasible for this new LP by setting $z_A=0$, for every set of labeled leaves $A$.

\begin{figure}
\begin{subfigure}[t]{.396\textwidth}
\parbox{.395\textwidth}{\small
\[\begin{array}{lll}
\max &\sum_v y_v -1,\\
\mbox{s.t. }&\sum_{v\in V[L]} y_v \le 1 &\forall L\in {\cal C},\\
&y_v \le 0&\forall v\in V\setminus {\cal L}.\\
& \\
&
\end{array}\]
}

\caption{Dual LP}
\end{subfigure}
\quad
\begin{subfigure}[t]{.571\textwidth}
{\parbox{.57\textwidth}{\small
\[\begin{array}{lll}
\max &\sum_v y_v+\sum_{A\subseteq {\cal L}}  z_A - 1,\\
\mbox{s.t. }&\sum_{v\in V[L]} y_v +\sum_{A:A\cap L\neq \emptyset}z_A \le 1 & \forall L\in {\cal C},\\
&y_v \le 0& \forall v\in V\setminus {\cal L},\\
& y_v \ge 0& \forall v\in {\cal L},\\
&z_A \ge 0 &\forall A\subseteq{\cal L}.
\end{array}\]
}}
\caption{Reformulated dual LP}
\end{subfigure}
\caption{The dual of the LP relaxation for the ILP given in Section~\ref{sec:prelim}. The reformulated dual LP will be referred to as (D).}\label{fig:dual}
\end{figure}

We will refer to the left-hand side of the first family of constraints, i.e., $\sum_{v\in V[L]} y_v +\sum_{A:A\cap L\neq \emptyset}z_A$, as the {\em load} on set $L$.
\begin{definition}
The dual solution associated with a forest $T_2'$, obtained from $T_2$ by edge deletions, active leaf set ${\cal L}'$, and variables $y=\{y_v\}_{v\in V}$ is defined as $(y,z)$ where $z_A = 1$ exactly when $A$ is the active leaf set of a tree in $T'_2$, and $0$ otherwise. 
\iftoggle{abs}{}{The objective value $\sum_v y_v+\sum_{A\subseteq {\cal L}}  z_A - 1$ of the solution $(y,z)$ associated with forest $T_2'$, active leaf set ${\cal L}'$, and variables $y$ will be denoted by $D(T_2',{\cal L}',y)$.}
\end{definition}
We will sometimes use ``the dual solution'' to refer to the dual solution associated with $T_2',{\cal L}'$ and $y$ when the forest, active leaf set and $y$-values are clear from the context. 

\begin{lemma}\label{lemma:3dual}
After every execution of the while-loop of Algorithm~\ref{fig:3approx}, 
the dual solution associated with $T_2'$, ${\cal L}'$, and $y$ is a feasible solution to (D) and the objective value\iftoggle{abs}{}{ $D(T_2',{\cal L}',y)$} of this solution is at least $\tfrac 13|E(T_2) \setminus E(T'_2)|$.  
\end{lemma}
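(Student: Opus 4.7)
The plan is to prove the lemma by induction on the number of iterations of the while-loop. The base case, before any iteration, has $y_v=0$ for all $v$, $T_2'=T_2$, and ${\cal L}'={\cal L}$; thus $z_A=1$ only for $A={\cal L}$, every nonempty compatible $L$ has load $z_{\cal L}=1$, the dual objective is $0$, and no $T_2$-edges have been deleted, so both claims hold trivially. In the inductive step, I would analyze each case of the while-loop, tracking the change $\Delta\mathrm{obj}$ in the dual objective and the number $\Delta e$ of $T_2$-edges newly deleted, and verify both (i)~feasibility of the updated $(y,z)$ and (ii)~that the cumulative ratio stays $\ge 1/3$.

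Lines~\ref{alg:cut1}, \ref{alg:merge}, and~\ref{alg:cut3} give $\Delta\mathrm{obj}=\Delta e=0$ by simple bookkeeping: the update $y_u:0\to 1$ in line~\ref{alg:cut1} is offset by $z_{\{u\}}:1\to 0$ since $u$ was the only active leaf in its $T_2'$-tree; the merge in line~\ref{alg:merge} simply renames the active set from $A$ to $A\setminus\{u\}$; line~\ref{alg:cut3} is symmetric to~\ref{alg:cut1}. The substantive cases are line~\ref{alg:cutW} and line~\ref{alg:cut2}. A direct count gives, for line~\ref{alg:cutW}, a net $z$-change of $+1$ (from $z_W,z_{A\setminus W}:0\to 1$ and $z_A:1\to 0$) balanced by the decrease of $1$ at $y_{p_2(W)}$, so $\Delta\mathrm{obj}=0$ for $\Delta e=1$; and for line~\ref{alg:cut2}, $\Delta\mathrm{obj}=+1$ (from $y_u,y_v:0\to 1$ and the decrease of $1$ at $y_{\lca_1(u,v)}$, with $z_{A'}:1\to 0$ cancelling $z_{A'\setminus\{u,v\}}:0\to 1$) for $\Delta e=2$. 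Combined with the zero-cost iterations, paired executions of lines~\ref{alg:cutW} and~\ref{alg:cut2} yield ratio $1/3$, a standalone line~\ref{alg:cut2} yields $1/2$, and the cumulative ratio remains $\ge 1/3$.

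The main technical obstacle is verifying dual feasibility in lines~\ref{alg:cutW} and~\ref{alg:cut2}. For line~\ref{alg:cutW}, the only compatible $L$ whose load could rise are those intersecting both $W$ and $A\setminus W$, in which case the net $z$-change on $L$ is $+1$; this is absorbed by the decrease of $y_{p_2(W)}$ once I show $p_2(W)\in V[L]$. This holds because by definition of $p_2$, the $T_2'$-subtree rooted at the $W$-side child of $p_2(W)$ has active leaf set exactly $W$, so any $T_2'$-path from an active leaf in $W$ to an active leaf outside $W$ must pass through $p_2(W)$; since the endpoints lie in a single $T_2'$-tree, their $T_2'$-path coincides with their $T_2$-path, and the claim follows. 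For line~\ref{alg:cut2}, the potentially problematic case is $L\supseteq\{u,v,w'\}$ with $w'\in A'\setminus\{u,v\}$; since $u,v$ are active siblings in $T_1'$, $uv|w'$ holds in $T_1$, so compatibility of $L$ forces $uv|w'$ in $T_2$. The algorithm's choice of $W$ in line~\ref{alg:cutW} must be (implicitly) such that all active $w'$ with $uv|w'$ in $T_2$ end up in $W$, hence not in $A'$; and if line~\ref{alg:cutW} did not fire, no such $w'$ existed in $A$ to begin with. The remaining case ($L\ni u$, $v\notin L$, $L$ meeting another active leaf $w$ in the same $T_2'$-tree) uses the active-sibling property in $T_1'$: any such $w$ must lie outside the $T_1$-subtree rooted at $\lca_1(u,v)$, so the $T_1$-path from $u$ to $w$ passes through $\lca_1(u,v)$, giving $\lca_1(u,v)\in V[L]$ so that its decrease of $1$ absorbs the $+1$ from $y_u$. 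Together these observations ensure every load change on a compatible $L$ is non-positive, preserving feasibility.
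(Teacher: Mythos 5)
Your overall plan---induction on iterations, per-line accounting of the change in the dual objective against the number of edges deleted from $T_2'$, and a load argument for feasibility---is the same as the paper's, and your treatment of lines~\ref{alg:cut1}, \ref{alg:merge} and~\ref{alg:cut3}, of the feasibility of line~\ref{alg:cutW} (via $p_2(W)\in V[L]$ for any $L$ meeting both $W$ and $A\setminus W$), and of the cumulative $1/3$ ratio is correct and matches the paper.

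There is, however, a genuine error in your feasibility argument for line~\ref{alg:cut2} in the key case $L\ni u,v$. You assert that the set $W$ cut off in line~\ref{alg:cutW} must contain \emph{all} active leaves $w'$ with $uv|w'$ in $T_2$, so that no such $w'$ survives in the tree containing $u$ and $v$. This is the opposite of what the operation does: $W$ is the active leaf set of a single subtree hanging off the path between $u$ and $v$, so every $x\in W$ satisfies $ux|v$ or $vx|u$ in $T_2$---precisely \emph{not} $uv|x$---while the leaves $w'$ with $uv|w'$ in $T_2$ (whose existence is the very condition that triggers line~\ref{alg:cutW}) remain in the tree with $u$ and $v$. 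The paper in fact relies on such a surviving $w'$ to establish $A_u\setminus\{u,v\}\neq\emptyset$, which you also need (but do not verify) for your claim that the objective rises by exactly $1$ in line~\ref{alg:cut2}. Consequently, for a compatible set $L\supseteq\{u,v,w'\}$ with $uv|w'$ in both trees, the $z$-sum does \emph{not} drop in line~\ref{alg:cut2}, and the load on $L$ genuinely increases by $1$ there. The correct compensation (the paper's Case~1) is that the load on such an $L$ already \emph{decreased} by $1$ in line~\ref{alg:cutW}: $p_2(W)\in V[L]$ because it lies on the path between $u$ and $v$ in $T_2$, while $L\cap W=\emptyset$ because any $x\in W$ forms an inconsistent triplet with $u,v$ (since $uv|x$ in $T_1$ but not in $T_2$), so the $z$-updates of line~\ref{alg:cutW} leave $L$ untouched and only the $-1$ at $y_{p_2(W)}$ acts; that $-1$ absorbs the later $+1$. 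Your argument as written covers only the paper's Case~2 (line~\ref{alg:cutW} did not fire), and the claim you substitute for Case~1 is false.
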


\begin{proof}
We prove the lemma by induction on the number of iterations. Initially, $z_{\cal L} = 1$, and all other variables are equal to $0$. Clearly, this is a feasible solution with objective value $0$. 

Observe that the dual solution maintained by the algorithm satisfies that $y_u=0$ while $u$ is active. Therefore, if there is a single active leaf $u$ in a tree in $T_2'$, then making this leaf $u$ inactive and setting $y_u=1$ does not affect dual feasibility and the dual objective value, since making $u$ inactive decreases $z_{\{u\}}$ from 1 to 0. Also note that merging two active leaves (and thus making one of the two leaves inactive), replaces the set of active leaves $A$ in an active tree in $T_2'$ by a smaller set $A'\subset A$, with $A'\neq \emptyset$. Hence, the dual solution changes from having $z_A=1$ to having $z_{A'}=1$, which clearly does not affect dual feasibility or the dual objective value.
Hence, we only need to verify that the dual solution remains feasible and its objective increases sufficiently for operations of the algorithm that cut edges from $T_2'$, i.e., lines~\ref{alg:cutW} and~\ref{alg:cut2}. 

In line~\ref{alg:cutW}, one edge is cut in $T_2'$, $y_{p_2(W)}$ decreases by $1$. Let $A$ be the set of active leaves in the tree containing $W$ in $T_2'$ before cutting off $W$. $z_A$ decreases by $1$, $z_{ A \setminus W }$  increases by $1$, $z_W$ increases by $1$.   The only sets $L$ for which the left-hand side potentially increases are sets $L$ so that $W \cap L \neq \emptyset$ and $(A \setminus W) \cap L \neq \emptyset$. However, $p_2( W ) \in V[L]$ for such sets $L$, and since $y_{p_2(W)}$ is decreased by $1$, the load is not increased for any compatible set $L$.
The dual objective is unchanged, but will change in line~\ref{alg:cut2} of the algorithm, as we will show next.

In line~\ref{alg:cut2}, let $A_u$ be the set of active leaves in the tree in $T'_2$ containing $u$ at the start of line~\ref{alg:cut2} in the algorithm, and  $A_v$ be the set of active leaves in the tree in $T'_2$ containing $v$. Note that $A_u \setminus \{ u, v \} \neq \emptyset$: if $v\not\in A_u$, then this holds because otherwise we would execute line~\ref{alg:cut1}, and if $v\in A_u$, then this holds because $u,v$ are not active siblings at the start of line~\ref{alg:cutW}, and if $u,v$ became active siblings after executing line~\ref{alg:cutW}, then the condition for line~\ref{alg:cutW} implies that there exists $w\in A_u$ such that $uv|w$ in $ T_2$.

The fact that  $A_u \setminus \{ u, v \} \neq \emptyset$ (and, by symmetry $A_v \setminus \{ u, v \} \neq \emptyset$) implies that the  total value of $\sum_A z_A+y_u+y_v$ increases by 2. Since we also decrease $y_{\lca_1( u, v )}$ by $1$ the total increase in the objective of the dual solution by line~\ref{alg:cut2} is $1$. Also, in lines~\ref{alg:cutW} and \ref{alg:cut2}, a total of at most three edges are cut in $T_2'$.

It remains to show that executing line~\ref{alg:cut2} does not make the dual solution $(y, z)$ infeasible. 
Note that for each active set $A'$ with $z_{A'} = 1$ before cutting off $u$ and $v$, there is exactly one unique active subset $A \subseteq A'$ with $z_{A}=1$ after cutting off $u$ and $v$. 
Therefore the total value of $\sum_{ A: A\cap L \neq \emptyset } z_A$ does not increase after cutting off $u$ and $v$ for any $L \subseteq {\cal L}$.

For any $L$ that includes one of $u, v$, and at least one other active leaf, it must be the case that $\lca_1( u, v ) \in V[ L ]$, because all active leaves are in one tree in $T_1'$, and $u$ and $v$ were active siblings in $T_1'$ at the start. Hence the only compatible sets $L$ for which the load on $L$ potentially increases by $1$ because of an increase in $\sum_{x\in L}y_x$ are sets $L$ that include both $u$ and $v$. 
We discern two cases.

\noindent
Case 1: An active subtree $W$ was cut off in line~\ref{alg:cutW}.  In this case, the load on $L$ was decreased by $1$ in line~\ref{alg:cutW}, compensating for the increase in line~\ref{alg:cut2}: $V[L]$ contains all nodes on the path between $u$ and $v$ in $T_2$, and hence also $p_2( W )$. It cannot contain a leaf $x \in W$, because $\{ u, v, x \}$ form an inconsistent triplet (because $uv|x$ in $T_1$).

\noindent
Case 2: No active  subtree $W$ was cut off in line~\ref{alg:cutW}.  In this case, the value of $\sum_{ A: A\cap L \neq \emptyset } z_A$ is decreased by at least $1$: If $u$ and $v$ are in the same tree in $T_2'$ before cutting off $u$ and $v$, then this tree contains no leaves $x$ such that $uv|x$ in $ T_2$ since otherwise an active subtree $W$ would have been cut off. Hence, $L$ does not contain any active leaf $x$ in the active tree that remains after cutting off $u$ and $v$ in $T_2'$, since any such leaf $x$ does not have $uv|x$ in $ T_2$ and therefore forms an inconsistent triplet with $u$ and $v$. Since $L$ does contain active leaves in the tree containing $u$ and $v$ in $T_2'$ before cutting off $u$ and $v$ (namely, $u$ and $v$ themselves), the value of $\sum_{ A: A\cap L \neq \emptyset } z_A$ indeed decreases by $1$.

If $u$ and $v$ are not in the same tree in $T_2'$ before cutting off $u$ and $v$, then a similar argument holds. Since $T_2'$ is obtained from $T_2$ by deleting edges, at least one of the two active trees containing $u$ and $v$ contains no leaves $x$ such that $uv|x$ in $ T_2$. Without loss of generality, suppose that this holds for the tree containing $u$. Then, $L$ does not contain any active leaves in the active tree remaining after $u$ is cut off, and hence $\sum_{ A: A\cap L \neq \emptyset } z_A$ decreases by at least $1$.
\end{proof}
By weak duality, we have that the objective value of any feasible solution to (D) provides a lower bound on the objective value of any feasible solution to the LP relaxation of our ILP for MAF, and hence also on the optimal value of the ILP itself. Theorem~\ref{thm:3approx} thus follows from Lemma~\ref{lemma:3dual} and the correctness shown in Section~\ref{sec:3correct}.

\section{Overview of the 2-Approximation Algorithm}\label{sec:key}
In this section, we begin by giving an outline of the key ideas of our 2-approximation algorithm. We then give an overview of the complete algorithm that we call the ``Red-Blue Algorithm''. 

One of the main ideas behind our 2-approximation is the consideration of the following two ``essential'' cases.
The first ``essential'' case is the case where we have an active sibling pair $u,v$ in $T_1'$ that are (i) active siblings in $T_2'$, or (ii) in different trees in $T_2'$, or (iii) the tree in $T_2'$ containing $u,v$ does {\it not} contain an active leaf $w$ such that $uv|w$ in $ T_2$. Then, it is easy to verify, using the arguments in the proof of Lemma~\ref{lemma:3dual}, that Algorithm~\ref{fig:3approx} ``works'': it increases the dual objective value by at least half of the increase in $|E(T_2)\setminus E(T_2')|$. We will say such a sibling pair $u,v$ is a ``success''.

The second ``essential'' case is the case where, in our current forest $T_1'$, there is a subtree containing exactly three active leaves, say $u,v,w$, where $uv|w$ in $ T_1$, and $\{u,v,w\}$ is an inconsistent triplet; assume without loss of generality that $uw|v$ in $ T_2$, and that the first ``essential'' case does not apply; in particular, this implies that $u,v$ are in the same tree in $T_2'$. 
It turns out that such an inconsistent triplet can be ``processed'' in a way that allows us to increase the objective value of the dual solution in such a way that it ``pays for'' half the increase in the number of edges cut from $T_2'$. There are a number of different cases to consider depending on whether all three leaves $u,v,w$ are in the same tree in $T_2'$ (case I) or not (case II), and whether the tree in $T_2'$ containing $w$ has an active leaf $x$ such that $xw|u$ in $ T_2'$ (case (a)) or not (case (b)). 

Figure~\ref{fig:triplet} gives an illustration of $T_1'$ and some possible configurations for $T_2'$. 
Consider for example case (I)(b). Since $\{u,v,w\}$ is an inconsistent triplet, it is not hard to see that any solution to MAF either has $v$ as a singleton component, or either $u$ or $w$ must be a singleton component. Indeed, we can increase the dual objective by 1, by updating the $y$-values as indicated by the circled values in Figure~\ref{fig:triplet} (a) and (b). The bold diagonal lines denote two edges that are cut (deleted from $T_2'$). Similar  arguments can be made for the other cases.
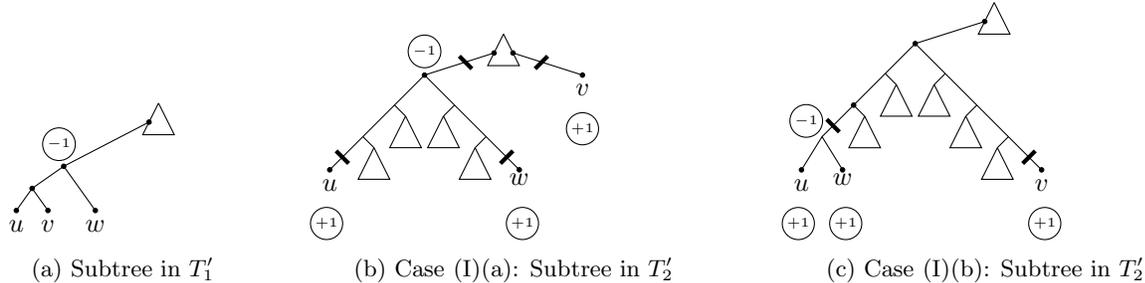
\begin{figure}
\begin{subfigure}[t]{.2\textwidth}
\begin{tikzpicture}[scale=.42]
\drawnode{(-3.5,-3.7)}
\draw (-4,-4.4) -- (-3.5,-3.7) -- (-3,-4.4);
\drawnode{(-4,-4.4)}
\drawnode{(-3,-4.4)}
\node [below] at (-4,-4.4) { $u$ };
\node [below] at (-3,-4.4) { $v$ };

\drawnode{(-1.5,-4.4)}
\node [below] at (-1.5,-4.4) { $w$ };

\draw (-3.5,-3.7) -- (-2.5,-3) -- (-1.5,-4.4);
\drawnode{(-2.5,-3)}

\draw (-2.5,-3) -- ++(2.7,1.4);
\drawnode{(.2,-1.6)}
\smallarbitrarytreeat{(.5,-1)}

\node [circle, draw, inner sep=1pt] at (-2.65,-2.3) {\tiny $-1$};
\end{tikzpicture}
\caption{Subtree in $T_1'$}
\end{subfigure}
\qquad
\begin{subfigure}[t]{.33\textwidth}
\begin{tikzpicture}[scale=.42]
\draw (-3,-3) -- (0,0) -- (3,-3);
\drawnode{(-3,-3)}
\node [below] at (-3,-3) {$u$};
\drawnode{(3,-3)}
\node [below] at (3,-2.83) {$w$};
\drawnode{(0,0)}

\smallarbitrarytreeat{(-.6,-1.3)}
\smallarbitrarytreeat{(-1.6,-2.3)}
\draw (-1.6,-2.3) -- (-1.95,-1.95);
\draw (-.6,-1.3) -- (-.95,-.95);

\smallarbitrarytreeat{(.6,-1.3)}
\smallarbitrarytreeat{(1.6,-2.3)}
\draw (1.6,-2.3) -- (1.95,-1.95);
\draw (.6,-1.3) -- (.95,-.95);

\smallarbitrarytreeat{(2.5,1.3)}
\draw (0,0) -- (2.2,0.7);
\drawnode{(2.2,0.7)}
\drawnode{(2.8,.7)}
\draw (5,0) -- (2.8,.7);
\drawnode{(5,0)}
\node [below] at (5,0) {$v$};

\draw [ultra thick] (-2.4,-2.8) -- (-2.8,-2.4); 
\draw [ultra thick] (2.4,-2.8) -- (2.8,-2.4); 
\draw [ultra thick] (1.5,.2) -- (1.1,.6); 
\draw [ultra thick] (3.5,.2) -- (3.9,.6); 

\node [circle, draw, inner sep=1pt] at (-3.1,-4.7) {\tiny$+1$};
\node [circle, draw, inner sep=1pt] at (3.1,-4.7) {\tiny$+1$};
\node [circle, draw, inner sep=1pt] at (0,.75) {\tiny $-1$};
\node [circle, draw, inner sep=1pt] at (5,-1.7) {\tiny$+1$};
\end{tikzpicture}
\caption{Case (I)(a): Subtree in $T_2'$}
\end{subfigure}
\qquad
\begin{subfigure}[t]{.33\textwidth}
\begin{tikzpicture}[scale=.42]
\draw (-2.95,-2.95) -- (0,0) -- (4,-4);
\drawnode{(-3.6,-4)}
\node [below] at (-3.6,-4) {$u$};
\drawnode{(4,-4)}
\node [below] at (4,-4) {$v$};
\drawnode{(0,0)}
\drawnode{(-2.3,-4)}
\node [below] at (-2.3,-3.83) {$w$};
\draw (-3.6,-4) -- (-2.95,-2.95) -- (-2.3,-4);
\smallarbitrarytreeat{(-.6,-1.3)}
\smallarbitrarytreeat{(-1.6,-2.3)}
\draw (-1.6,-2.3) -- (-1.95,-1.95);
\draw (-.6,-1.3) -- (-.95,-.95);
\smallarbitrarytreeat{(.6,-1.3)}
\smallarbitrarytreeat{(1.6,-2.3)}
\draw (1.6,-2.3) -- (1.95,-1.95);
\draw (.6,-1.3) -- (.95,-.95);
\smallarbitrarytreeat{(2.6,-3.3)}
\draw (2.6,-3.3) -- (2.95,-2.95);
\smallarbitrarytreeat{(2.5,1.3)}
\draw (0,0) -- (2.2,0.7);
\drawnode{(2.2,0.7)}
\draw [ultra thick] (3.4,-3.8) -- (3.8,-3.4); 
\draw [ultra thick] (-2.4,-2.8) -- (-2.8,-2.4); 
\node [circle, draw, inner sep=1pt] at (-3.7,-5.7) {\tiny$+1$};
\node [circle, draw, inner sep=1pt] at (-2.2,-5.7) {\tiny$+1$};
\node [circle, draw, inner sep=1pt] at (4.1,-5.7) {\tiny$+1$};
\node [circle, draw, inner sep=1pt] at (-3.45,-2.45) {\tiny$-1$};
\drawnode{(-1.95,-1.95)}
\end{tikzpicture}
\caption{Case (I)(b): Subtree in $T_2'$}
\end{subfigure}
\caption{Dealing with an inconsistent triplet: Circled values denote the $y$-variables that are set by the algorithm, and bold diagonal lines denote edges that are cut (deleted from $T_2'$). Triangles denote subtrees with active leaves (that may be empty). Note that there is a distinction between edges that are incident to the root of a subtree represented by a triangle, and edges that are incident to some internal node of the subtree. The latter edges are connected to a dot on the triangle.}\label{fig:triplet}
\end{figure}

Unfortunately, neither of the essential cases may be present in the forests $T_1',T_2'$, and therefore the ideas given above may not be applicable. However, they do work if we generalize our notions. First, we generalize the notion of ``active sibling pair in $T_1'$''.
\begin{definition}
A set of active leaves $U$ 
is an {\em active sibling set} in $T_1'$ if the leaves in $U$ are the only active leaves in the subtree of $T_1'$ rooted at $\lca_1( U )$.
$U$ is a {\em compatible active sibling set} in $T_1'$ if $U$ is an active sibling set in $T_1'$ that contains no inconsistent triplets.
\end{definition}
Note that we will only use the term compatible active sibling set for $T_1'$, and never for $T_2'$. We will therefore sometimes omit the reference to $T_1'$, and simply talk about a ``compatible active sibling set''. 

We similarly generalize the notion of a subtree in $T_1'$ containing exactly three active leaves that form an inconsistent triplet. 
\begin{definition}
A set of active leaves $R \cup B$ is a {\em minimal incompatible active sibling set} in $T_1'$ if $R\cup B$ is incompatible, $R$ and $B$ are compatible active sibling sets in $T_1'$, and $p_1( R ) = p_1( B )$.
\end{definition}
The Red-Blue Algorithm now proceeds as follows: it begins by identifying a minimal incompatible active sibling set $R\cup B$ in $T_1'$. Such a set can be found by checking if the active leaf sets of the left and right subtrees of the root are compatible sets. If yes, then either all active leaves are compatible, or we have found a minimal incompatible active set. If not, then the active leaf set of one of the subtrees is incompatible, and we recurse on this subtree until we find a node in $T_1'$ for which the active leaf sets of the left and right subtrees form a minimal incompatible set $R\cup B$. 
Note that we can assume $\lca_2(R)=\lca_2(R\cup B)$.

The algorithm will then ``distill'' $R$ by repeatedly considering sibling pairs $u,v$ in $R$, and executing operations similar to those in Algorithm~\ref{fig:3approx}, except that only one of $u$ and $v$ becomes inactive (and a bit more care has to be taken in certain cases). Procedure~\ref{fig:resolvepair} gives the procedure \resolvepair\ the algorithm uses for handling a sibling pair $u,v$.

\begin{procedure}[ht]
	\eIf{ $u$ and $v$ are in different trees in $T_2'$ }
	{	
		Relabel $u$ and $v$ if necessary so that $\lca_2(u,v)$ is not in the tree containing $u$ in $T_2'$.\; \label{alg:cond1}
		\eIf{ the tree containing $u$ in $T_2'$ has other active leaves not in $U$} 
		{
			{\it FinalCut:} Cut off $u$ in $T_1'$ and $T_2'$ and make it inactive. 
			\dual{$y_u \ot 1$.}\;  \label{alg:finalcut}
		}
		{
			Cut off  $u$ in $T_1'$ and make it inactive.  \dual{$y_u \ot 1$.} \;\label{alg:2cut1}
		}
	}
	{
		\eIf{ $u$ and $v$ are active siblings in $T_2'$ }
		{
			Merge $u$ and $v$ (i.e., make $u$ inactive to ``merge'' it with $v$). \; \label{alg:2merge}
		}
		{
			Relabel $u$ and $v$ if necessary so that $p_2(u)\neq \lca_2(u,v)$. \;
			Cut off an active subtree $W$ between $u$ and $v$ by cutting the edge below $p_2(u)$ that is not on the path from $u$ to $v$. \dual{Decrease $y_{p_2(u)}$ by $1$.} \;\label{alg:2cutW}
			\eIf{$u$ and $v$ are now active siblings in $T_2'$}
			{
				{\it Merge-After-Cut}: Merge $u$ and $v$ (i.e., make $u$ inactive to ``merge'' it with $v$). \dual{$y_u\ot 1$.}	\;	\label{alg:cutmerge} 
			}
			{				
				Cut off $u$ in $T_1'$ and $T_2'$ and make $u$ inactive. \dual{$y_{u} \ot 1$.}\;	\label{alg:cutu}
			}
		}
	}
\caption{1(): \resolvepair$(u,v)$}\label{fig:resolvepair}
\end{procedure}

Arguments similar to those in Section~\ref{sec:3approx} show that \resolvepair\ maintains dual feasibility, {\em provided that we initially reduce $y_{\lca_1(R)}$ by $1$}. It is also not hard to verify that \resolvepair\ increases the dual objective by at least half the increase in the primal objective, and the only thing that is therefore needed to show that the algorithm is a 2-approximation is that we can ``make up for'' the initial decrease of the dual objective caused by decreasing $y_{\lca_1(R)}$.
Let us define the operation of ``distilling'' $R$ as starting by reducing $y_{\lca_1(R)}$ by $1$, and then repeatedly finding a pair of active leaves $u,v$ in $R$ which are siblings in $T_1'$ and executing \resolvepair$(u,v)$ until only two active leaves $\hat u,\hat v$ in $R$ remain. Since all other leaves in $R$ are inactive, $\hat u$ and $\hat v$ form an active sibling pair in $T_1'$.

If pair $\hat u,\hat v$ is a ``success'' 
or if line~\ref{alg:finalcut} or~\ref{alg:cutmerge} was executed at least once during the distilling of $R$, then there exists an operation that makes at least one of $\hat u,\hat v$ inactive and updates the \iftoggle{abs}{dual}{dual solution}, so that the total increase in the primal objective is at most twice the total increase in the dual objective caused by the processing of pairs in $R$. 
Procedure~\ref{fig:2approxp1} gives the complete description of the procedure that, if successful, ``resolves'' set $R$ (and will return ``{\sc Success}''):
\begin{lemma}\label{lemma:successshort}
If \resolveset$(R)$ returns {\sc Success} then at least one leaf in $R$ became inactive, and the increase in the primal objective $|E(T_2)\setminus E(T_2')|$ caused by the procedure is at most twice the increase in the dual objective. 
\end{lemma}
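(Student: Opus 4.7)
I would prove both claims via an amortized potential argument. Let $D$ denote the current value of the dual objective and $P := |E(T_2)\setminus E(T_2')|$ the current primal value, and set $\Phi := 2D - P$; the claimed ratio bound is equivalent to $\Delta\Phi \ge 0$ over the entire call to $\resolveset(R)$. That at least one leaf of $R$ becomes inactive follows immediately from a case split on the Success trigger: if line~\ref{alg:finalcut} (FinalCut) or line~\ref{alg:cutmerge} (Merge-After-Cut) was executed during distilling, that line itself makes a processed leaf of $R$ inactive, and otherwise distilling terminated with $\hat u, \hat v$ forming a success pair, in which case the concluding operation makes one of $\hat u, \hat v \in R$ inactive by construction.

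For the $\Phi$-bookkeeping, the upfront step $y_{\lca_1(R)} \ot y_{\lca_1(R)} - 1$ contributes $\Delta\Phi = -2$. Each invocation of $\resolvepair(u,v)$ during distilling is then analyzed line by line, in the spirit of Lemma~\ref{lemma:3dual} but sharpened to the factor $2$: a direct calculation of $y$- and $z$-updates against the edge deletions in $T_2'$ shows that merges (line~\ref{alg:2merge}), the ``cut-in-$T_1'$-only'' operation (line~\ref{alg:2cut1}), and the cut-$W$-then-cut-$u$ sequence (lines~\ref{alg:2cutW} and~\ref{alg:cutu}) are $\Phi$-neutral, whereas FinalCut and the cut-$W$-then-merge sequence (lines~\ref{alg:2cutW} and~\ref{alg:cutmerge}) each contribute $\Delta\Phi = +1$. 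Compatibility of $R$ is exactly what rules out the adversarial configurations that forced the weaker 3:1 ratio in Algorithm~\ref{fig:3approx}.

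It remains to account for the concluding operation after distilling, which must supply a further $+2$ in potential (or only $+1$ if distilling has already banked $+1$ through FinalCut or Merge-After-Cut). In case~(a), when $\hat u,\hat v$ is a success pair, I would use a case split on the three configurations of the ``first essential case'' from Section~\ref{sec:key} (siblings in $T_2'$, in different trees, or in the same tree with no third active leaf $w$ such that $\hat u\hat v|w$ in $T_2$), and for each configuration exhibit an operation that makes one of $\hat u,\hat v$ inactive together with the dual updates needed to produce the required $\Delta\Phi=+2$. In cases~(b) and (c) distilling has already contributed $+1$, so the concluding operation only needs to provide the remaining $+1$. The main obstacle is this concluding case analysis, where one must exploit both the compatibility of $R$ and the assumption $\lca_2(R)=\lca_2(R\cup B)$ to rule out adversarial arrangements in $T_2'$ that would otherwise block the needed dual gain; dual feasibility is maintained throughout by the same template as in Lemma~\ref{lemma:3dual}.
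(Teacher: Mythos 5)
Your accounting framework matches the paper's: the potential $\Phi = 2D - P$ is exactly the bookkeeping behind Table~\ref{tab:resolvepair}, the upfront $-2$ from decreasing $y_{\lca_1(R)}$ is right, and the per-line contributions you list for \resolvepair\ (neutral for lines~\ref{alg:2cut1}, \ref{alg:2merge}, and \ref{alg:2cutW}+\ref{alg:cutu}; $+1$ for {\it FinalCut} and {\it Merge-After-Cut}) agree with the paper. The ``at least one leaf becomes inactive'' part is also fine. However, there is a genuine gap in your treatment of the concluding operation. You claim that whenever $\hat u,\hat v$ form a success pair you can ``exhibit an operation \ldots\ needed to produce the required $\Delta\Phi=+2$.'' This is false in the sub-case where $\hat u$ and $\hat v$ lie in \emph{different} trees of $T_2'$ and one (or both) of them is the \emph{only} active leaf of its tree: cutting off such a leaf deletes no edge of $T_2'$ and leaves the dual objective unchanged ($\Delta P=\Delta D=0$), so the concluding lines~\ref{alg:lastif}--\ref{alg:lastif2} contribute only $+1$ or even $0$ to $\Phi$, and no local modification of the dual updates can fix this. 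The deficit must instead be recovered from the distilling phase, and this is the crux of the paper's proof.

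The missing ingredient is the paper's Claim~\ref{claim:bicolor} (resting on Observation~\ref{obs:bicolor}): after \preprocess, for each of the two subtrees $U_L,U_R$ of $T_1'$ rooted at the children of $\lca_1(U)$ there is at least one $T_2'$-tree containing both active leaves of that set and active leaves outside it, and the only \resolvepair\ executions that can leave the surviving leaf of $U_L$ (resp.\ $U_R$) alone in a unicolored tree are precisely a {\it FinalCut} or a {\it Merge-After-Cut}. Hence each ``alone in its tree'' occurrence at the end forces a prior starred operation within the corresponding half of $U$, supplying the missing $+1$ (and $+2$ when both leaves are alone). Without some argument of this form your potential never returns to $0$ in that sub-case, so the plan as written does not close. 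A smaller remark: the assumption $\lca_2(R)=\lca_2(R\cup B)$ plays no role in this lemma (which holds for any compatible active sibling set after preprocessing); it is only needed in the analysis of the {\sc Fail} branch.
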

\iftoggle{abs}{Lemma~12 of the full version~\cite{full}}{Lemma~\ref{lemma:success} in Section~\ref{sec:towards}} contains a more precise formulation of this lemma.

\begin{procedure}[ht]
\caption{2(): \resolveset$(R)$}\label{fig:2approxp1}

\dual{Decrease $y_{\lca_1( R )}$ by $1$.} \; \label{alg:initdual}
\While{there exist at least three active leaves in $R$}
{
	Find $u,v$ in $R$ that form an active sibling pair  in $T_1'$.\;\label{alg:act}
	\resolvepair$(u,v)$.\;
}
Let $\hat u,\hat v$ be the remaining active leaves in $R$.\;\label{alg:lastact}
\uIf{$\hat u$ and $\hat v$ are active siblings in $T_2'$\label{alg:iflastsib} }
{	
	Merge $\hat u$ and $\hat v$ (i.e., make $\hat u$ inactive to ``merge'' it with $\hat v$). \dual{$y_{\hat u}\ot 1$.} \; \label{alg:finalmerge}
	Return {\sc Success}.\;
}
\uElseIf{($\hat u$ and $\hat v$ are in different trees in $T_2'$) or (the tree containing $\hat u$ and $\hat v$ does not contain an active leaf $w$ such that $\hat u\hat v|w$ in $ T_2$) \label{alg:nofail}}
{		
	Cut off $\hat u$ in $T_2'$ (if $\hat u$'s tree contains at least one other active leaf) and in $T_1'$ and make $\hat u$ inactive.	\dual{$y_{\hat u}\ot 1$.}\; \label{alg:lastif}
	Cut off $\hat v$ in $T_2'$ (if $\hat v$'s tree contains at least one other active leaf) and in $T_1'$ and make $\hat v$ inactive.	\dual{$y_{\hat v}\ot 1$.}\; \label{alg:lastif2}
	Return {\sc Success}.\;
}
\uElseIf{(At least one {\em FinalCut} or {\em Merge-After-Cut} was executed in some call to \resolvepair\ in the course of the current \resolveset\ procedure) }
{
	\resolvepair$(\hat u,\hat v)$.\;\label{alg:lastif3a}
	Cut off the last active leaf $\hat v$ in $ U$ in $T_2'$ and in $T_1'$ and make $\hat v$ inactive.	\dual{$y_{\hat v}\ot 1$.}\; \label{alg:lastif3}
	Return {\sc Success}.\;	
}
\Else
{
	Return	{\sc Fail}.\;
}	
\label{alg:end}
\end{procedure}
If Lemma~\ref{lemma:successshort} applies, we have made progress (since we have made at least one leaf inactive), and we will have paid for the increase in the primal objective $|E(T_2)\setminus E(T_2')|$ caused by the procedure by twice the increase in the dual objective. 

Otherwise, the last active pair of leaves $\hat u,\hat v$ in $R$ remain active, and we will have a ``deficit'' in the sense that the increase in the dual objective is at most half the increase in the primal objective {\it plus} 1. In this case, we similarly distill $B$ by repeatedly calling \resolvepair$(u,v)$ for pairs $u,v$ in $B$ that are active siblings in $T_1'$ until only a single active leaf in $B$ remains. However, we will show that in order to retain dual feasibility, we do not need to start the distilling of $B$ by decreasing $y_{\lca_1(B)}$ (which would give a total ``deficit'' of 2), but that we can ``move'' the initial decrease of $y_{\lca_1(R)}$ to instead decrease $y_{\lca_1(R\cup B)}$. \iftoggle{abs}{Lemma~13 in the full version~\cite{full}}{Lemma~\ref{lemma:safe2} in Section~\ref{sec:patriot}} shows that this indeed preserves dual feasibility.

Once $R$ and $B$ have both been ``distilled'', we are left with $\hat u,\hat v,\hat w$ that are an inconsistent triplet and form the active leaf set of a subtree in $T_1'$. \iftoggle{abs}{We }{In Section~\ref{sec:patriot}, we then }show how to deal with the triplet $\{\hat u,\hat v,\hat w\}$ (in ways similar to those in Figure~\ref{fig:triplet}) and we prove that in the entire processing of $R\cup B$, we have increased the dual objective by half of the number of edges we cut from $T_2'$.

Algorithm~\ref{fig:2approx} gives an overview of the ``Red-Blue Algorithm''. It first calls a procedure \preprocess, which executes simple operations that do not affect the primal or dual objective: merging two leaves if they are active siblings in both forests, and cutting off and deactivating a leaf in $T_1'$ if it is the only active leaf in its tree in $T_2'$. At the end of an iteration, the Red-Blue algorithm needs to consider different cases for the final triplet. The description of these subroutines can be found \iftoggle{abs}{in \cite{full}}{in Section~\ref{sec:onetree} and Section~\ref{sec:multitrees}}.

\begin{algorithm}
\caption{Red-Blue Algorithm for Maximum Agreement Forest}\label{fig:2approx}

Set $T_1' \ot T_1$, $T_2' \ot T_2$, ${\cal L}' \ot {\cal L}$.  \dual{$y_u \ot 0$ for all $u \in {\cal L}$.} \;

\preprocess. \;

\While{${\cal L}' \neq \emptyset$}
{
	Find a minimal incompatible active sibling set $R \cup B$, with $\lca_2(R)=\lca_2(R\cup B)$.\;\label{alg:beginwhile}

	\If{ \resolveset$(R)$ returns {\sc Fail} }
	{
		\dual{Decrease $y_{\lca_1( R\cup B )}$ by $1$, and increase $y_{\lca_1(R)}$ by $1$.} \;\label{alg:movedual}
	
		\While{ there exist at least two active leaves in $B$ \label{alg:beginresolveB}}
		{
			Find $u,v$ in $B$ that form an active sibling pair in $T_1'$.\;
			\resolvepair$(u,v)$.\;
		}\label{alg:endresolveB}
				
		Let $\hat r_1,\hat r_2\in R$ and $\hat b\in B$ be the remaining active leaves. \;\label{alg:triplet}
		Consider three different cases depending on whether $\hat r_1,\hat r_2$ and $\hat b$ are in one, two or three different trees in $T_2'$ (see \iftoggle{abs}{Section~6.1 and~6.2 in~\cite{full}}{Section~\ref{sec:onetree} and~\ref{sec:multitrees}} for details).\;
	}
	\preprocess.\label{alg:endwhile}\;
}
\end{algorithm}

\begin{theorem}\label{thm:2approx}
The Red-Blue Algorithm is a 2-approximation algorithm for \iftoggle{abs}{the Maximum Agreement Forest problem}{MAF}.
\end{theorem}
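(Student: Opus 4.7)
The plan is to argue, as in the 3-approximation proof, that the Red-Blue Algorithm maintains a feasible dual solution $(y,z)$ for (D) throughout its execution, and to show the stronger bound that after every full iteration of the outer while-loop (lines~\ref{alg:beginwhile}--\ref{alg:endwhile}), the total increase in the primal quantity $|E(T_2)\setminus E(T_2')|$ is at most twice the total increase in the dual objective. Combined with the fact that each iteration strictly decreases $|{\cal L}'|$ (so the algorithm terminates in polynomial time) and the observation that the output is a valid agreement forest, weak LP duality applied to the ILP from Section~\ref{sec:prelim} then delivers the 2-approximation.

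Correctness would be handled essentially as in Section~\ref{sec:3correct}: \preprocess\ and the \resolvepair/\resolveset\ operations preserve the invariant that each active leaf $u$ represents a subtree in both $T_1'$ and $T_2'$ whose leaf set is compatible. Since edges are only cut directly below the parent $p_i(\cdot)$ of an active set, whenever a leaf becomes inactive it lies in a matched pair of isomorphic trees across $T_1'$ and $T_2'$, and thus the final partition of ${\cal L}$ induced by the two forests agrees and consists of compatible sets.

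The quantitative bound splits into two cases according to the return value of \resolveset$(R)$. The \textsc{Success} case is handled directly by Lemma~\ref{lemma:successshort}, which already gives the desired 2-to-1 ratio and deactivates at least one leaf. In the \textsc{Fail} case, the iteration accumulates an initial ``deficit'' of $1$ from line~\ref{alg:initdual} of the \resolveset\ call on $R$. Line~\ref{alg:movedual} then rebalances by moving the decrement from $\lca_1(R)$ up to $\lca_1(R\cup B)$; to keep the dual feasible after this move one invokes the companion lemma stated in the overview (and proved in the full version) that guarantees feasibility is preserved. The distillation of $B$ in lines~\ref{alg:beginresolveB}--\ref{alg:endresolveB} is then analyzed by arguments parallel to Lemma~\ref{lemma:successshort}, but \emph{without} a fresh decrement at $\lca_1(B)$. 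After both distillations, exactly three active leaves $\hat r_1,\hat r_2,\hat b$ form an inconsistent triplet that is an active sibling set in $T_1'$, and the iteration is closed by a case analysis on whether these three leaves lie in one, two, or three trees of $T_2'$, performing edge cuts and $y$-updates in the spirit of Figure~\ref{fig:triplet}.

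The main obstacle will be this triplet-closing step in the \textsc{Fail} case. In each of its subcases one must (i) prescribe explicit edges to cut in $T_2'$, (ii) give explicit $y$-updates (positive at leaves, negative at a carefully chosen internal ancestor in $T_2'$), (iii) verify that the resulting load $\sum_{v\in V[L]}y_v+\sum_{A:A\cap L\neq\emptyset}z_A$ still does not exceed $1$ for every compatible $L\in {\cal C}$, using the inconsistency of $\{\hat r_1,\hat r_2,\hat b\}$ exactly as in the proof of Lemma~\ref{lemma:3dual} to conclude that any compatible $L$ loading a newly raised leaf must also contain the internal node whose $y$-value was lowered, and (iv) show that the dual gain of this triplet step, combined with the $+1$ recovered at $\lca_1(R)$ in line~\ref{alg:movedual}, is at least half the net edge-cut count for the entire outer-loop iteration. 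Closing out this amortized 2-to-1 accounting uniformly across all the subcases (one tree, two trees, or three trees; presence or absence of a blocking active leaf $x$ with $xw|u$ in $T_2$) is the technical heart of the argument.
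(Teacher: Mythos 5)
Your overall architecture matches the paper's: per-iteration amortized accounting of $|E(T_2)\setminus E(T_2')|$ against the dual objective, correctness as in Section~\ref{sec:3correct}, the {\sc Success} case discharged by Lemma~\ref{lemma:successshort}, and the {\sc Fail} case handled by relocating the decrement (line~\ref{alg:movedual}), distilling $B$ without a fresh decrement, and closing with a case analysis on whether the final inconsistent triplet lies in one, two, or three trees of $T_2'$. This is exactly the paper's decomposition (Lemmas~\ref{lemma:success}, \ref{lemma:safe2}, \ref{lemma:onetree}, \ref{lemma:threetrees}, \ref{lemma:twotrees} plus weak duality).

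However, there is a genuine gap in your plan for the triplet-closing step, which you correctly identify as the technical heart but then sketch in a way that would fail. Note first that line~\ref{alg:movedual} is dual-objective-neutral (a $+1$ at $\lca_1(R)$ paired with a $-1$ at $\lca_1(R\cup B)$), so the deficit of $1$ from line~\ref{alg:initdual} persists and must be recovered entirely by the triplet step. In the one-tree case this works as you describe: the procedure gains $\Delta D_2\ge \tfrac12\Delta P_2+1$. But in the two- and three-tree cases, one or more of $\hat r_1,\hat r_2,\hat u$ may be the \emph{only} active leaf of its tree in $T_2'$, so cutting it off contributes $0$ to both the primal and the dual; if in addition no {\it FinalCut} or {\it Merge-After-Cut} occurred earlier in the iteration, the total accounting comes up short by $\tfrac12$, and no choice of local edge cuts and $y$-updates ``in the spirit of Figure~\ref{fig:triplet}'' can close this, because there is simply no dual room left on the relevant compatible sets. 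The paper's fix is structural, not analytic: in exactly these configurations (the conditions in lines~\ref{alg:ifretro1} and~\ref{alg:ifretro2}) the algorithm performs a \emph{retroactive merge} of two previously separated inactive trees, which reduces $|E(T_2)\setminus E(T_2')|$ by $1$; establishing that this merge is well defined requires Claims~\ref{claim:retro} and~\ref{claim:freenodes} (the two trees can be identified from a single earlier \resolvepair\ call, and no other component uses the connecting nodes), together with Claim~\ref{realcut} to guarantee at least one genuine edge cut in the closing step. Your proposal contains no analogue of this mechanism, so as written the amortized bound cannot be completed in those subcases.
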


\iftoggle{abs}{}{
\section{Distilling the Essence of a Compatible Active Sibling Set}\label{sec:towards}
In this section, we will prove (a more precise version of) Lemma~\ref{lemma:successshort}, i.e., that if \resolveset\ returns {\sc Success}, then we have made progress towards a feasible primal solution, and we have increased the dual objective by at least half the increase in the primal objective. 
Because our arguments for \resolvepair$(u,v)$ will not only be used for a pair $u,v\in R$, but also (in Section~\ref{sec:patriot}) for a pair $u,v\in B$, we will let $U$ denote an arbitrary compatible active sibling set $U$.

We begin by noting that our overall description of the algorithm is iterative, and that we thus assume that we have some global variables representing the forests $T'_1$ and $T'_2$, the set of active leaves ${\cal L}'$ and a setting of the dual variables $y$ that are modified by the procedures.
\begin{definition}
We say a tuple $(T_1',T_2',{\cal L}',y)$ is {\em valid} for an instance $(T_1,T_2,{\cal L})$ of the Maximum Agreement Forest Problem, if
\begin{itemize}
\item
$T_1'$ and $T_2'$ can be obtained from $T_1$ and $T_2$ respectively by edge deletions,
\item
all leaves in ${\cal L}'$ are part of one tree in $T_1'$,
\item
$y_u = 0$ for all $u \in {\cal L}'$,
\item
the dual solution associated with $T'_2$, ${\cal L}'$, and $y$ is feasible for (D), 
\item
the inactive trees in $T_1'$ and $T_2'$ can be paired up into pairs of trees with the same compatible leaf set,
\item
for each active leaf $u\in {\cal L}'$, the subtree containing $u$ rooted at the child of $p_i(u)$ for $i=1,2$ contains the same leaf set in both forests, and this leaf set is compatible.
\end{itemize}
\end{definition}
Note that $(T_1,T_2,{\cal L},y_0)$ is valid for $(T_1,T_2,{\cal L})$ where $y_0( u ) = 0$ for all leaves $u$.

In order to prove that a tuple remains valid after calling \resolvepair$(u,v)$, and in particular, that the associated dual solution remains feasible, we need one additional notion.
\begin{definition}
For a compatible active sibling set $U\subseteq {\cal L}$, we call a tuple $(T_1',T_2',{\cal L}',y)$ {\em $U$-safe}, if for any compatible set of leaves $L$ and any tree with active leaf set $A$ in $T_2'$ the following holds: if $(L\cap A)\cap U\neq \emptyset$ and $(L\cap A)\setminus U\neq \emptyset$,
 then the load on $L$ is at most 0 in the dual solution associated with $T_2', {\cal L}'$ and $y$.
\end{definition}

The idea behind this definition is that the execution of \resolvepair\ on an active sibling pair $u,v$ in $T_1'$ with $u,v\in U$ will only increase the load on sets $L$ such that $L$ contains both $u$ and some leaf $w\not\in U$ in the tree in $T_2'$ that contained $u$. Hence, $U$-safeness implies that the load remains at most 1 for compatible sets, and hence the dual solution remains feasible. We will furthermore show that $U$-safeness is preserved when $u$ becomes inactive. 
 
 \begin{observation}\label{obs:makesafe}
If $U$ is an active sibling set in $T_1'$, then we can make a given valid tuple $U$-safe, by decreasing $y_{\lca_1(U)}$ or $y_{p_1(U)}$ by $1$. 
\end{observation}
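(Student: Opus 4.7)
\bigskip

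\noindent\textbf{Proof proposal for Observation~1.}

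The plan is to verify two things after performing the decrease: (i) the tuple remains valid, and (ii) $U$-safeness holds. Validity is almost immediate: the only conditions that can be affected by altering a $y$-value are the two sign constraints on $y$ and the load constraints of (D). I would choose the node to decrease so that it is an internal node: if $U$ is not a singleton, $\lca_1(U)$ is internal; if $U$ is a singleton and $p_1(U)$ is defined, use $p_1(U)$, which is internal by definition since it must be an ancestor of two distinct leaves; if $U$ is a singleton and $p_1(U)$ is undefined, then $U$ is the only set of active leaves in its tree in $T_1'$, so the premise of $U$-safeness can never hold, and no decrease is needed. With this choice, decreasing by $1$ preserves $y_v \le 0$, and since $y$-values enter every load with a plus sign, loads only decrease, so all load constraints continue to hold.

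The interesting part is $U$-safeness. Fix any compatible $L$ and any tree in $T_2'$ with active leaf set $A$ such that $(L\cap A)\cap U\neq\emptyset$ and $(L\cap A)\setminus U\neq\emptyset$, and pick $u\in (L\cap A)\cap U$ and $w\in(L\cap A)\setminus U$. Both $u$ and $w$ are active leaves. The key geometric claim is that both $\lca_1(U)$ and $p_1(U)$ lie on the path from $u$ to $w$ in $T_1'$, hence in $T_1$, so that $\{\lca_1(U),p_1(U)\}\subseteq V[L]$.

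To see the claim, note that since $U$ is an active sibling set in $T_1'$, the only active descendants of $\lca_1(U)$ are the leaves in $U$, so $w$ (active, not in $U$) is not a descendant of $\lca_1(U)$. Consequently $\lca_1(u,w)$ is a strict ancestor of $\lca_1(U)$, and the $u$-to-$w$ path passes through $\lca_1(U)$. Moreover, $\lca_1(u,w)$ is a common ancestor of all of $U$ together with the extra active leaf $w$, so by minimality of $p_1(U)$ we get that $p_1(U)$ is an ancestor of $\lca_1(U)$ and a descendant of (or equal to) $\lca_1(u,w)$; in particular $p_1(U)$ lies on the portion of the $u$-to-$w$ path between $\lca_1(U)$ and $\lca_1(u,w)$.

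With $\{\lca_1(U),p_1(U)\}\subseteq V[L]$, the original load on $L$ was at most $1$ by dual feasibility of the given valid tuple, and decreasing either $y_{\lca_1(U)}$ or $y_{p_1(U)}$ by $1$ pushes this load to at most $0$, which is exactly $U$-safeness. The main obstacle is really just the bookkeeping around the degenerate cases (singleton $U$, or $U$ equal to all active leaves in its $T_1'$-tree), which are handled as described above; once the choice of node to decrement is pinned down, the geometric argument is clean and everything else follows from the definitions.
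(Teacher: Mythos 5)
Your proof is correct and matches the paper's (implicit) justification: the paper states this as an observation without proof, and the one‑line argument it relies on in the proof of Lemma~\ref{lemma:dualfeas} is exactly your key point, namely that for any $L$ and $T_2'$-tree $A$ meeting the $U$-safeness premise, the witnesses $u\in(L\cap A)\cap U$ and $w\in(L\cap A)\setminus U$ force both $\lca_1(U)$ and $p_1(U)$ onto the $u$–$w$ path in $T_1$, so they lie in $V[L]$ and the unit decrease drops the load from at most $1$ to at most $0$. Your handling of the degenerate cases (singleton $U$, undefined $p_1(U)$) and of preserved validity is a careful and correct elaboration of details the paper leaves unstated.
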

In \resolveset$(U)$, we decrease $y_{\lca_1(U)}$ as this will be helpful if the final active sibling pair $\hat u,\hat v$ turns out to give a ``success'', but the flexibility implied by Observation~\ref{obs:makesafe} will prove useful later if \resolveset\  fails; see Section~\ref{sec:patriot}.

The following lemma shows that $U$-safeness will ensure that \resolvepair\ returns a valid tuple, and that this tuple is again $U$-safe.

\begin{lemma}\label{lemma:safe}
Let $(T_1',T_2',{\cal L}',y)$ be a valid tuple, let $U$ be a compatible active sibling set in $T_1'$, and let $u,v\in U$ be an active sibling pair in $T_1'$.
If $(T_1',T_2',{\cal L}',y)$ is $U$-safe, then the tuple $(\tilde T_1',\tilde T_2',\tilde {\cal L}',\tilde y)$  that results from the procedure \resolvepair$(u,v)$ is a valid tuple that is $U$-safe.
\end{lemma}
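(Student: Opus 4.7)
My plan is to verify each of the six validity conditions plus $U$-safeness for $(\tilde T_1', \tilde T_2', \tilde{\cal L}', \tilde y)$ by case analysis on which branch of \resolvepair$(u,v)$ is executed: the Merge branch (line~\ref{alg:2merge}), the cut-$W$ step (line~\ref{alg:2cutW}) followed by Merge-After-Cut (line~\ref{alg:cutmerge}) or cut-$u$-in-both (line~\ref{alg:cutu}), the FinalCut branch (line~\ref{alg:finalcut}), and the cut-$u$-in-$T_1'$-only branch (line~\ref{alg:2cut1}).

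I would first dispose of the structural bullets. The conditions that $\tilde T_i'$ is obtained from $T_i$ by edge deletions, that the active leaves still lie in a single tree of $\tilde T_1'$, and that $\tilde y_x=0$ for all $x\in\tilde{\cal L}'$ follow by direct inspection: each branch cuts at most one edge per forest and sets $y_u\gets 1$ only on the same line that deactivates $u$. The paired-inactive-trees condition and the condition that active leaves' represented child-subtrees agree in $\tilde T_1'$ and $\tilde T_2'$ both follow by induction on the pre-operation validity: whenever $u$ is deactivated, either $u$ is cut off as a singleton in both forests (giving matching inactive trees $\{u\}$) or $u$ is merged into its sibling $v$, in which case the child-of-$p_i(v)$ subtrees containing $v$ in $\tilde T_i'$ inherit matching compatible leaf sets from the pre-operation tuple applied to $u$ and $v$.

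The heart of the proof is dual feasibility, which I verify by tracking the change in the load $\sum_{x\in V[L]}y_x + \sum_{A:A\cap L\neq\emptyset}z_A$ on each compatible set $L$. The Merge branches replace $z_A$ by $z_{A\setminus\{u\}}\subsetneq z_A$, which never raises the load. The cut-$W$ step is handled exactly as in line~\ref{alg:cutW} of Lemma~\ref{lemma:3dual}: any $L$ whose load rises from the newly introduced $z_W$ and $z_{A\setminus W}$ must satisfy $p_2(u)\in V[L]$, so the simultaneous decrease of $y_{p_2(u)}$ by $1$ absorbs the rise. In each of the four branches that set $y_u\gets 1$, I split a compatible $L$ according to whether $u\in L$ and, if so, whether $L$ contains another active leaf $w\neq u$ in the pre-operation tree of $u$ in $T_2'$. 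The first two sub-cases yield load change $0$; only the third gives a $+1$ jump, and for this I invoke the $U$-safeness hypothesis on the pre-operation tuple to conclude the load on $L$ was $\le 0$, so the new load is $\le 1$. Preservation of $U$-safeness is then argued symmetrically: any post-operation straddling pair $(L, \tilde A)$ lifts to a pre-operation straddling pair $(L, A)$ with $A\supseteq \tilde A$ (since at most $u\in U$ is removed from $\tilde A\cap U$, while $\tilde A\setminus U\subseteq A\setminus U$ is preserved), and the load-change bookkeeping above preserves the $\le 0$ bound.

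The main obstacle is invoking $U$-safeness in that third sub-case of the $y_u\gets 1$ branches, since this requires exhibiting a pre-operation tree of $T_2'$ in which $L$ straddles both $U$ and its complement. The FinalCut branch is easiest because $u$'s tree itself already has an active leaf outside $U$ and one only needs to argue that $L$'s compatibility forces a straddling of $L\cap A$. The hardest case is line~\ref{alg:2cut1}, where by hypothesis $u$'s tree has only $U$-active leaves and the straddling tree must come from elsewhere. There I expect to combine (i) the relabeling that places $\lca_2(u,v)$ outside $u$'s tree, which constrains the structure of $A$ and of the tree containing $v$, with (ii) an inconsistent-triplet argument in the style of Case~2 of the proof of Lemma~\ref{lemma:3dual}: if the offending $u,w\in L\cap A\subseteq U$ existed with pre-operation load exceeding~$0$, then compatibility of $L$ together with $U$'s compatible-sibling-set structure in $T_1'$ would force $L$ to contain a non-$U$ active leaf in $v$'s tree, supplying the missing straddling tree and delivering the bound through $U$-safeness.
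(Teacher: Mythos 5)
Your overall architecture --- a branch-by-branch analysis of \resolvepair, invoking $U$-safeness on the pre-operation tree of $u$ whenever a load jumps by $1$, and lifting post-operation straddles back to pre-operation ones --- is essentially the paper's. But two steps in your plan do not go through as described. The most concrete problem is your treatment of line~\ref{alg:2cut1}, which you single out as the hardest case and propose to handle by locating a straddling tree ``elsewhere,'' namely a non-$U$ active leaf of $L$ in $v$'s tree. A compatible set $L$ need not contain any active leaf in $v$'s tree (take $L=\{u,w\}$ with $u,w\in U$), so this cannot supply the tree needed to invoke $U$-safeness; and if the configuration you worry about could actually occur, it would genuinely break feasibility. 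The correct observation is that it cannot occur: by the relabeling in line~\ref{alg:cond1}, $\lca_2(u,v)$ is not in $u$'s tree in $T_2'$; for any other active $u'\in U$ we have $uv|u'$ in $T_1$ (since $u,v$ are active siblings in $T_1'$) and hence $uv|u'$ in $T_2$ by compatibility of $U$, so $\lca_2(u,u')$ lies above $\lca_2(u,v)$ and having $u'$ in $u$'s tree would force $\lca_2(u,v)$ into that tree. Combined with the branch condition (no active leaves outside $U$), $u$ is the \emph{only} active leaf in its tree in $T_2'$, so the update merely trades $z_{\{u\}}$ for $y_u$ and no load increases at all; this is the easiest branch, not the hardest.

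The second gap is in your preservation of $U$-safeness. Lifting a post-operation straddling pair $(L,\tilde A)$ to a pre-operation one gives pre-load at most $0$, but the sets on which you invoked $U$-safeness to prove feasibility are precisely sets whose load \emph{increased} to $1$; if any such set still straddled $U$ in a post-operation tree, $U$-safeness would fail. What is missing is the argument that every compatible $L$ whose load increases forms an inconsistent triplet with every remaining active leaf of $U$ --- e.g.\ in {\it FinalCut}, $\{u,u',w\}$ with $u'\in U$ active and $w\in A\setminus\{u\}$ is inconsistent because $uu'|w$ in $T_1$ but $uw|u'$ in $T_2$ --- and hence retains no active $U$-leaves after the call, so it is exempt from the $\le 0$ requirement. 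Relatedly, in the {\it Merge-After-Cut} and line~\ref{alg:cutu} branches you still owe the witness for the $U$-safeness invocation itself: the non-$U$ leaf of $L$ in the pre-operation tree is obtained by showing that $V[L]$ must contain the old $p_2(u)$, hence $L$ must meet the cut-off set $W$, whose active leaves lie outside $U$; your sketch does not say where this witness comes from.
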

\begin{proof}
The first three properties of a valid tuple are clear from the description of the procedure. The last two properties follow from the fact that only active sibling pairs in both $T'_1$ and $T'_2$ are merged, and trees become inactive when they contain a single active leaf, which is exactly the same as in Section~\ref{sec:3correct}.  It remains to show that the modified dual solution is feasible and that it is $U$-safe.\footnote{In fact, we will prove a stronger property than $U$-safeness, namely, that the load on any compatible set $L$ is at most 0 if $L\cap U$ contains active leaves. In the next section, we will see why we need the weaker definition of $U$-safeness.}

We consider the line numbers in an execution of \resolvepair\ that may affect the load on a set $L$:
\begin{itemize}
\item[\ref{alg:finalcut}.] ({\it FinalCut}) Let $A$ be the active leaves in the tree in $T_2'$ containing $u$ at the start of the procedure. Then $z_{A}$ is decreased to 0, and $y_u$ and $z_{A\setminus\{u\}}$ are increased to 1. This increases the load of a set $L$ only if $L$ contains $u$ and an active leaf $w$ in $A\setminus\{u\}$. Note that $w\not \in U$ because $\lca_2(u,w)$ must be in the tree containing $u$ and $w$, and thus it would be a strict descendent of $\lca_2(u,v)$, contradicting the fact that $u,v$ are an active sibling pair in $T_1'$ and $U$ is compatible. Therefore, by the fact that the input tuple is $U$-safe, the dual solution remains feasible.

It remains to prove that the new dual solution is $U$-safe. The load only increased for sets $L$ that contain $u$ and an active leaf $w$ in $A\setminus\{u\}$. We will show that if $L$ is compatible, then $L$ cannot contain any other active leaf $u'\in U$, and hence, $L$ does not need to have load at most 0 for the tuple to be $U$-safe. Suppose by contradiction that $L$ contains $u$, $w\in A\setminus\{u\}$, and an active leaf $u'\in U$ with $u'\neq u$.  Since $U$ is a compatible active sibling set in $T_1'$, $u'u|w$ in $T_1$. On the other hand, $uw|v$ in $T_2$, because $\lca_2(u,v)$ is not on the path from $u$ to $w$ (by the condition in line~\ref{alg:cond1}), and thus also $uw|u'$ in $T_2$ because $u,v$ is an active sibling pair in $T_1'$ and $U$ is compatible. Hence $\{u,u',w\}$ is an inconsistent triplet and $L$ is thus not compatible.
\item[\ref{alg:2cut1}.] Since $u$ is the only active leaf in its tree in $T_2'$, $z_{\{u\}}$ decreases from $1$ to $0$. Therefore the load on any compatible set $L$ does not increase when $y_u$ is set to $1$. 
\item[\ref{alg:2cutW}.] Let $A$ be the set of active leaves in the tree containing $W$ in $T_2'$ before cutting off $W$. 
$z_A$ decreases by $1$, $z_{ A \setminus W }$  increases by $1$, $z_W$ increases by $1$.   The only sets $L$ for which the load potentially increases by $1$ are sets $L$ so that $W \cap L \neq \emptyset$ and $(A \setminus W) \cap L \neq \emptyset$. However, $p_2( W )=p_2(u) \in V[L]$ for such sets $L$, and since $y_{p_2(u)}$ is decreased by $1$, the load is not increased for any compatible set $L$, so the dual solution remains feasible and $U$-safe.
\item[\ref{alg:cutmerge}.] ({\it Merge-After-Cut}) Let $A$ be the set of active leaves in the tree containing $u$ and $v$. Since $u$ becomes inactive, $z_A$ decreases by $1$, $z_{A\setminus\{u\}}$ increases by $1$. $y_u$ is set to 1. Therefore, if the load on a set $L$ increases, then $u\in L$ and $L\cap (A\setminus\{u\})\neq \emptyset$. 

So let $w$ be an active leaf in $L\cap (A\setminus\{u\})$. 
Note that $V[L]$ must contain the internal node that was $p_2(u)$ in line~\ref{alg:2cutW}. Therefore, executing line~\ref{alg:2cutW} followed by line~\ref{alg:cutmerge} only increases the load for $L$ if it also contains an active leaf $w'\in W$, where $W$ is the set that was cut off in line~\ref{alg:2cutW}. 
Note that $uw'|w$ in $T_2$, and that $u\in U$ and $w'\not\in U$. Since $L$ must be compatible, this implies that $L$ cannot contain any active leaves in $U$ after executing line~\ref{alg:cutmerge}, and, in particular, that $w\not\in U$. Therefore, since $u,w$ are both in $A$ and the tuple was $U$-safe, the load on $L$ becomes at most 1, and thus the dual solution remains feasible. Moreover, if $L$'s load increased then $L$ contains no more active leaves in $U$, so the tuple remains $U$-safe.

\item[\ref{alg:cutu}.] Exactly the same arguments as for line~\ref{alg:cutmerge} apply. \qedhere
\end{itemize}

\end{proof}

\begin{observation} \label{obs:increase}
Suppose the load on a set $L$ increases during a call to \resolvepair$(u,v)$. Let $u$ be the leaf among $u$ and $v$ that is deactivated in this call. Then it holds that $u \in L$, there exists $w \in L \setminus U$ so that $uw|v$ in $T_2$, and $L \cap U$ does not contain any active leaves after the call to \resolvepair.
\end{observation}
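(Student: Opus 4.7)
My plan is to prove the observation by inspection of \resolvepair$(u,v)$, isolating the lines at which the load on a compatible set $L$ can strictly increase and then, for each of those lines, exhibiting a witness $w \in L\setminus U$ with $uw|v$ in $T_2$ and ruling out any other active $U$-leaf in $L$. The proof piggybacks almost entirely on the case analysis already carried out in the proof of Lemma~\ref{lemma:safe}; the observation is really a distillation of what that analysis proves along the way.

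First I would discard the ``inert'' lines of \resolvepair: lines 2 and 9 are relabelings, line 7 is a pure merge when $u$ and $v$ are already active siblings in $T_2'$ (the set-count simply decreases), line 6 (the 2cut1 branch) balances $z_{\{u\}}$ dropping to $0$ against $y_u\!\leftarrow\!1$, and line 10 (2cutW) balances the creation of $z_W$ and $z_{A\setminus W}$ against the $-1$ on $y_{p_2(u)}$, noting that $p_2(u)\in V[L]$ whenever $L$ meets both $W$ and $A\setminus W$. Thus the only lines at which the load on some $L$ can strictly go up are line 3 (FinalCut), line 12 (Merge-After-Cut), and line 14 (cutu); and in each of these three lines, $u$ is the leaf that is deactivated, which gives the first required conclusion $u\in L$ (since otherwise $y_u\!\leftarrow\!1$ cannot add to the load on $L$).

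Next I would identify the witness $w$ in each line. In the FinalCut branch the relabel in line 2 guarantees $\lca_2(u,v)$ is not in $u$'s tree in $T_2'$, so any leaf $w$ in $L\cap(A\setminus\{u\})$ (which must exist for the load to rise, where $A$ is the active set of $u$'s tree) satisfies $uw|v$ in $T_2$. In the Merge-After-Cut and cutu branches, the net load increase from the combined effect of line 10 followed by line 12 or line 14 requires $L$ to meet the cut-off set $W$; any such witness $w\in L\cap W$ satisfies $\lca_2(u,w)=p_2(u)$, which is strictly below $\lca_2(u,v)$ by the relabel preceding line 10, so again $uw|v$ in $T_2$. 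In all three cases, $w\notin U$: if $w\in U$, compatibility of $U$ applied to the active sibling pair $u,v$ would force $uv|w$ in $T_2$, contradicting $uw|v$.

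Finally, to prove that $L\cap U$ contains no active leaves after the call, I would argue by contradiction: suppose some $u'\in U\cap L$ is still active and $u'\neq u$. Because $U$ is a compatible active sibling set in $T_1'$ and $w\notin U$, we have $uu'|w$ in $T_1$; combining $uv|u'$ in $T_2$ (compatibility of $U$ applied to the sibling pair $u,v$) with $uw|v$ in $T_2$ gives $uw|u'$ in $T_2$, so $\{u,u',w\}$ is inconsistent, contradicting $L\in{\cal C}$. Hence no active $U$-leaf other than $u$ is in $L$, and $u$ is deactivated during the call, giving the third conclusion. The main obstacle is really just keeping the bookkeeping straight across the three deactivating branches, especially in verifying the witness satisfies $uw|v$ in $T_2$ in the Merge-After-Cut case where one has to track the effect of line 10 before line 12; the inconsistent-triplet argument itself is uniform across all three cases.
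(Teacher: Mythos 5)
Your proposal is correct and follows essentially the same route as the paper: the paper proves this observation by pointing back to the case analysis in the proof of Lemma~\ref{lemma:safe}, which is exactly the analysis you reconstruct (isolating \emph{FinalCut} and line~\ref{alg:2cutW} followed by \emph{Merge-After-Cut} or line~\ref{alg:cutu} as the only load-increasing executions, extracting the witness $w$ from $A\setminus\{u\}$ or from the cut-off set $W$, and using the inconsistent-triplet argument to rule out remaining active leaves of $U$ in $L$). The only difference is presentational: you spell out in full what the paper leaves as a pointer to the earlier proof.
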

\begin{proof}
This follows directly from the proof of the previous lemma: the only lines where the load on any set $L$ potentially increases are lines \ref{alg:finalcut} and \ref{alg:2cutW} followed by line~\ref{alg:cutmerge} or \ref{alg:cutu}. The observation follows directly from what is stated in the proof.
\end{proof}

We now consider the change in the objective value of the dual solution, i.e., $D(\tilde T_2',\tilde{\cal L}', \tilde y)-D(T_2',{\cal L}',y)$ and the objective value of the primal solution, i.e., $|E(T_2')\setminus E(\tilde T_2')|$. In Table~\ref{tab:resolvepair} we use $\Delta D$ to denote the change in $D(T_2',{\cal L}',y)$ caused by \resolvepair, and we denote by $\Delta P$ the change in $|E(T_2)\setminus E(T_2')|$.

\begin{table}[h!]
\begin{center} 
\begin{tabular}{|ll||c|c|}
\hline
&line number & $\Delta P$ & $\Delta D$\\
\hline
*&\ref{alg:finalcut} ({\it FinalCut}) &1&$1$\\
&\ref{alg:2cut1}&0&$0$\\
&\ref{alg:2merge}&0&$0$\\
*&\ref{alg:2cutW} followed by \ref{alg:cutmerge} ({\it Merge-After-Cut}) &1&$1$\\
&\ref{alg:2cutW} followed by \ref{alg:cutu}&2&$1$\\
\hline
\end{tabular}
\end{center}
\caption{$\Delta P$ denotes the change in $|E(T_2)\setminus E(T_2')|$ and $\Delta D$ denotes the change in $D(T_2',{\cal L}',y)$ for each possibility in Procedure~\ref{fig:resolvepair}: \resolvepair. }\label{tab:resolvepair}
\end{table}

We see that each possibility increases the number of edges cut from $T_2'$ by at most twice the increase in the dual objective value. 
Furthermore, the two possibilities marked with a star ({\it FinalCut} and {\it Merge-After-Cut}) have $\Delta P\le 2\Delta D-1$.

Now, let $U=R$ be a compatible active sibling set, and consider \resolveset$(R)$ as given in Procedure~\ref{fig:2approxp1}. It starts by decreasing $\lca_1(U)$ by 1 to make the tuple $U$-safe, and it then repeatedly calls \resolvepair, until only two active leaves in $U$ remain.   It continues to process these last two leaves only if it can guarantee that the total increase in the dual objective is at least $\tfrac 12$ times the increase in the number of edges cut from $T_2'$; in other words, if it can ``make up'' for the dual deficit that was created to make the initial tuple $U$-safe. In this case, the procedure outputs {\sc Success}, and otherwise it outputs {\sc Fail}. In the latter case, the procedure terminates with two leaves in $U$ still active.

We begin by showing that the tuple resulting from Procedure~\ref{fig:2approxp1} is valid.
\begin{lemma}\label{lemma:dualfeas}
Procedure \resolveset$(U)$ executed on a compatible active sibling set $U$ in $T_1'$ and a
valid tuple $(T_1',T_2',{\cal L}',y)$ outputs a valid tuple $(\tilde T_1',\tilde T_2',\tilde {\cal L}',\tilde y)$.  
\end{lemma}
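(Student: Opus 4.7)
The plan is to verify each of the six conditions in the definition of a valid tuple for the output $(\tilde T_1',\tilde T_2',\tilde{\cal L}',\tilde y)$. All of the conditions except dual feasibility---namely, that $\tilde T_1',\tilde T_2'$ arise from $T_1,T_2$ by edge deletions, that all remaining active leaves lie in a single tree of $\tilde T_1'$, that $\tilde y_u=0$ for active $u$, that the inactive trees in the two forests pair up with matching compatible leaf sets, and that for each still-active leaf $u$ the subtrees at the child of $p_i(u)$ have the same compatible leaf set in both forests---follow by induction on the operations of \resolvepair\ and of the post-loop block, by precisely the same reasoning as in Section~\ref{sec:3correct}: every operation either merges two leaves that are active siblings in both forests or cuts the edge directly above a node that is subsequently deactivated. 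So the substantive work is to establish dual feasibility.

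For dual feasibility I would proceed inductively. The initial decrease of $y_{\lca_1(U)}$ in line~\ref{alg:initdual} only lowers a $y$-value, so feasibility is preserved, and by Observation~\ref{obs:makesafe} the tuple becomes $U$-safe. Each iteration of the while loop calls \resolvepair$(u,v)$ on an active sibling pair $u,v \in U$, and by Lemma~\ref{lemma:safe} the resulting tuple remains valid and $U$-safe. I would also carry along the strengthened invariant pointed out in the footnote of Lemma~\ref{lemma:safe}: on any compatible $L$ whose intersection with $U$ still contains an active leaf, the load is at most $0$. By inspection of the proof of Lemma~\ref{lemma:safe}, every step of \resolvepair\ that can raise the load on $L$ simultaneously removes the last active $U$-leaf from $L$, so $L$ falls outside the scope of the invariant after the step and the invariant is preserved for the remaining $L$. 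Thus at the exit of the while loop the tuple is valid and $U$-safe with $\hat u,\hat v$ the only remaining active $U$-leaves; if the procedure returns \textsc{Fail} then no further change is made and this valid tuple is the output.

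In the three \textsc{Success} branches I would finish by case analysis. In line~\ref{alg:iflastsib}, merging and $y_{\hat u}\leftarrow 1$ only raises the load of compatible $L$ containing $\hat u$ together with at least one further active leaf in the active leaf set $A$ of its $T_2'$-tree; if that further leaf lies outside $U$, $U$-safeness gives pre-merge load $\le 0$, while if $L\cap A\subseteq U$ (so $L\cap A\subseteq\{\hat u,\hat v\}$, as these are the only active $U$-leaves) the strengthened invariant gives pre-merge load $\le 0$. For line~\ref{alg:nofail} the key is a compatibility argument: since $\hat u,\hat v$ are active siblings in $T_1'$ we have $\hat u\hat v|w$ in $T_1$ for every other active leaf $w$, while the branch condition forces $\hat u w|\hat v$ or $\hat v w|\hat u$ in $T_2$ for every active $w$ still sharing a tree in $T_2'$ with $\hat u$ or $\hat v$, so $\{\hat u,\hat v,w\}$ is inconsistent and no compatible $L$ contains all three; this kills the would-be load-doubling configurations and leaves only sets already controlled by $U$-safeness. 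In the third branch, \resolvepair$(\hat u,\hat v)$ preserves $U$-safeness by Lemma~\ref{lemma:safe} and deactivates one of the two, after which the final cut of the remaining leaf is covered by $U$-safeness alone, since no active $U$-leaf remains.

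The main obstacle I expect is the strengthened-invariant bookkeeping through the while loop: one must cross-check every line of the proof of Lemma~\ref{lemma:safe} and confirm that any step which can raise the load on a compatible $L$ truly strips $L$ of its last active $U$-leaf (the pattern already made explicit there for \emph{FinalCut} and \emph{Merge-After-Cut}), because otherwise the final merge in line~\ref{alg:iflastsib} on the configuration $L\cap A=\{\hat u,\hat v\}$ could push the load to $2$ and break feasibility.
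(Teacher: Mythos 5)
Your proposal follows the paper's proof of this lemma almost step for step: the non-dual properties of a valid tuple via the arguments of Section~\ref{sec:3correct}, the initial decrement of $y_{\lca_1(U)}$ to obtain $U$-safeness, Lemma~\ref{lemma:safe} carried through the while-loop, and a case analysis of the three {\sc Success} branches. Your ``strengthened invariant'' is the paper's own footnote claim, and in the only place you invoke it --- sets whose active leaves are exactly $\hat u$ and $\hat v$ --- it is justified exactly as the paper does it, namely because $\lca_1(\hat u,\hat v)=\lca_1(U)\in V[L]$ (so the initial decrement hit $L$) together with Observation~\ref{obs:increase} (so no call to \resolvepair\ raised $L$'s load while $\hat u,\hat v$ remained active).

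The one step that does not survive scrutiny as written is in the line~\ref{alg:nofail} branch. You assert that the branch condition forces $\{\hat u,\hat v,w\}$ to be inconsistent for \emph{every} active $w$ sharing a $T_2'$-tree with $\hat u$ or $\hat v$. This is true when $\hat u,\hat v$ lie in the same tree, but false when they lie in different trees: the tree containing $\lca_2(\hat u,\hat v)$ may contain an active $w$ with $\hat u\hat v|w$ in $T_2$, which together with $\hat u\hat v|w$ in $T_1$ gives a \emph{consistent} triplet, so a compatible $L\supseteq\{\hat u,\hat v,w\}$ can exist and does receive a $+1$ from one of lines~\ref{alg:lastif}--\ref{alg:lastif2}. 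The statement the paper actually proves, and which suffices, is weaker: at most one of the two trees can contain leaves $q$ with $\hat u\hat v|q$ in $T_2$ (both trees containing such a $q$ would both have to contain $\lca_2(\hat u,\hat v)$). Hence any compatible set threatened with $+1$ from \emph{both} cuts must contain an inconsistent triplet and is therefore not compatible, while a set receiving a single $+1$ has pre-load at most $0$ by $U$-safeness (its witnessing active leaf lies outside $U$) or by the $\{\hat u,\hat v\}$-only case you already handled. With the claim weakened in this way, your argument goes through and coincides with the paper's.
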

\begin{proof}
The first three properties of a valid tuple are again clear from the description of the procedure. The last two properties follow from the arguments in Section~\ref{sec:3correct}.  It remains to show that the modified dual solution is feasible.

After executing line~\ref{alg:initdual}, the load on any set $L$ that contains $\lca_1(U)$ is decreased by $1$, so the tuple is $U$-safe. By Lemma~\ref{lemma:safe}, the tuple is still valid and $U$-safe after completion of the while-loop.

At this moment in the procedure, $\hat u$ and $\hat v$ are the only active leaves remaining in $U$. 
It is easily verified that the remainder of the procedure increases the load only for sets containing $\hat u$ or $\hat v$ and at least one other active leaf. 
By $U$-safeness, we have that a compatible set $L$ that contains $\hat u$ or $\hat v$ and least one active leaf $w\not \in U$ will have load at most 0. For a compatible set $L$ for which the active leaves are exactly $\hat u$ and $\hat v$, the load will be at most 0 as well: 
Note that $\lca_1(\hat u,\hat v)=\lca_1(U)$ (because at every execution of the while-loop an active sibling pair from $U$ is selected) and therefore, the load on a set $L$ containing $\hat u$ and $\hat v$ is at most 0 after line~\ref{alg:initdual}. Because $\hat u$ and $\hat v$ are still active, it follows from Observation~\ref{obs:increase} that the load on any set containing $\hat u$ or $\hat v$ has not increased by calling \resolvepair. 

Thus, to verify that the solution associated with $T_2', {\cal L}'$ and $y$ will be a dual feasible solution at the end of the procedure, it remains to verify that lines~\ref{alg:lastact}--\ref{alg:end} increase the load by at most 1 for any compatible set $L$.

If $\hat u,\hat v$ are active siblings in $T_2'$, then line~\ref{alg:finalmerge} is the last line executed by the algorithm that changes ${\cal L}'$ and $y$ and clearly, the load on any set increases by at most 1. If lines~\ref{alg:lastif3a}--\ref{alg:lastif3} are executed, then by Lemma~\ref{lemma:safe}, the dual remains feasible and the load is at most 0 on sets $L$ containing $\hat v$ after executing line~\ref{alg:lastif3a}, and hence the dual will remain feasible after executing line~\ref{alg:lastif3} as well.
Finally, suppose lines~\ref{alg:lastif} and~\ref{alg:lastif2} are executed, and
assume, by means of contradiction, that there is a compatible set $L$ for which the load increases by more than $1$.
Let $A_{\hat u}$ be the active leaves in the tree containing $\hat u$ before line~\ref{alg:lastif} and let $A_{\hat v}$ be the active leaves in the tree containing $\hat v$ (where it may be the case that $A_{\hat u}=A_{\hat v}$). $L$ must contain $\hat v$ and at least one active leaf $w\in A_{\hat v}\setminus \{\hat v\}$ (so that the load increases by $1$ in line~\ref{alg:lastif}), and $\hat u$ and at least one active leaf $w'\in A_{\hat u}\setminus \{\hat u,\hat v\}$ (so that the load increases by $1$ in line~\ref{alg:lastif2}). 
Now, if $A_{\hat u}=A_{\hat v}$, then the condition for line~\ref{alg:lastif} implies that it cannot be the case that $\hat u\hat v|w' $ in $T_2$, but this means $\{\hat u,\hat v,w'\}$ is an inconsistent triplet, contradicting the fact that $L$ is compatible. If $A_{\hat u}\neq A_{\hat v}$, then at most one of $A_{\hat u}, A_{\hat v}$ can contain leaves $q$ such that $\hat u\hat v|q$ in $T_2$, and hence, either $\{\hat u,\hat v,w\}$ or $\{\hat u,\hat v,w'\}$ is an inconsistent triplet, and again, the fact that $L$ is compatible is contradicted. 
\end{proof}

In order to show that, if the procedure outputs {\sc Success}, the total increase in the dual objective is at least $\tfrac12$ times the number of edges cut from $T_2'$, we need to eliminate some ``trivial'' cases first, see Procedure~\ref{fig:preprocess}.
\begin{procedure}
\caption{3(): \preprocess}\label{fig:preprocess}
\While{ (there exist active leaves $u,v$ that form an active sibling pair in both $T_1'$ and $T_2'$) {\em or} (there exists an active leaf $u$ that is the only active leaf in its tree in $T_2'$)}
	{
	\eIf{ there exist active leaves $u,v$ that form an active sibling pair in both $T_1'$ and $T_2'$}
	{		
		Merge $u$ and $v$ (i.e., make $u$ inactive to ``merge'' it with $v$).\;\label{alg:premerge}
	}
	{
		Let $u$ be the only active leaf in its tree in $T_2'$.\;
		Cut off $u$ in $T_1'$ (unless $u$ is the last active leaf), and make $u$ inactive. \dual{$y_u\ot 1$.}\;\label{alg:precut}
	}
}	
\end{procedure}

Note that the processing in Procedure~\ref{fig:preprocess}
cuts no edges from $T_2'$, and that the merge operation in line~\ref{alg:premerge} 
only merges active sibling pairs in both $T'_1$ and $T'_2$, which is exactly the same as in Section~\ref{sec:3correct}.  
Dual feasibility and the dual objective value are not affected by line~\ref{alg:precut}, since $z_{\{u\}}$ is decreased by $1$ when $y_u$ is increased by $1$.

\begin{lemma}\label{lemma:success}
Let $(T_1',T_2',{\cal L}',y)$ be a valid tuple, that has been preprocessed by procedure \preprocess, and let $U$ be an active sibling set in $T_1'$.
If \resolveset$(U)$ returns {\sc Success}, then it holds that 
\[D(\tilde T_2',\tilde{\cal L}',\tilde y)-D(T_2',{\cal L}',y) \ge \tfrac 12 \big(|E(T_2')\setminus E(\tilde T_2')|\big),\]
for the tuple $(\tilde T_1',\tilde T_2',\tilde {\cal L}',\tilde y)$ that is output by \resolveset$(U)$.  
\end{lemma}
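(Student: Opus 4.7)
The plan is to track the ``surplus'' $\Phi := 2\Delta D - \Delta P$ throughout a call to \resolveset$(U)$, where $\Delta D$ is the cumulative change in the dual objective and $\Delta P$ is the cumulative change in $|E(T_2)\setminus E(T_2')|$. The desired conclusion is precisely $\Phi \ge 0$ at the end of the procedure. I would decompose the execution into three parts: (i) the initial line~\ref{alg:initdual}, which cuts no edge but decreases the dual by $1$ and therefore contributes $\Phi = -2$; (ii) the sequence of \resolvepair\ calls inside the while loop, each of which contributes $\Phi \ge 0$ by Table~\ref{tab:resolvepair}, with the two ``starred'' operations \emph{FinalCut} and \emph{Merge-After-Cut} contributing exactly $+1$; and (iii) the post-loop operations, whose contribution depends on which success branch fires.

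In Case~(a), $\hat u,\hat v$ are active siblings in $T_2'$ and line~\ref{alg:finalmerge} sets $y_{\hat u}\leftarrow 1$ while replacing $z_A$ by $z_{A\setminus\{\hat u\}}$; this contributes exactly $+2$ to $\Phi$, cancelling part~(i). Together with the nonnegative contributions from the while loop this yields $\Phi\ge 0$. In Case~(c), the guard of the branch guarantees a while-loop surplus of at least $+1$. Moreover, since cases (a) and (b) both failed, $\hat u,\hat v$ lie in a common $T_2'$-tree and that tree contains a witness $w$ with $\hat u\hat v|w$ in $T_2$, so \resolvepair$(\hat u,\hat v)$ must execute line~\ref{alg:2cutW} followed by either \emph{Merge-After-Cut} (contributing $+1$) or line~\ref{alg:cutu} (contributing $0$). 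In the latter situation, the very reason \emph{Merge-After-Cut} was not triggered is that $\hat u,\hat v$ were not siblings after the split, which forces $\hat v$'s tree to still contain another active leaf, so the final cut in line~\ref{alg:lastif3} contributes the missing $+1$. In every sub-branch of Case~(c), the accumulated surplus is at least $-2 + 1 + (1{+}0 \text{ or } 0{+}1) = 0$.

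Case~(b) is where the real work lies. The two cuts in lines~\ref{alg:lastif} and~\ref{alg:lastif2} each contribute $+1$ to $\Phi$ when the corresponding leaf's $T_2'$-tree still has another active leaf, and $0$ when the leaf is already isolated. When $\hat u,\hat v$ lie in the same $T_2'$-tree, that tree must contain $\ge 3$ active leaves (since they are not $T_2'$-siblings, Case~(a) being ruled out), so both cuts contribute $+1$ and $\Phi\ge 0$ follows immediately. The delicate situation is when $\hat u,\hat v$ are in distinct $T_2'$-trees and at least one of them is a singleton: here the final step contributes only $0$ or $+1$ and we must draw the remaining surplus from the while loop.

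The main obstacle is therefore the bookkeeping argument needed to handle this last subcase. The plan is to exploit the fact that \preprocess\ was run immediately before \resolveset, so no active leaf was initially alone in its $T_2'$-tree; in particular, $\hat u$ and $\hat v$ each started in a $T_2'$-tree with at least one other active leaf. Any later operation that reduces the number of active leaves in $\hat u$'s (or $\hat v$'s) $T_2'$-tree comes from a \resolvepair\ call, and I would classify these operations by the lines of Procedure~\ref{fig:resolvepair} invoked: deactivations via \emph{FinalCut} or \emph{Merge-After-Cut} come with $+1$ surplus each, while the remaining ways an active leaf can vanish from the tree (lines~\ref{alg:2cut1}, \ref{alg:2merge}, or \ref{alg:2cutW} combined with \ref{alg:cutu}) either require a prior starred event to have produced the singleton they consume, or leave $\hat u$ (resp.\ $\hat v$) in a tree that still has another active leaf. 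A careful accounting along these lines shows that ending up with $\hat u$ or $\hat v$ alone in its $T_2'$-tree forces enough starred operations during the while loop to contribute the $+1$ or $+2$ of surplus needed to compensate for the singleton cuts in lines~\ref{alg:lastif} and~\ref{alg:lastif2}, and hence $\Phi\ge 0$ in Case~(b) as well.
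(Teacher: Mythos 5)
Your overall accounting---tracking $\Phi = 2\Delta D - \Delta P$, charging $-2$ to line~\ref{alg:initdual}, $+1$ to each starred operation via Table~\ref{tab:resolvepair}, and analyzing the three success branches---is exactly the scheme of the paper's proof, and your treatment of Case~(a), Case~(c), and the same-tree subcase of Case~(b) matches it. The problem lies in the subcase you yourself flag as ``where the real work lies,'' and there your sketch has a genuine gap. The per-operation invariant you propose---that every non-starred \resolvepair\ operation either consumes a singleton produced by an earlier starred event or leaves $\hat u$'s $T_2'$-tree with another active leaf---is false: a plain merge (line~\ref{alg:2merge}) applied to a $T_2'$-tree whose only two active leaves are $u$ and $\hat u$, both in $U$ (a configuration that survives \preprocess\ whenever $u$ and $\hat u$ are not active siblings in $T_1'$), deactivates $u$ and leaves $\hat u$ alone in its tree, with $\Phi$-contribution $0$ and no prior starred event required. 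So $\hat u$ can become isolated without any starred operation occurring \emph{in the history of its own tree}, and an argument that only audits the operations touching $\hat u$'s tree cannot close the case.

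The paper's argument (Claim~\ref{claim:bicolor} together with Observation~\ref{obs:bicolor}) is structured around a different object precisely to avoid this: it splits $U$ into the active leaf sets $U_L$ and $U_R$ of the two subtrees below $\lca_1(U)$ (so that every \resolvepair\ call stays within one side and $\hat u\in U_L$, $\hat v\in U_R$), observes that after \preprocess\ there must exist a $U_L$-\emph{bicolored} tree in $T_2'$ (one containing active leaves both in and out of $U_L$), and shows that the only \resolvepair\ operations capable of rendering all $U_L$-leaves of that tree inactive or confined to unicolored trees are \emph{FinalCut} and \emph{Merge-After-Cut}. (This step also uses that the set $W$ cut in line~\ref{alg:2cutW} contains no active leaves of $U$, a consequence of the compatibility of $U$ that your classification of the line~\ref{alg:2cutW}/\ref{alg:cutu} case leaves implicit.) The starred operation that pays for $\hat u$'s isolation may therefore occur in a tree that never contained $\hat u$ at all. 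Moreover, the $U_L$/$U_R$ split is what guarantees \emph{two distinct} starred operations when both $\hat u$ and $\hat v$ end up as singletons in different trees---one forced by each side---whereas your sketch offers no mechanism to ensure the two required units of surplus are not charged to the same operation.
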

\begin{proof}
From Table~\ref{tab:resolvepair}, we see that the execution of each line increases the number of edges cut from $T_2'$ $(\Delta P)$ by at most twice the increase in the dual objective value $(2\Delta D)$. If the line has a star, then $\Delta P=2\Delta D -1$. We also have that the first line of \resolveset, line~\ref{alg:initdual}, has $\Delta P=2\Delta D+2$, and we thus need to show that over the remainder of the procedure we ``make up for'' this initial decrease in the dual objective value by either executing two lines that have $\Delta P=2\Delta D-1$ or by executing a line that has $\Delta P=2\Delta D-2$.

If the algorithm returns {\sc Success}, then it has either executed (a) line~\ref{alg:finalmerge}, or (b) lines~\ref{alg:lastif} and~\ref{alg:lastif2}, or (c) lines~\ref{alg:lastif3a} and~\ref{alg:lastif3}. In case (a), we are done, since line~\ref{alg:finalmerge} had $\Delta P=0$ and $\Delta D=1$, so $\Delta P=2\Delta D-2$.
In case (c), we are also done: note that when executing line~\ref{alg:lastif3}, the last active leaf $\hat v$ is in a tree in $T_2'$ with some leaf $w$ such that $\hat u\hat v|w$ in $T_2$ (since otherwise, we would execute lines~\ref{alg:lastif} and~\ref{alg:lastif2}). Hence, cutting off $\hat v$ in $T_2'$ gives $\Delta P=1$ and $\Delta D=1$ (and making $\hat v$ inactive and setting $y_{\hat v}\ot 1$ does not effect the dual objective). Since line~\ref{alg:lastif3} is only executed if at least one starred line from Table~\ref{tab:resolvepair} was executed, we thus executed at least two lines that have $\Delta P=2\Delta D-1$ in total. 

The only remaining case is the case when the algorithm terminates by executing lines~\ref{alg:lastif} and~\ref{alg:lastif2} on pair $\hat u,\hat v$. 
Note that if $\hat u$ and $\hat v$ are in the same tree in $T_2'$ when executing lines~\ref{alg:lastif} and~\ref{alg:lastif2}, then this tree contains at least one other active leaf (since $\hat u$ and $\hat v$ are not siblings), and hence the last two lines together have $\Delta P=2$ and $\Delta D=2$, so $\Delta P=2\Delta D-2$.
The subtle issue if $\hat u$ and $\hat v$ are not in the same tree in $T_2'$ is that if $\hat u$ (or $\hat v$) is the only active leaf in its tree in $T_2'$, then line~\ref{alg:lastif} (respectively line~\ref{alg:lastif2}) has $\Delta P=\Delta D=0$, and hence this does not help to ``make up for'' the initial decrease in the dual objective value. If $\hat u$ (or $\hat v$) is not the only active leaf in its tree in $T_2'$, then line~\ref{alg:lastif} (respectively line~\ref{alg:lastif2}) has $\Delta P=2\Delta D-1$.

To analyze the case when $\hat u$ and $\hat v$ are in different trees in $T_2'$, let $U_L$ and $U_R$ be the active leaf sets of the subtrees of $T_1'$ rooted at the two children of $\lca_1(U)$ at the start of the procedure. 
Note that while there are at least three active leaves in $U=U_L\cup U_R$, an active sibling pair in $T_1'$ consisting of two leaves in $U$ will contain either two leaves in $U_R$ or two leaves in $U_L$. Therefore, each call to \resolvepair\ has as its arguments two leaves that are either both in $U_L$ or both in $U_R$, and the last two leaves satisfy $\hat u\in U_L, \hat v\in U_R$.

We have the following claim.
\begin{claim}\label{claim:bicolor}
Let $(T'_1,T_2',{\cal L}',y)$ be a valid tuple that has been preprocessed using Procedure~\ref{fig:preprocess}, and let $\tilde U$ be a compatible active sibling set in $T_1'$. If repeated calls to \resolvepair$(u,v)$ with $u,v\in \tilde U$, where $u,v$ are active siblings in $T_1'$ at the moment of the call, result in having a single active leaf $\tilde u$ in $\tilde U$, which is the only active leaf in its tree in $T_2'$, then a {\it FinalCut} or {\it Merge-After-Cut} must have been performed in one of the calls to \resolvepair.
\end{claim}
Note that the claim can be applied using $\tilde U=U_L$ and $\tilde U=U_R$ if $\hat u$, respectively $\hat v$, is the only active leaf in its tree in $T_2'$.
Hence, if $\hat u$ and $\hat v$ are in different trees in $T_2'$, then both $U_L$ and $U_R$ contribute at least one operation that has $\Delta P=2\Delta D-1$ as required, and thus the total increase in the number of edges cut from $T_2'$ is at most twice the total increase in the dual objective value.

We conclude by proving the claim.

\begin{proof_of_claim}
We will call an active tree in $T_2'$ $\tilde U$-unicolored, if all its active leaves are in $\tilde U$, and $\tilde U$-bicolored if it contains active leaves in $\tilde U$ and active leaves not in $\tilde U$.
Note that, initially, there must have been at least one active tree in $T_2'$ that was $\tilde U$-bicolored: otherwise, we could have preprocessed the tuple further in Procedure~\ref{fig:preprocess}. 
Let $A$ be the active leaf set of this tree. Now, either $\tilde u\in A$, or all leaves in $A\cap \tilde U$ will be inactive at the moment when $\tilde u$ is the only active leaf remaining in $\tilde U$. It then follows from the following observation that at least one {\it FinalCut} or {\it Merge-After-Cut} must have been performed if the remaining leaf $\tilde u$ is in a $\tilde U$-unicolored tree.
\begin{observation}\label{obs:bicolor}
Let $A$ be the active leaf set of some active tree in $T_2'$ that is $\tilde U$-bicolored.
If after a call to \resolvepair$(u,v)$ with $u,v\in \tilde U$, where $u,v$ are active siblings in $T_1'$, 
all leaves in $A\cap \tilde U$ are either inactive, or in a $\tilde U$-unicolored tree, then the 
\resolvepair\ procedure must have performed a {\it FinalCut} or {\it Merge-After-Cut}.
\end{observation}
To verify the observation, we consider the other possible executions of \resolvepair. Line~\ref{alg:2cut1} only deactives a leaf that is in a $\tilde U$-unicolored tree. Line~\ref{alg:2merge} deactivates a leaf in $\tilde U$ and does not affect whether the tree containing this leaf is $\tilde U$-bicolored or not.  Line~\ref{alg:2cutW} cuts off leaves that are not in $\tilde U$ from the tree containing the leaves $u$ and $v$. The remaining tree containing $u$ and $v$ is not $\tilde U$-unicolored, unless the procedure performs a {\it Merge-After-Cut}.
\end{proof_of_claim}
\end{proof}

\section{The Red-Blue Algorithm}\label{sec:patriot}
In the previous section we showed a procedure to resolve certain compatible active sibling sets. 
The Red-Blue Algorithm (see Algorithm~\ref{fig:2approx} in Section~\ref{sec:key}) uses this procedure as a subroutine: it starts by finding a ``minimal incompatible active sibling set'' $R\cup B$ and calls \resolveset$(R)$. 
In this section, we give more details on the Red-Blue Algorithm and a complete analysis of its correctness and approximation ratio.

First of all, note that we can always find a minimal incompatible active sibling set $R \cup B$ (if not all leaves are inactive after preprocessing using the \preprocess\  procedure): Consider $\lca_1({\cal L}')$. If the active leaf sets of the left and right subtree of this node are compatible, then let $R$ and $B$ be these two sets. Note that $R\cup B$ must be incompatible, since otherwise the \preprocess\  procedure would be able to make all leaves in $R\cup B$ inactive. If, on the other hand, the active leaf set of one of the subtrees is incompatible, we can recurse on this subtree until we find a node in $T_1$ for which the active leaf sets of the left and right subtrees are compatible. 

We assume without loss of generality that $\lca_2(R)=\lca_2(R\cup B)$. 
A property that we will use in our analysis and that explains the distinction between sets $R$ (the ``red leaves'') and $B$ (the ``blue leaves'') is the following:
\begin{observation}\label{obs:fail}
Let $R\cup B$ be a minimal incompatible active sibling set such that $\lca_2(R) = \lca_2(R\cup B)$.
Suppose \resolveset$(R)$ returns {\sc Fail}, and let $\hat r_1,\hat r_2$ be the remaining active leaves in $R$. Then 
\begin{enumerate}
\item
$\{\hat r_1,\hat r_2,v\}$ is an inconsistent triplet for any $v\in B$,
\item
$\hat r_1,\hat r_2$ are in the same tree in $T_2'$,
\item
there exists an active leaf $w\not\in R\cup B$ that is in the same tree in $T_2'$ as $\hat r_1,\hat r_2$, where $w$ is not a descendent of $\lca_2(R\cup B)$ (i.e., $w$ satisfies $uv|w$ in $T_2$ for all $u,v\in R\cup B$), 
\item
there exists an active leaf $x\not\in R$ that is in the same tree in $T_2'$ as $\hat r_1,\hat r_2$, where $x$ is a descendent of $\lca_2(R\cup B)$ (i.e., $x$ satisfies $\hat r_1 x|\hat r_2$ in $T_2$ or $\hat r_2x|\hat r_1$ in $T_2$).
\end{enumerate}
\end{observation}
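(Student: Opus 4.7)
My plan is to derive each of the four parts from the failure conditions of \resolveset\ together with a structural lemma about what the call can do to $T_2'$. The organizing principle is: in a failing call to \resolveset$(R)$, every edge cut from $T_2'$ must lie strictly below $\lca_2(R \cup B)$ in $T_2$. Indeed, failure rules out FinalCut (line~\ref{alg:finalcut}) and Merge-After-Cut (line~\ref{alg:cutmerge}); line~\ref{alg:2cut1} does not cut in $T_2'$; and lines~\ref{alg:2cutW} and~\ref{alg:cutu} act on a pair $u,v \in R$ in the same tree of $T_2'$, cutting below $p_2(u)$, which sits on the $u$-to-$v$ path and is therefore a descendant of $\lca_2(u,v)$, itself a descendant of $\lca_2(R) = \lca_2(R \cup B)$. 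Thus $\lca_2(R \cup B)$ remains inside the tree $\tilde T$ of $T_2'$ containing $\hat r_1, \hat r_2$ throughout the call, and no edge above it is disturbed.

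Part 2 is then immediate: if $\hat r_1, \hat r_2$ were in different trees of $T_2'$, the first disjunct of line~\ref{alg:nofail} of \resolveset\ would have returned {\sc Success}. For Part 4, I argue by contradiction. If every active descendant of $\lca_2(R \cup B)$ in $\tilde T$ lay in $R$, then (since $\hat r_1, \hat r_2$ are the only active $R$-leaves) these would be the only such active leaves; a case analysis on whether they sit on the same or on opposite children of $\lca_2(R \cup B)$ shows in both cases that $p_2(\hat r_1) = p_2(\hat r_2) = \lca_2(\hat r_1, \hat r_2)$, which makes them active siblings in $T_2'$ and triggers {\sc Success} at line~\ref{alg:iflastsib}, a contradiction.

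Parts 1 and 3 I plan to derive from a further key claim, which is the technical heart of the proof: \emph{$\hat r_1$ and $\hat r_2$ lie on different children of $\lca_2(R \cup B)$ in $T_2$}, so $\lca_2(\hat r_1, \hat r_2) = \lca_2(R \cup B)$. Granted this, Part 1 is immediate: any $v \in B$ is a descendant of $\lca_2(R \cup B) = \lca_2(\hat r_1, \hat r_2)$, so $\lca_2(\hat r_1, \hat r_2, v) = \lca_2(\hat r_1, \hat r_2)$ and $\hat r_1 \hat r_2 | v$ fails in $T_2$; combined with $\hat r_1 \hat r_2 | v$ in $T_1$ (which holds because $p_1(R) = p_1(B) = \lca_1(R \cup B)$), the triplet is inconsistent. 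Part 3 follows similarly: the failure conditions supply an active $w_0 \in \tilde T$ with $\hat r_1 \hat r_2 | w_0$ in $T_2$; since $\lca_2(\hat r_1, \hat r_2) = \lca_2(R \cup B)$, $w_0$ cannot be a descendant of $\lca_2(R \cup B)$ (else $\lca_2(\hat r_1, \hat r_2, w_0) = \lca_2(\hat r_1, \hat r_2)$, contradicting $\hat r_1 \hat r_2 | w_0$), and hence $w_0 \notin R \cup B$.

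The main obstacle is thus the key claim. My plan is to prove it by contradiction: if $\hat r_1, \hat r_2$ were on the same side of $\lca_2(R \cup B)$, then since $\lca_2(R) = \lca_2(R \cup B)$ forces $R$ to originally have active leaves on both sides, some call \resolvepair$(r^*, r^{**})$ must deactivate the last active $R$-leaf $r^*$ on the opposite side. A simple counting argument shows the partner $r^{**}$ must lie on the same side as $\hat r_1, \hat r_2$ (opposite to $r^*$), so $\lca_2(r^*, r^{**}) = \lca_2(R \cup B)$. I will then case-analyse \resolvepair's possible actions: FinalCut and Merge-After-Cut are ruled out by failure, and the remaining possibilities (lines~\ref{alg:2cut1}, \ref{alg:2merge}, and~\ref{alg:2cutW}+\ref{alg:cutu}) will each be ruled out by combining the relabeling convention, the \preprocess\ guarantee that no active leaf starts alone in its $T_2'$-tree, the identity $\lca_2(R) = \lca_2(R \cup B)$, and the existence of non-$R$ active descendants of $\lca_2(R \cup B)$ in $\tilde T$ (from Part~4). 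This interlocking case analysis, which weaves together failure, relabeling, preprocessing, and the structural properties of $\tilde T$, is the technically intricate core of the argument.
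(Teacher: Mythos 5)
Your derivations of parts 1--4 from the failure conditions match the paper's: part 2 from line~\ref{alg:nofail}, part 4 from line~\ref{alg:iflastsib}, and parts 1 and 3 from the identity $\lca_2(\hat r_1,\hat r_2)=\lca_2(R)=\lca_2(R\cup B)$. The problem is the ``key claim'' itself, which you correctly isolate as the crux but then only promise to prove via an ``interlocking case analysis'' over the execution of \resolvepair. That sketch has a concrete hole: to rule out the configuration you must, among other things, exclude the possibility that the call \resolvepair$(r^*,r^{**})$ deactivating the last active $R$-leaf on the far side of $\lca_2(R\cup B)$ is an ordinary merge at line~\ref{alg:2merge}. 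A merge is perfectly compatible with {\sc Fail}, and excluding it requires exhibiting an active leaf strictly inside the subtree rooted at $\lca_2(r^*,r^{**})$ of \emph{their} tree in $T_2'$ \emph{at that moment}; none of your listed ingredients supplies one --- \preprocess\ only guarantees something at the start of the iteration, and your part 4 concerns the tree containing $\hat r_1,\hat r_2$ at the \emph{end} of \resolveset. Similar time-of-evaluation issues afflict the line~\ref{alg:2cut1} case. As written, the technical core of your proof is not there.

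The claim also does not need a dynamic argument; it is a two-line static consequence of compatibility, which is presumably why the paper states it without proof. Let $R_L,R_R$ be the active leaf sets of the two subtrees rooted at the children of $\lca_1(R)$. Every call to \resolvepair\ during the distilling of $R$ takes both its arguments from the same one of $R_L,R_R$ (an active sibling pair in $T_1'$ cannot straddle $\lca_1(R)$ while both sides still contain active leaves), so the survivors satisfy $\hat r_1\in R_L$, $\hat r_2\in R_R$ and hence $\lca_1(\hat r_1,\hat r_2)=\lca_1(R)$. If $\lca_2(\hat r_1,\hat r_2)$ were a strict descendent of $\lca_2(R)$, some $c\in R$ would not be a descendent of $\lca_2(\hat r_1,\hat r_2)$, giving $\hat r_1\hat r_2|c$ in $T_2$; but $\hat r_1\hat r_2|c$ fails in $T_1$ since $c$ is a descendent of $\lca_1(R)=\lca_1(\hat r_1,\hat r_2)$, contradicting the compatibility of $R$. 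Substituting this for your execution-tracking argument closes the gap and leaves the rest of your proposal intact.
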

Since $\lca_2(\hat r_1,\hat r_2)=\lca_2(R)=\lca_2(R\cup B)$, we have that for any $v\in B$ it is not the case that $\hat r_1\hat r_2|v$ in $T_2$, so $\{\hat r_1,\hat r_2,v\}$ is inconsistent. The fact that $\hat r_1, \hat r_2$ are in one tree in $T_2'$ follows from the fact that condition in line~\ref{alg:nofail} of \resolveset\ did not hold since otherwise \resolveset\ would have returned {\sc Success}. The existence of an active leaf $w$ with the stated properties follows from the same fact: Note that the path in $T_2'$ connecting $\hat r_1$ and $\hat r_2$ contains $\lca_2(R)=\lca_2(R\cup B)$, and that the leaf $w$ in line~\ref{alg:nofail} is not a descendent of this node. Hence $uv|w$ in $T_2$ for every $u,v\in R\cup B$.
The existence of $x$ with the stated properties follows from the fact that the condition in line~\ref{alg:iflastsib} did not hold.\\

If \resolveset$(R)$ returns {\sc Fail}, then the Red-Blue Algorithm increases $\lca_1(R)$ by 1 and decreases $\lca_1(R\cup B)$. Note that by Observation~\ref{obs:makesafe}, we could have initially made the tuple both $R$-safe {\it and} $B$-safe by decreasing $y_{p_1(R)}=y_{\lca_1(R\cup B)}$ instead of $y_{\lca_1(R)}$, so by Lemma~\ref{lemma:safe}, the tuple we have after \resolveset$(R)$ fails is valid and $\{\hat r_1,\hat r_2\}$-safe.
The algorithm then proceeds to repeatedly call \resolvepair$(u,v)$ for active sibling pairs $u,v\in B$ until a single active leaf $\hat b$ in $B$ remains. It follows from Observation~\ref{obs:fail} that the final three active leaves $\hat r_1,\hat r_2$ and $\hat b$ form an inconsistent triplet, which form an active subtree in $T_1'$. In Lemma~\ref{lemma:safe2} we show that the tuple we have at this moment is valid and $\{\hat r_1,\hat r_2\}$-safe and $\{\hat b\}$-safe. The next three subsections then explain how to deal with this triplet, depending on whether the leaves are all in one, two or three trees in $T_2'$.

\begin{lemma}\label{lemma:safe2}
Let $(T_1',T_2',{\cal L}',y)$ be a valid tuple at the start of line~\ref{alg:beginwhile}, and suppose \resolveset$(R)$ returns {\sc Fail}.
Then, the tuple in line~\ref{alg:triplet} is valid and $\{\hat r_1,\hat r_2\}$-safe and $\{\hat b\}$-safe.
\end{lemma}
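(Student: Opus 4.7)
My plan is to reinterpret the two separated dual updates of line~\ref{alg:initdual} of \resolveset$(R)$ and line~\ref{alg:movedual} of Algorithm~\ref{fig:2approx} as a single up-front decrease of $y_{\lca_1(R\cup B)}$ by $1$, and then to apply Observation~\ref{obs:makesafe} and Lemma~\ref{lemma:safe} simultaneously to both $R$ and $B$. The key bookkeeping fact is that \resolvepair only modifies $y$-values at leaves or at internal nodes of $T_2$, and so touches neither $y_{\lca_1(R)}$ nor $y_{\lca_1(R\cup B)}$. Consequently the combined effect of line~\ref{alg:initdual} decreasing $y_{\lca_1(R)}$, the intervening \resolvepair calls of \resolveset$(R)$, and line~\ref{alg:movedual} restoring $y_{\lca_1(R)}$ while decreasing $y_{\lca_1(R\cup B)}$ is identical to performing a single up-front decrease of $y_{\lca_1(R\cup B)}$ by $1$ followed by exactly the same sequence of \resolvepair calls (first on pairs in $R$, then on pairs in $B$). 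Hence I may analyze this equivalent, simpler sequence.

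Because $R\cup B$ is a minimal incompatible active sibling set, $p_1(R)=p_1(B)=\lca_1(R\cup B)$, so by Observation~\ref{obs:makesafe}, applied once with $U=R$ and once with $U=B$, the single initial decrease of $y_{\lca_1(R\cup B)}$ simultaneously makes the tuple $R$-safe and $B$-safe. The main work is then to show that both properties persist through the two groups of \resolvepair calls. Lemma~\ref{lemma:safe} immediately gives preservation of $R$-safeness across the calls on pairs in $R$, and of $B$-safeness across the calls on pairs in $B$. For the cross-preservation I would use Observation~\ref{obs:increase}: if \resolvepair$(u,v)$ with $u,v\in R$ raises the load on some compatible $L$, then $u\in L$, there exists $w\in L\setminus R$ in the same $T_2'$-tree as $u$ with $uw|v$ in $T_2$, and $L\cap R$ contains no active leaves after the call. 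I would then split on whether $w\in B$ or $w\notin B$; the delicate case is $w\in B$, where I plan to combine compatibility of $L$ with the structural facts that $R$ and $B$ are siblings at $\lca_1(R\cup B)$ in $T_1'$ and that $\lca_2(R)=\lca_2(R\cup B)$ to show that $L$ cannot simultaneously contain an active $B$-leaf and an active non-$B$-leaf in any post-call tree without producing an inconsistent triplet in $L$. The argument that the \resolvepair calls on pairs in $B$ preserve $R$-safeness is symmetric, using that the only active $R$-leaves at that stage are $\hat r_1,\hat r_2$.

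Once both safeness properties hold at line~\ref{alg:triplet}, the conclusion is immediate: because the only active leaves of $R$ remaining are $\hat r_1,\hat r_2$ and the only active leaf of $B$ remaining is $\hat b$, $R$-safeness coincides verbatim with $\{\hat r_1,\hat r_2\}$-safeness and $B$-safeness with $\{\hat b\}$-safeness. Dual feasibility, and therefore validity, follows: for a compatible $L$ not straddling $R$ or $B$ the load remains at most $1$ by the same bookkeeping used inside Lemma~\ref{lemma:safe}, while safeness forces the load to at most $0$ on any straddling $L$. The remaining components of validity are inherited from the descriptions of \resolvepair\ and \preprocess\ exactly as in Section~\ref{sec:3correct} and Lemma~\ref{lemma:dualfeas}. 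The chief obstacle will be the cross-preservation case analysis just sketched: minimality of $R\cup B$ together with the assumption $\lca_2(R)=\lca_2(R\cup B)$ is precisely what is needed to rule out the compatible $L$'s that would otherwise witness a load rise and straddle the ``other'' color.
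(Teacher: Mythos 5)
Your overall strategy matches the paper's: treat the net effect of line~\ref{alg:initdual} and line~\ref{alg:movedual} as a single up-front decrease of $y_{\lca_1(R\cup B)}$, invoke Observation~\ref{obs:makesafe} to obtain $R$-safeness and $B$-safeness simultaneously, let Lemma~\ref{lemma:safe} handle same-color preservation, and reduce everything else to showing that \resolvepair\ calls on one color do not destroy safeness for the other color. That reduction is correct, but the cross-color step is where essentially all of the technical content of the lemma lives, and your proposal does not carry it out. The paper needs two separate multi-case arguments here, and neither is reached by the route you sketch. For the $R$-phase preserving $\{\hat b\}$-safeness, the operative case distinction is not ``$w\in B$ versus $w\notin B$'' but the relative position of $\lca_2(u,\hat b)$ and $\lca_2(u,w)=p_2(u)$ on the path from $u$ to the root of $T_2$ (equal, one a descendent of the other, or vice versa); that positional analysis is what produces an inconsistent triplet inside $L$ from among $u$, $w$, $\hat b$ and the witness $x$, and it is needed in \emph{both} of your cases, so your split does not isolate a single ``delicate'' case from an easy one.

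Moreover, the claim that the $B$-phase direction is ``symmetric'' is not right: the roles of $R$ and $B$ are deliberately asymmetric, since one assumes $\lca_2(R)=\lca_2(R\cup B)$ and $R$ is the set on which \resolveset\ has already failed. The paper's proof that calls on pairs in $B$ preserve $\{\hat r_1,\hat r_2\}$-safeness leans on Observation~\ref{obs:fail} --- in particular that $\hat r_1,\hat r_2$ lie in a single tree of $T_2'$ and that $\lca_2(\hat r_1,\hat r_2)=\lca_2(R\cup B)$ --- and splits on whether $r$ and $u$ were in the same tree of $T_2'$ after \resolveset$(R)$ and whether a later \resolvepair\ call separated them. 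None of these ingredients has a counterpart in the $R$-phase direction, so ``by symmetry'' does not close this case. The reformulation, the appeal to Observation~\ref{obs:makesafe}, and the use of Observation~\ref{obs:increase} are all sound and identical to the paper's, but until the two cross-preservation case analyses are actually supplied, the proof is incomplete at its central step.
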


\begin{proof}
Note that the fact that \resolveset$(R)$ fails means that the only operations that have been executed are repeated calls to \resolvepair$(u,v)$, with $u,v\in R$ or $u,v\in B$.

We first show that the tuple is valid, $\{\hat r_1,\hat r_2\}$-safe and $\{\hat b\}$-safe after executing line~\ref{alg:movedual}. We then show this continues to hold when executing lines~\ref{alg:beginresolveB}--\ref{alg:endresolveB}.

By the fact that $R$ is a compatible active sibling set in $T_1'$, it follows from Lemma~\ref{lemma:safe} that the tuple $(\tilde T_1',\tilde T_2',\tilde{\cal L}',\tilde y)$ resulting at the end of \resolveset$(R)$ is valid and $\{\hat r_1,\hat r_2\}$-safe. In fact, by Observation~\ref{obs:increase}, the only sets $L$ for which the load has increased in the course of \resolveset$(R)$ are sets containing a leaf in $R$ and a leaf not in $R$, and thus ``moving up the $-1$'' in line~\ref{alg:movedual} does not affect the dual feasibility and the $\{\hat r_1,\hat r_2\}$-safeness. 

We now show that it is also $\{\hat b\}$-safe. Note that the only operations that have been executed are calls to \resolvepair$(u,v)$ with $u,v\in R$, and that the call did not perform a {\it FinalCut}, as in that case \resolveset$(R)$ would return {\sc Success}. We show that these operations cannot increase the load on a compatible set $L$ that contains $\hat b$ and at least one other active leaf $x\not\in B$ that is in the same tree as $\hat b$ in $T_2'$. This proves that the tuple is $\{\hat b\}$-safe after line~\ref{alg:movedual}, since executing this line decreases the load on any set containing $\hat b$ and a leaf $x\not\in B$.

Since we know the procedure did not execute a {\it FinalCut}, and since the load on $L$ does not increase if $u,v$ are active siblings in $T_2'$, we only need to consider the case where the procedure executes line~\ref{alg:2cutW} followed by line~\ref{alg:cutmerge} or line~\ref{alg:cutu}. 
Suppose by contradiction that $L$ is compatible, the load on $L$ increases, and $L$ contains $\hat b\in B$ and an active leaf $x\not\in B$ that are in the same tree in $T_2'$ after this operation. By Observation~\ref{obs:increase}, if the load for $L$ increases, then $L$ contains $u$ and a leaf $w\not\in R$. Note that $u$ and $w$ are in an inconsistent triplet with any active leaf in $R$, because $uw|v$ in $T_2$ and $u$ and $v$ are active siblings in $T'_1$. Therefore, it must be the case that $x\not \in R$.

We discern three cases based on the relative position of $\lca_2( u, b )$ and $\lca_2( u, w )$ on the path from $u$ to the root of $T_2$. 

\noindent
Case~1: $\lca_2( u, b ) = \lca_2( u, w )$. Note that $\lca_2( x, b )$ must be a descendent of $\lca_2( u, w )$, because the edge below $\lca_2( u, w )$ towards $w$ was cut, and $x$ and $b$ are still in the same tree in $T_2'$.  Thus $bx|u$ in $T_2$, which contradicts that $L$ is compatible because $x \not\in R\cup B$.

\noindent
Case~2: $\lca_2( u, w )$ is a descendent of $\lca_2( u, b )$. Then $uw|b$ in $T_2'$, again contradicting that $L$ is compatible because $w \not\in R$.

\noindent
Case~3: $\lca_2( u, b )$ is a descendent of $\lca_2( u, w )$. Then $b$ is not in the same tree as $u$ in $T_2'$ at the start of the procedure because $\lca_2( u, w ) = p_2( u )$. Therefore, $bx|u$ in $T_2$, because $x$ is in the same tree as $b$ in $T_2'$, again contradicting that $L$ is compatible because $x \not\in R\cup B$.

It remains to consider the effect of executing \resolvepair$(u,v)$ for $u,v\in B$. Since $B$ is a compatible active sibling set in $T_1'$, by Lemma~\ref{lemma:safe} the tuple remains valid, and by Observation~\ref{obs:increase} it does not increase the load on any set containing $\hat b$. We now consider the effect on a set $L$ containing $r\in \{\hat r_1, \hat r_2\}$ and at least one other active leaf $x\not\in \{\hat r_1,\hat r_2\}$ that is in the same tree as $r$ in $T_2'$. If $x\in B$, then the load on $L$ did not increase by Observation~\ref{obs:increase}, so assume instead that $x\not \in B$. Morever, if the load on a set $L$ increases by executing \resolvepair$(u,v)$ for $u,v\in B$, then $L$ must contain $u$ and some leaf $w\not\in B$ that was in the same tree as $u$ at the start of \resolvepair$(u,v)$, and which satisfies $uw|v$ in $T_2$.

We show that $\{u,r,w,x\}$ contains an inconsistent triplet by considering three cases. If $r$ and $u$ were in different trees in $T_2'$ after the execution of \resolveset$(R)$, then for any $w$ that is in the same tree as $u$ in $T_2'$, it holds that $w\not \in \{\hat r_1,\hat r_2\}$ (by Observation~\ref{obs:fail} (2)), and that $uw|r$ in $T_2$, since $\lca_2(u,r)$ must be on the path from $r$ to the root, and hence $\lca_2(u,r)$ is in the tree containing $r$ and $\lca_2(B\cup R)$ in $T_2'$. Hence, in this case, $\{u,w,r\}$ form an inconsistent triplet, since $ur|w$ in $T_1$. If $r$ and $u$ were in the same tree in $T_2'$ after the execution of \resolveset$(R)$, and they are still in the same tree in $T_2'$ after \resolvepair$(u,v)$, then $w\neq r$ and $uw|r$ in $T_2$ (since $w$ must be a leaf in the set $W$ that is cut off by cutting the edge below $p_2(u)$), and thus again $\{u,w,r\}$ is an inconsistent triplet. Finally, if $r$ and $u$ were in the same tree in $T_2'$ after the execution of \resolveset$(R)$, but some subsequent call to \resolvepair$(u',v')$ separates them in $T_2'$, then $r$ and $x$ must be in the set $W$ that is cut off by \resolvepair$(u',v')$. But then we have that $rx|u''$ in $T_2$ for any leaf $u''\in B$ that is active at that time. Hence $rx|u$ in $T_2$, and thus $\{r,x,u\}$ is an inconsistent triplet, since $ru|x$ in $T_1$.

We have thus shown that the load cannot increase on a compatible set $L$ that contains $r\in \{\hat r_1,\hat r_2\}$ and at least one other active leaf $x\not\in \{\hat r_1,\hat r_2\}$ that is in the same tree as $r$ in $T_2'$. Therefore, the tuple remains $\{\hat r_1,\hat r_2\}$-safe throughout lines~\ref{alg:beginresolveB}--\ref{alg:endresolveB}.
\end{proof}

\subsection{Inconsistent triplet in a single tree in $T_2'$}\label{sec:onetree}

Suppose $\hat r_1,\hat r_2$ and $\hat b$ are in the same tree in $T_2'$ in line~\ref{alg:triplet}. 
Note that we are then exactly in case (I) of Section~\ref{sec:key}, and the triplet can either be in the configuration of subcase (I)(a) or of subcase (I)(b); see Figure~\ref{fig:triplet}. We give a formal description of the procedure for dealing with this case in Algorithm~\ref{fig:onetree}.

Note that the execution of lines~\ref{alg:beginresolveB}--\ref{alg:endresolveB} can never delete an edge from $T_2'$ that is above $\lca_2(R\cup B)$, because \resolvepair$(u,v)$ only cuts edges below $\lca_2(u,v)$, and hence in lines~\ref{alg:beginresolveB}--\ref{alg:endresolveB} only edges below $\lca_2(B)$ are cut.  Combined with Observation~\ref{obs:fail}, this implies that the tree in $T_2'$ containing $\hat r_1,\hat r_2$ and $\hat b$ contains at least one active leaf $w\not\in R\cup B$ that is not a descendent of $\lca_2(R\cup B)$.

\begin{procedure}
\caption{4a(): $\hat r_1,\hat r_2$ and $\hat b$ are in the same tree in $T_2'$}\label{fig:onetree}
Relabel $\hat r_1,\hat r_2$ if necessary so that $\hat b\hat r_1|\hat r_2$ in $T_2$.\;
Cut off the subtree rooted at $\lca_2(\hat r_1, \hat b)$ in $T_2'$. \dual{Decrease $y_{\lca_2(\hat r_1, \hat b)}$ by $1$.}\label{alg:cutlca}\;
Cut off $\hat r_2$ in $T_2'$ and $T_1'$ and make $\hat r_2$ inactive. \dual{$y_{\hat r_2}\ot 1$.}\label{alg:cutr2}\;
\eIf{ $\hat r_1,\hat b$ are now active siblings in $T_2'$ }
{
	Merge $\hat b$ and $\hat r_1$ (i.e., make $\hat b$ inactive to ``merge'' it with $\hat r_1$). \dual{$y_{\hat b}\ot 1$.} \label{alg:supermerge}\; 
}
{	
	Cut off $\hat r_1$ and $\hat b$ in $T_2'$ and $T_1'$ and make them inactive. \dual{$y_{\hat r_1}\ot 1$, $y_{\hat b}\ot 1$.}\label{alg:supercut}\;
}	
\end{procedure}


\begin{lemma}\label{lemma:onetree}
Let $(T_1',T_2',{\cal L}',y)$ be a valid tuple, that has been preprocessed by procedure \preprocess, and 
let $R\cup B$ be a minimal incompatible active sibling set with $\lca_2(R)=\lca_2(R\cup B)$, for which \resolveset$( R )$ returns {\sc Fail}.
If, after executing lines~\ref{alg:beginwhile}--\ref{alg:endresolveB}, the three remaining active leaves are in a single tree in $T_2'$, then the tuple $(\tilde T_1',\tilde T_2',\tilde {\cal L}',\tilde y)$ 
after executing lines~\ref{alg:beginwhile}--\ref{alg:endresolveB} followed by Procedure~\ref{fig:onetree} is valid, and satisfies
\[D(\tilde T_2',\tilde{\cal L}',\tilde y)-D(T_2',{\cal L}',y) \ge \tfrac 12 \big(|E(T_2')\setminus E(\tilde T_2')|\big).\]
\end{lemma}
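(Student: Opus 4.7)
The plan is to establish (i) validity of the final tuple and (ii) the accounting inequality $\Delta D \ge \tfrac 12 \Delta P$, where $\Delta P = |E(T_2')\setminus E(\tilde T_2')|$ and $\Delta D = D(\tilde T_2',\tilde{\cal L}',\tilde y) - D(T_2',{\cal L}',y)$, both measured over the entire segment from line~\ref{alg:beginwhile} through the end of Procedure~\ref{fig:onetree}.

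For the accounting I would track the ``dual deficit'' $\Delta P - 2\Delta D$. Line~\ref{alg:initdual} of \resolveset$(R)$ contributes $+2$ (no edge cut, $y_{\lca_1(R)}$ decreased by $1$). By Table~\ref{tab:resolvepair} each \resolvepair\ call inside \resolveset$(R)$ contributes at most $0$, line~\ref{alg:movedual} contributes $0$, and each \resolvepair\ call while distilling $B$ (lines~\ref{alg:beginresolveB}--\ref{alg:endresolveB}) also contributes at most $0$. Hence entering Procedure~\ref{fig:onetree} the deficit is at most $2$. I would then compute Procedure~\ref{fig:onetree} line by line: line~\ref{alg:cutlca} gives $(\Delta P,\Delta D) = (1,0)$, since the $+1$ in $\sum z$ from splitting the active set $A$ of $\tau$ into $A_1$ and $A_2$ is canceled by the $-1$ on $y_{\lca_2(\hat r_1,\hat b)}$; line~\ref{alg:cutr2} gives $(1,1)$, using that $A_2\setminus\{\hat r_2\}\neq\emptyset$ by the leaf $w$ of Observation~\ref{obs:fail}(3); and finally line~\ref{alg:supermerge} gives $(0,1)$ if $A_1 = \{\hat r_1,\hat b\}$, while line~\ref{alg:supercut} gives $(2,2)$ if $|A_1| \ge 3$ (after cutting $\hat r_1$ the tree containing $\hat b$ still holds another active leaf, so the second cut really removes an edge). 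In either case Procedure~\ref{fig:onetree} contributes $\Delta P - 2\Delta D = -2$, which exactly pays off the accumulated deficit of $2$ and yields the desired bound.

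For validity I would start from Lemma~\ref{lemma:safe2}, which says the tuple just before Procedure~\ref{fig:onetree} is valid, $\{\hat r_1,\hat r_2\}$-safe, and $\{\hat b\}$-safe; I would then check each line of Procedure~\ref{fig:onetree} preserves dual feasibility. For line~\ref{alg:cutlca}, the only compatible $L$ whose $\sum z$ contribution grows are those intersecting both $A_1$ and $A_2$; since every leaf of $A_1$ is a descendent of $\lca_2(\hat r_1,\hat b)$ in $T_2$ and every leaf of $A_2$ is not, the $T_2$-path between them contains $\lca_2(\hat r_1,\hat b)$, so this node lies in $V[L]$, and the $-1$ update cancels the $+1$. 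For line~\ref{alg:cutr2}, load can grow only on $L$ containing $\hat r_2$ and another leaf of $A_2$; $\{\hat r_1,\hat r_2\}$-safeness combined with the non-increase from line~\ref{alg:cutlca} gives load $\le 0$ just before this step, hence $\le 1$ after.

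In the remaining lines, load can grow for compatible $L$ containing $\hat r_1$ or $\hat b$ together with another leaf of $A_1$. The starting $\{\hat b\}$- and $\{\hat r_1,\hat r_2\}$-safeness give load $\le 0$ on such $L$ before Procedure~\ref{fig:onetree}. One then checks that lines~\ref{alg:cutlca} and~\ref{alg:cutr2} preserve this bound, and in fact improve it to $\le -1$ when $L\cap A_2=\emptyset$ (because the $-1$ on $y_{\lca_2(\hat r_1,\hat b)}$ is donated without any canceling $+1$ from $\sum z$). The only apparent danger is that line~\ref{alg:cutr2} contributes an extra $+1$ when $L$ also contains $\hat r_2$ and another leaf of $A_2$; but then $L$ contains all of $\hat r_1,\hat r_2,\hat b$, and Observation~\ref{obs:fail}(1) says this triplet is inconsistent, so $L$ cannot be compatible. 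The delicate case is case~B's line~\ref{alg:supercut}, where the load on a given $L$ can be raised once by cutting off $\hat r_1$ and again by cutting off $\hat b$; the cushion of $-1$ from line~\ref{alg:cutlca}, available precisely when $L\cap A_2=\emptyset$, is exactly what keeps the final load at $\le 1$.
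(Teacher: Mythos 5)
Your accounting of the primal and dual changes is correct and coincides with the paper's: a deficit of at most $2$ is accumulated by line~\ref{alg:initdual}, each \resolvepair\ call and line~\ref{alg:movedual} contribute nonpositively to the deficit, and Procedure~\ref{fig:onetree} contributes $(\Delta P,\Delta D)=(2,2)$ or $(4,3)$, a surplus of exactly $2$. The gap is in your dual-feasibility argument for line~\ref{alg:supercut}. You handle the possible $+2$ load increase there (one $+1$ from cutting off $\hat r_1$, another from cutting off $\hat b$) by invoking a $-1$ cushion from line~\ref{alg:cutlca}, which you note is available only when $L\cap A_2=\emptyset$. But a set $L$ containing $\hat r_1$, $\hat b$ and a third leaf $w'\in A_1\setminus\{\hat r_1,\hat b\}$ may also meet $A_2$; then the $+1$ to $\sum_{A:A\cap L\neq\emptyset}z_A$ from splitting $A_1\cup A_2$ cancels the $-1$ on $y_{\lca_2(\hat r_1,\hat b)}$, no cushion remains, and your bookkeeping would let such an $L$ end with load $2$. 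What is actually needed is the paper's observation that no such compatible $L$ exists: $\hat r_1\hat b|w'$ holds in $T_1$ (since $w'\notin R\cup B$ is not a descendent of $\lca_1(R\cup B)=\lca_1(\hat r_1,\hat b)$) but fails in $T_2$ (since $w'\in A_1$ is a descendent of $\lca_2(\hat r_1,\hat b)$), so $\{\hat r_1,\hat b,w'\}$ is an inconsistent triplet. Hence a compatible $L$ meets at most two of $\{\hat r_1\},\{\hat b\},A_1\setminus\{\hat r_1,\hat b\}$ and the load increase in line~\ref{alg:supercut} is at most $1$; the cushion is never needed.

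A second, related gap: to exclude a set receiving $+1$ from line~\ref{alg:cutr2} and another $+1$ from the final lines, you argue that such an $L$ would have to contain all of $\hat r_1,\hat r_2,\hat b$ and then apply Observation~\ref{obs:fail}~(1). That reasoning is valid for line~\ref{alg:supermerge}, where the second increase does require both $\hat r_1$ and $\hat b$, but not for line~\ref{alg:supercut}, where the second increase only requires one of $\hat r_1,\hat b$ together with another leaf of $A_1$; such an $L$ need not contain all three hatted leaves, so Observation~\ref{obs:fail}~(1) does not apply. Here too the paper's triplet argument is the missing step: if $u,v$ are the two leaves of $L\cap A_1$ responsible for the increase, then $uv|\hat r_2$ in $T_2$ (both are descendents of $\lca_2(\hat r_1,\hat b)$, a strict descendent of $\lca_2(R\cup B)$) while $uv|\hat r_2$ fails in $T_1$ because $u,v$ are not both in $R$ nor both in $B$; hence a compatible $L$ whose load grows in line~\ref{alg:supercut} cannot contain $\hat r_2$ and cannot have been charged in line~\ref{alg:cutr2}. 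Both conclusions you need are true, but they require these incompatibility arguments rather than the cushion and the all-three-leaves claims you supply.
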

\begin{proof}
The first three properties of a valid tuple are again clear from the description of the procedure. The last two properties follow from the arguments in Section~\ref{sec:3correct}.  It remains to show that the modified dual solution is feasible, and that the increase in the dual objective value can ``pay for'' half of the increase in the primal objective value.

Letting, as in line~\ref{alg:triplet}, $\hat r_1,\hat r_2$ be the remaining active leaves in $R$ and $\hat b$ the remaining active leaf in $B$, we have by Lemma~\ref{lemma:safe2} that the tuple is valid and $\{\hat r_1,\hat r_2\}$-safe and $\{\hat b\}$-safe after lines~\ref{alg:beginwhile}--\ref{alg:endresolveB}.

We consider what happens to the dual solution when executing Procedure~\ref{fig:onetree}. The numbers refer to the line numbers in the procedure.
\begin{itemize}
\item[\ref{alg:cutlca}.] 
Let $A_1,A_2$ be the active leaf sets of the two trees created, with $\hat b, \hat r_1\in A_1, \hat r_2\in A_2$. Then $z_{A_1\cup A_2}$ decreases by $1$ and $z_{A_1}$ and $z_{A_2}$ increase by $1$. Note that $y_{\lca_2(\hat r_1,\hat b)}$ is decreased by $1$.
Since any set $L$ with $L\cap A_1\neq\emptyset$ and $L\cap A_2\neq\emptyset$ has $\lca_2(\hat r_1,\hat b)\in V[L]$, this therefore does not increase the load on any set $L$. The objective value of the dual solution is also not changed.
\item[\ref{alg:cutr2}.] 
Let $A_2$ be the active leaves of the tree in $T_2'$ containing $\hat r_2$ before line~\ref{alg:cutr2} is executed. Note that $A_2 \setminus \{\hat r_2 \}$ is not empty, because it contains a node that is not a descendent of $\lca_2(R\cup B)$ (see Observation~\ref{obs:fail}). Therefore line~\ref{alg:cutr2} increases $z_{A_2\setminus \{\hat r_2\}}$ by $1$; it also decreases $z_{A_2}$ by $1$ and increases $y_{\hat r_2}$ by 1. 
This increases the load on sets $L$ containing $\hat r_2$ and at least one leaf $w\in A_2\setminus \{\hat r_2\}$. Note that $w\neq \hat r_1$, since $\hat r_1\not\in A_2$. By the fact that the dual solution is $\{\hat r_1,\hat r_2\}$-safe, we thus know that the load on any set $L$ for which the load increases had load at most 0 prior to the increase. 
The dual objective value is increased by $1$.
\item[\ref{alg:supermerge}.]
If $\hat b$ and $\hat r_1$ are active siblings in $T_2'$, they are the only active leaves in their tree in $T_2'$. Hence, line~\ref{alg:supermerge} decreases $z_{\{\hat r_1,\hat b\}}$ by $1$, and it increases $y_{\hat b}$ and $z_{\{\hat r_1\}}$ by $1$. The load is increased only on sets $L$ containing both $\hat b$ and $\hat r_1$.  Such sets $L$ had load at most 0 at the start of the procedure by Lemma~\ref{lemma:safe2}. Furthermore, $L$ cannot have had its load increased in line~\ref{alg:cutr2}, as this would mean $L$ contains inconsistent triplet $\{\hat b, \hat r_1,\hat r_2\}$. Hence, the dual solution remains feasible.
The dual objective value is increased by $1$.
\item[\ref{alg:supercut}.]
The value of $z_{A_1}$ is decreased by $1$, and $y_{\hat b}, y_{\hat r_1}$ and $z_{A_1\setminus \{\hat b,\hat r_1\}}$ are increased by $1$. 
This increases the load on sets $L$ containing leaves in at least two of the sets $\{\hat b\}, \{\hat r_1\}, A_1\setminus\{\hat b,\hat r_1\}$. 
Furthermore, note that a compatible set $L$ can contain leaves in at most two of these sets, and thus the load on a compatible set $L$ increases by at most 1.

If the load on a compatible set $L$ increases, then $L$ cannot contain $\hat r_2$: the load on $L$ increasing implies that $L$ contains $\hat b$ or $\hat r_1$, and at least one other leaf in $A_1$. These two nodes, say $u,v$, in $A_1$ and $\hat r_2$ are an inconsistent triplet: $uv| \hat r_2$ in $T_2$, but $u,v$ are not both in $R$ and they are not both in $B$, so it cannot be the case that $uv|\hat r_2$ in $T_1$. This shows that $L$ is incompatible. 

Therefore the load on $L$ was at most 0 at the start of the procedure (by Lemma~\ref{lemma:safe2}) and the load has not increased by line~\ref{alg:cutr2}. Hence, the dual solution remains feasible.
The dual objective value is increased by $2$.
\end{itemize}

We now consider the total change in the primal and dual objective value.
Let $\Delta P_1$ be the number of edges in $E(T_2')\setminus E(\tilde T_2')$  due to lines~\ref{alg:beginwhile}--\ref{alg:endresolveB}, and let $\Delta P_2$ be the number of edges in $E(T_2')\setminus E(\tilde T_2')$  due to Procedure~\ref{fig:onetree}. 
Similarly, let $\Delta D_1$ be the total change in the dual objective value by lines~\ref{alg:beginwhile}--\ref{alg:endresolveB}, and
$\Delta D_2$ the change in the dual objective due to Procedure~\ref{fig:onetree}.

We have $\Delta D_1 \ge \frac 12 \Delta P_1 -1$ by taking into account the initial decrease in the dual objective value and Table~\ref{tab:resolvepair}.
Note that $\Delta P_2= 2$ if line~\ref{alg:supermerge} is executed, and $\Delta P_2= 4$ if line~\ref{alg:supercut} is executed.
The arguments about the dual solution given above also show that $\Delta D_2= 2$ in the first case, and $\Delta D_2= 3$ in the second case.
Hence, we have that $\Delta D_2\ge \frac 12 \Delta P_2+1$, and thus $\Delta D_1+\Delta D_2\ge \frac 12\left(\Delta P_1+\Delta P_2\right)$.
\end{proof}
\subsection{Inconsistent triplet in multiple trees in $T_2'$}\label{sec:multitrees}
We give the procedures for dealing with the remaining cases in Procedure~\ref{fig:threetrees} and Procedure~\ref{fig:twotrees}. 
These are more complicated than case (II)(a) that was shown in Figure~\ref{fig:triplet} in Section~\ref{sec:key}. The reason for this additional complexity is that the calls to \resolvepair$(u,v)$ with $u,v\in R\cup B$, may lead to $\hat r_1,\hat r_2$ or $\hat b$ being the only active leaf in their respective trees in $T_2'$.
 If this happens, we may need to identify two inactive trees at the end of the procedure and ``retroactively merge'' them. 
In Figures~\ref{fig:retrofig} and~\ref{fig:retrofig2}, we give examples that illustrate the retroactive merge for Procedure~\ref{fig:twotrees}.

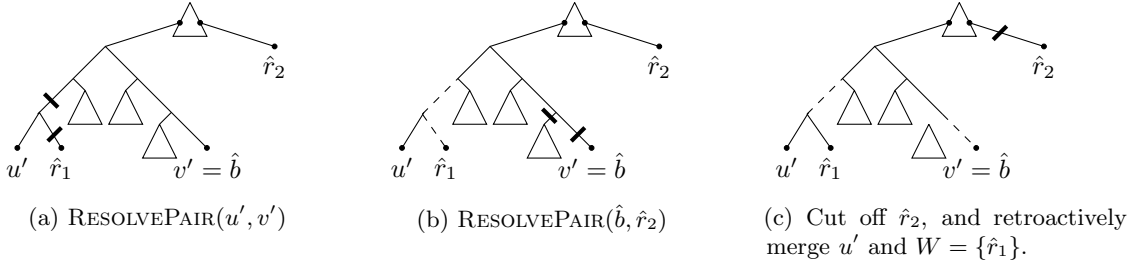
\begin{figure}
\begin{center}
\begin{subfigure}[t]{.26\textwidth}
\begin{tikzpicture}[scale=.45]
\draw (-2,-2) -- (0,0) -- (3,-3);
\draw (-2.6,-3) -- (-1.95,-1.95) -- (-1.3,-3);
\drawnode{(-2.6,-3)}
\node [below] at (-2.6,-3) {$u'$};
\drawnode{(3,-3)}
\node [below] at (3,-2.87) {$v' = \hat b$};
\drawnode{(-1.3,-3)}
\node [below] at (-1.3,-3) {$\hat r_1$};


\smallarbitrarytreeat{(-.6,-1.3)}
\draw (-.6,-1.3) -- (-.95,-.95);

\smallarbitrarytreeat{(.6,-1.3)}
\smallarbitrarytreeat{(1.6,-2.3)}
\draw (1.6,-2.3) -- (1.95,-1.95);
\draw (.6,-1.3) -- (.95,-.95);

\smallarbitrarytreeat{(2.5,1.3)}
\draw (0,0) -- (2.2,0.7);
\drawnode{(2.2,0.7)}
\drawnode{(2.8,.7)}
\draw (5,0) -- (2.8,.7);
\drawnode{(5,0)}
\node [below] at (5,0) {$\hat r_2$};



\draw [ultra thick] (-1.7,-2.8) -- (-1.3,-2.4); 
\draw [ultra thick] (-1.4,-1.8) -- (-1.8,-1.4); 


\end{tikzpicture}
\caption{\resolvepair$(u',v')$}
\end{subfigure}
\qquad
\begin{subfigure}[t]{.26\textwidth}
\begin{tikzpicture}[scale=.45]
\draw (-1,-1) -- (0,0) -- (3,-3);
\draw[dashed] (-2,-2) -- (-1,-1);
\draw (-2.6,-3) -- (-1.95,-1.95);
\draw[dashed] (-2,-2) -- (-1.3,-3);
\drawnode{(-2.6,-3)}
\node [below] at (-2.6,-3) {$u'$};
\drawnode{(3,-3)}
\node [below] at (3,-2.87) {$v' = \hat b$};
\drawnode{(-1.3,-3)}
\node [below] at (-1.3,-3) {$\hat r_1$};


\smallarbitrarytreeat{(-.6,-1.3)}
\draw (-.6,-1.3) -- (-.95,-.95);

\smallarbitrarytreeat{(.6,-1.3)}
\smallarbitrarytreeat{(1.6,-2.3)}
\draw (1.6,-2.3) -- (1.95,-1.95);
\draw (.6,-1.3) -- (.95,-.95);

\smallarbitrarytreeat{(2.5,1.3)}
\draw (0,0) -- (2.2,0.7);
\drawnode{(2.2,0.7)}
\drawnode{(2.8,.7)}
\draw (5,0) -- (2.8,.7);
\drawnode{(5,0)}
\node [below] at (5,0) {$\hat r_2$};



\draw [ultra thick] (2.4,-2.8) -- (2.8,-2.4); 
\draw [ultra thick] (1.95,-2.25) -- (1.55,-1.85); 


\end{tikzpicture}
\caption{\resolvepair$(\hat b, \hat r_2)$}
\end{subfigure}
\qquad
\begin{subfigure}[t]{.29\textwidth}
\begin{tikzpicture}[scale=.45]
\draw (-1,-1) -- (0,0) -- (2,-2);
\draw[dashed] (-2,-2) -- (-1,-1);
\draw[dashed] (2,-2) -- (3,-3);
\draw (-2.6,-3) -- (-1.95,-1.95);
\draw (-2,-2) -- (-1.3,-3);
\drawnode{(-2.6,-3)}
\node [below] at (-2.6,-3) {$u'$};
\drawnode{(3,-3)}
\node [below] at (3,-2.87) {$v' = \hat b$};
\drawnode{(-1.3,-3)}
\node [below] at (-1.3,-3) {$\hat r_1$};


\smallarbitrarytreeat{(-.6,-1.3)}
\draw (-.6,-1.3) -- (-.95,-.95);

\smallarbitrarytreeat{(.6,-1.3)}
\smallarbitrarytreeat{(1.6,-2.3)}
\draw (.6,-1.3) -- (.95,-.95);

\smallarbitrarytreeat{(2.5,1.3)}
\draw (0,0) -- (2.2,0.7);
\drawnode{(2.2,0.7)}
\drawnode{(2.8,.7)}
\draw (5,0) -- (2.8,.7);
\drawnode{(5,0)}
\node [below] at (5,0) {$\hat r_2$};





\draw [ultra thick] (3.5,.2) -- (3.9,.6);

\end{tikzpicture}
\caption{Cut off $\hat r_2$, and retroactively merge $u'$ and $W=\{\hat r_1\}$.}
\end{subfigure}
\end{center}
\caption{Illustration of a case where a retroactive merge is needed. The set $R$ contains leaves $\hat r_1,\hat r_2$ only, and the set $B$ contains two leaves, $u'$ and $v'$ (where $v'$ will be the last remaining active leaf in $B$, so $v'=\hat b$).
Figure (a) shows the execution of \resolvepair$(u',v')$. After this, we execute Procedure~\ref{fig:twotrees}, with $\hat u=\hat r_1$, $\{\hat v_1,\hat v_2\}= \{\hat b,\hat r_2\}$.}\label{fig:retrofig}
\end{figure}

\begin{figure}
\begin{center}
\begin{subfigure}[t]{.26\textwidth}
\begin{tikzpicture}[scale=.45]
\draw (-2,-2) -- (0,0) -- (3,-3);
\draw (-2.6,-3) -- (-1.95,-1.95) -- (-1.3,-3);
\drawnode{(-2.6,-3)}
\node [below] at (-2.6,-3) {$u'$};
\drawnode{(3,-3)}
\node [below] at (3,-3) {$v' = \hat r_1$};
\drawnode{(-1.3,-3)}
\node [below] at (-1.3,-2.87) {$\hat b$};


\smallarbitrarytreeat{(-.6,-1.3)}
\draw (-.6,-1.3) -- (-.95,-.95);

\smallarbitrarytreeat{(.6,-1.3)}
\smallarbitrarytreeat{(1.6,-2.3)}
\draw (1.6,-2.3) -- (1.95,-1.95);
\draw (.6,-1.3) -- (.95,-.95);

\smallarbitrarytreeat{(2.5,1.3)}
\draw (0,0) -- (2.2,0.7);
\drawnode{(2.2,0.7)}
\drawnode{(2.8,.7)}
\draw (5,0) -- (2.8,.7);
\drawnode{(5,0)}
\node [below] at (5,0) {$\hat r_2$};



\draw [ultra thick] (-1.7,-2.8) -- (-1.3,-2.4); 
\draw [ultra thick] (-1.4,-1.8) -- (-1.8,-1.4); 


\end{tikzpicture}
\caption{\resolvepair$(u',v')$}
\end{subfigure}
\qquad
\begin{subfigure}[t]{.26\textwidth}
\begin{tikzpicture}[scale=.45]
\draw (-1,-1) -- (0,0) -- (3,-3);
\draw[dashed] (-2,-2) -- (-1,-1);
\draw (-2.6,-3) -- (-1.95,-1.95);
\draw[dashed] (-2,-2) -- (-1.3,-3);
\drawnode{(-2.6,-3)}
\node [below] at (-2.6,-3) {$u'$};
\drawnode{(3,-3)}
\node [below] at (3,-3) {$v' = \hat r_1$};
\drawnode{(-1.3,-3)}
\node [below] at (-1.3,-2.87) {$\hat b$};


\smallarbitrarytreeat{(-.6,-1.3)}
\draw (-.6,-1.3) -- (-.95,-.95);

\smallarbitrarytreeat{(.6,-1.3)}
\smallarbitrarytreeat{(1.6,-2.3)}
\draw (1.6,-2.3) -- (1.95,-1.95);
\draw (.6,-1.3) -- (.95,-.95);

\smallarbitrarytreeat{(2.5,1.3)}
\draw (0,0) -- (2.2,0.7);
\drawnode{(2.2,0.7)}
\drawnode{(2.8,.7)}
\draw (5,0) -- (2.8,.7);
\drawnode{(5,0)}
\node [below] at (5,0) {$\hat r_2$};



\draw [ultra thick] (2.4,-2.8) -- (2.8,-2.4); 
\draw [ultra thick] (1.95,-2.25) -- (1.55,-1.85); 


\end{tikzpicture}
\caption{\resolvepair$(\hat r_1, \hat r_2)$}
\end{subfigure}
\qquad
\begin{subfigure}[t]{.29\textwidth}
\begin{tikzpicture}[scale=.45]
\draw (-1,-1) -- (0,0) -- (2,-2);
\draw[dashed] (-2,-2) -- (-1,-1);
\draw[dashed] (2,-2) -- (3,-3);
\draw (-2.6,-3) -- (-1.95,-1.95);
\draw (-2,-2) -- (-1.3,-3);
\drawnode{(-2.6,-3)}
\node [below] at (-2.6,-3) {$u'$};
\drawnode{(3,-3)}
\node [below] at (3,-3) {$v' = \hat r_1$};
\drawnode{(-1.3,-3)}
\node [below] at (-1.3,-2.87) {$\hat b$};


\smallarbitrarytreeat{(-.6,-1.3)}
\draw (-.6,-1.3) -- (-.95,-.95);

\smallarbitrarytreeat{(.6,-1.3)}
\smallarbitrarytreeat{(1.6,-2.3)}
\draw (.6,-1.3) -- (.95,-.95);

\smallarbitrarytreeat{(2.5,1.3)}
\draw (0,0) -- (2.2,0.7);
\drawnode{(2.2,0.7)}
\drawnode{(2.8,.7)}
\draw (5,0) -- (2.8,.7);
\drawnode{(5,0)}
\node [below] at (5,0) {$\hat r_2$};





\draw [ultra thick] (3.5,.2) -- (3.9,.6);

\end{tikzpicture}
\caption{Cut off $\hat r_2$, and retroactively merge $u'$ and $W=\{\hat b\}$.}
\end{subfigure}
\end{center}
\caption{Illustration of a second case where a retroactive merge is needed. The set $R$ contains three leaves $u',\hat r_1,\hat r_2$, and the set $B$ contains a single leaf, $\hat b$.
Figure (a) shows the execution of \resolvepair$(u',v')$, with $v'=\hat r_1$. After this, we execute Procedure~\ref{fig:twotrees}, with $\hat u=\hat b$, $\{\hat v_1,\hat v_2\}= \{\hat r_1,\hat r_2\}$.}\label{fig:retrofig2}
\end{figure}
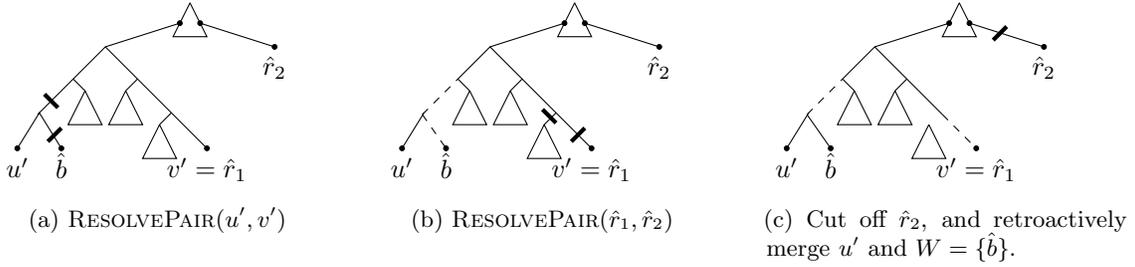

\begin{procedure}
\caption{4b(): $\hat r_1,\hat r_2$ and $\hat b$ are in three distinct trees in $T_2'$}\label{fig:threetrees}
Cut off $\hat b$ in $T_2'$ and $T_1'$ and make $\hat b$ inactive. \dual{$y_{\hat b}\ot 1$.} See Claim~\ref{realcut}.\;\label{alg:cutb}
Cut off $\hat r_1$ in $T_2'$ (if $\hat r_1$'s tree contains at least one other active leaf) and in $T_1'$ and make $\hat r_1$ inactive.	\dual{$y_{\hat r_1}\ot 1$.}\;\label{alg:prev} 
Cut off $\hat r_2$ in $T_2'$ (if $\hat r_2$'s tree contains at least one other active leaf) and in $T_1'$ and make $\hat r_2$ inactive.	\dual{$y_{\hat r_2}\ot 1$.}\;\label{alg:prev2} 
\If { ($\hat r_1$ and $\hat r_2$ were each the only active leaf in its tree in $T_2'$ in lines~\ref{alg:prev}--\ref{alg:prev2}) {\em and} (no {\em FinalCut} or {\em Merge-After-Cut} was executed in some call to \resolvepair\ in lines~\ref{alg:beginresolveB}--\ref{alg:endresolveB}) \label{alg:ifretro1}}
{
	Find the inactive leaves $u',v'\in B$ for which the execution of \resolvepair$(u',v')$ cut off set $W\ni\hat r_2$ in line~\ref{alg:2cutW}. Let $u'$ be the leaf that was deactivated by \resolvepair$(u',v')$. \label{alg:retromerge1}
	Retroactively merge the trees containing $W$ and $u'$ (see Claims~\ref{claim:retro} and~\ref{claim:freenodes}, and the remarks following these).\; 
}	
 \end{procedure}
 \begin{procedure}
\caption{4c(): $\hat r_1,\hat r_2$ and $\hat b$ are in two distinct trees in $T_2'$}\label{fig:twotrees}
Relabel $\hat r_1,\hat r_2,\hat b$ as $\hat u,\hat v_1,\hat v_2$, so that $\hat u$ is the only active leaf in $R\cup B$ in its tree in $T_2'$, and $\hat v_1,\hat v_2$ are in the same tree in  $T_2'$.\;
\eIf {$\hat u$ is the only active leaf in its tree in $T_2'$}
{
	Cut off $\hat u$ in $T_1'$ and make $u$ inactive. \dual{$y_{\hat u}\ot 1$.}\;	\label{alg:alone}
}
{
	Cut off $\hat u$ in $T_2'$ and $T_1'$ and make $u$ inactive. \dual{$y_{\hat u}\ot 1$.}\;\label{alg:notalone}
}	
\eIf{$\hat v_1,\hat v_2$ are active siblings in $T_2'$\label{alg:ifsib}}
{
	Relabel $\hat v_1$ and $\hat v_2$ if necessary so $\hat v_1=\hat b$. See Claim~\ref{claim:rbmerge}.\;
	Merge $\hat v_1$ and $\hat v_2$ (i.e., make $\hat v_1$ inactive to ``merge'' it with $\hat v_2$). \dual{$y_{\hat b}\ot 1$.} .\;\label{alg:rbmerge}
}
{	
	\resolvepair$(\hat v_1,\hat v_2)$.\;\label{alg:lastpair}
	Cut off the last active leaf in $R\cup B$, say $\hat v_2$, in $T_2'$ and $T_1'$ and make $\hat v_2$ inactive. \dual{$y_{\hat v_2}\ot 1$.} See Claim~\ref{realcut}.\;\label{alg:cutp}
	\If { (line~\ref{alg:alone} was executed) {\em{and}} (no {\em FinalCut} or {\em Merge-After-Cut} was executed in line~\ref{alg:lastpair} or lines~\ref{alg:beginresolveB}--\ref{alg:endresolveB}) \label{alg:ifretro2}}
	{
		Find the inactive leaves $u',v'\in R\cup B$ for which the execution of \resolvepair$(u',v')$ cut off set $W\ni \hat u$ in line~\ref{alg:2cutW}. Let $u'$ be the leaf that was deactivated by \resolvepair$(u',v')$.\label{alg:retromerge2}
		Retroactively merge $W$ and $u'$ (see Claims~\ref{claim:retro} and~\ref{claim:freenodes}, and the remarks following these).\;
	}	
}
\end{procedure}


The analysis for the two procedures overlaps in part. In the current subsection, we will give these parts of the analysis, including an explanation of what we mean by retroactively merging $W$ and $u'$ in the last line of the two procedures. The proofs in this section are quite technical, and to maintain the flow of the argument, they have been deferred to Section~\ref{sec:defer}.

In Section~\ref{sec:threetrees} and Section~\ref{sec:twotrees}, we show that a pass through the while-loop that goes through Procedure~\ref{fig:threetrees} and Procedure~\ref{fig:twotrees}, respectively, outputs a valid tuple, and increases the dual objective value by at least half of the number of edges cut from $T_2'$.

Our first claim implies that in line~\ref{alg:cutb} and line~\ref{alg:cutp} the dual objective value will increase by $1$.
\begin{claim}\label{realcut}
Just before line~\ref{alg:cutb} and line~\ref{alg:cutp}, the leaf $\hat b$ and $\hat v_2$, respectively, is not the only active leaf in its tree in $T_2'$. 
\end{claim}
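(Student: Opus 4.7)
The plan is to exhibit, in each of the two cases, an active leaf distinct from $\hat b$ (respectively $\hat v_2$) that is in the same tree of $T_2'$. The central tool is Observation~\ref{obs:fail}(3): at the moment \resolveset$(R)$ returns {\sc Fail}, there is an active leaf $w\not\in R\cup B$ that is not a descendant of $\lca_2(R\cup B)$ in $T_2$, and which lies in the same tree of $T_2'$ as $\hat r_1$ and $\hat r_2$. The strategy is to show that $w$ stays active and in the tree of $T_2'$ containing (the current image of) $\lca_2(R\cup B)$ throughout the subsequent operations, and then to argue that this tree is $\hat b$'s (respectively $\hat v_2$'s) tree.

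First, I would prove a tracking statement: throughout lines~\ref{alg:beginresolveB}--\ref{alg:endresolveB}, $w$ remains active and in the tree of $T_2'$ containing $\lca_2(R\cup B)$. This is established via a case analysis over the branches of \resolvepair$(u,v)$ for $u,v\in B$, using that $w$ is not a descendant of $\lca_2(R\cup B)$ in $T_2$ and that each cut performed by \resolvepair\ is either at $p_2(u)$ (FinalCut, line~\ref{alg:2cut1}) or below $p_2(u)$ on the side opposite the $u$--$v$ path (line~\ref{alg:2cutW}); in each sub-case the side containing both $w$ and $\lca_2(R\cup B)$ is preserved.

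For line~\ref{alg:cutb} (three-tree case, Procedure~\ref{fig:threetrees}): by the tracking statement, $w$ is in the unique tree of $T_2'$ containing $\lca_2(R\cup B)$, which is one of the three trees of $\hat r_1, \hat r_2, \hat b$. I would show this must be $\hat b$'s tree. Right after \resolveset$(R)$ fails, $\hat r_1$ and $\hat r_2$ share the tree with $\lca_2(R\cup B)$, so cuts during lines~\ref{alg:beginresolveB}--\ref{alg:endresolveB} must detach each $\hat r_i$ from $\lca_2(R\cup B)$ in order to land in the three-tree configuration. I would trace the side of each such cut and verify that $\hat b$ always remains on the $\lca_2(R\cup B)$-side---using that every cut pivot $p_2(u)$ is the $p_2$-value of some $u\in B$ and that $\hat b\in B$. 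Hence $\hat b$ shares its tree with $w$ and is not alone.

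For line~\ref{alg:cutp} (non-sibling sub-branch of the two-tree case, Procedure~\ref{fig:twotrees}): an analogous argument shows that the common tree of $\hat v_1,\hat v_2$ just before line~\ref{alg:lastpair} contains $w$. The call \resolvepair$(\hat v_1,\hat v_2)$ performs a line~\ref{alg:2cutW} cut on the side of $p_2(\cdot)$ opposite the $\hat v_1$--$\hat v_2$ path; since $w$ is not a descendant of $\lca_2(R\cup B)$, it is not in the removed subtree. Thus $w$ remains in $\hat v_2$'s tree after the call, so $\hat v_2$ is not alone just before line~\ref{alg:cutp}. The main obstacle is the case analysis behind the tracking statement, especially when the pivot $p_2(u)$ of a \resolvepair\ call is itself an ancestor of $\lca_2(R\cup B)$ in $T_2'$ (which can happen precisely because $w$ is an active leaf sitting above $\lca_2(R\cup B)$ in $u$'s tree); in those sub-cases one needs to verify carefully, using the relabeling conventions of Procedure~\ref{fig:resolvepair}, that the non-path side of $p_2(u)$ does not contain $w$.
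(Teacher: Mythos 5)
Your overall strategy---exhibit the witness $w$ from Observation~\ref{obs:fail}(3), show that it stays active and attached to $\lca_2(R\cup B)$ because every edge deleted after \resolveset$(R)$ fails lies below $\lca_2(R\cup B)$, and then argue that $\hat b$ (resp.\ $\hat v_2$) shares a tree with $w$---is exactly the paper's. But the part you single out as the ``main obstacle'' (tracking $w$ past a pivot $p_2(u)$ that might be an ancestor of $\lca_2(R\cup B)$) is actually the easy part: in line~\ref{alg:2cutW} the relabeling forces $p_2(u)\neq\lca_2(u,v)$, so the pivot is a strict descendant of $\lca_2(u,v)\preceq\lca_2(R\cup B)$ and can never sit above $\lca_2(R\cup B)$, while in a {\it FinalCut} the relabeling in line~\ref{alg:cond1} places the deleted edge in a tree that does not contain $\lca_2(R\cup B)$ at all.

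The genuine gap is in the step you compress into ``trace the side of each cut and verify that $\hat b$ always remains on the $\lca_2(R\cup B)$-side, using that every cut pivot $p_2(u)$ is the $p_2$-value of some $u\in B$.'' That stated reason does not do the work, and the step presupposes something that can fail: no active leaf of $B$ need be in the tree containing $\lca_2(R\cup B)$ when line~\ref{alg:beginresolveB} starts (Observation~\ref{obs:fail}(4) only supplies a leaf $x\notin R$ there, not one in $B$), so there may be no ``$\hat b$ on the $\lca_2(R\cup B)$-side'' to track. Two separate arguments are needed, and both are missing. First, if the $\lca_2(R\cup B)$-tree contains no active $B$-leaf at the start of line~\ref{alg:beginresolveB}, then no call \resolvepair$(u,v)$ with $u,v\in B$ deletes any edge of that tree, so $\hat r_1$ and $\hat r_2$ (which lie in it by Observation~\ref{obs:fail}(2)) would remain in one tree, contradicting the configurations under which line~\ref{alg:cutb}, and the separated sub-case of line~\ref{alg:cutp}, are reached. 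Second, if that tree does contain an active $B$-leaf, one must show the invariant ``the $\lca_2(R\cup B)$-tree contains an active $B$-leaf'' is preserved: the only dangerous operations are those deactivating a leaf $u$ that lies in a different tree from $v$ ({\it FinalCut} or line~\ref{alg:2cut1}), and there the relabeling in line~\ref{alg:cond1} guarantees $\lca_2(u,v)$ is not in $u$'s tree---hence neither is $\lca_2(R\cup B)$, since the path in $T_2$ from $u$ to $\lca_2(R\cup B)$ passes through $\lca_2(u,v)$---so $u$ was never the last $B$-leaf of the $\lca_2(R\cup B)$-tree; the same-tree branches (lines~\ref{alg:2merge}, \ref{alg:2cutW} followed by \ref{alg:cutmerge} or \ref{alg:cutu}) keep $v$ active there. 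Your unproved assertion that the $\lca_2(R\cup B)$-tree ``is one of the three trees of $\hat r_1,\hat r_2,\hat b$'' is precisely the conclusion of this missing argument, which is the heart of the proof.
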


The difficulty in Procedures~\ref{fig:threetrees} and~\ref{fig:twotrees} lies in the case where the condition in line~\ref{alg:ifretro1} or line~\ref{alg:ifretro2}, respectively, evaluates to true. In this case, the dual objective in the current pass through the while-loop has not increased enough to ``pay for'' (half of) the edges that were deleted from $T_2'$. 

Luckily, it turns out that will be able to ``retroactively merge'' two inactive trees in the two forests in this case. 
The next claim allows us to identify two inactive trees, one for which the leaves that were active at the start of the current pass through the while-loop are in $R$ and one for which the leaves that were active at the start of the while-loop are in $B$. Furthermore, we will be able to identify a single call to \resolvepair\ in the algorithm in which these two trees were both cut off from $T_2'$. 

In the following claim, we say that two leaves have the same color, if they are in the same set, either $R$ or $B$.
\begin{claim}\label{claim:retro}
The leaves $u',v'$ in line~\ref{alg:retromerge1} and line~\ref{alg:retromerge2} exist. Furthermore, the set $W$ in line~\ref{alg:retromerge1} and line~\ref{alg:retromerge2} contains only active leaves that have the same color as $\hat r_2$ and $\hat u$, respectively.
\end{claim}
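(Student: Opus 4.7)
My plan is to prove Claim~\ref{claim:retro} in two stages---existence of $(u',v')$ and the color restriction on $W$---with a delicate final case noted as the main obstacle. Write $\hat\ell$ for the ``isolated'' leaf ($\hat r_2$ in Procedure~\ref{fig:threetrees}, $\hat u$ in Procedure~\ref{fig:twotrees}). For existence, note that \preprocess\ (applied at the start of the iteration) guarantees $\hat\ell$ is not alone in its tree in $T_2'$ at that point, while the guarding condition in line~\ref{alg:ifretro1}/\ref{alg:ifretro2} requires $\hat\ell$ to be alone at line~\ref{alg:prev2}/\ref{alg:alone}. That same condition, combined with the fact that \resolveset$(R)$ returned {\sc Fail} (which rules out {\it FinalCut}/{\it Merge-After-Cut} inside \resolveset$(R)$ as well), forbids {\it FinalCut}/{\it Merge-After-Cut} throughout the entire iteration. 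The only remaining $T_2'$-edge removals inside \resolvepair\ are line~\ref{alg:2cutW} and line~\ref{alg:cutu}; the latter removes only the leaf being deactivated, which is not $\hat\ell$. Hence $\hat\ell$ must have been placed into some $W$ by a line~\ref{alg:2cutW} execution inside some call \resolvepair$(u',v')$, giving the required pair.

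Next I would pin down the color class of $(u',v')$ using Observation~\ref{obs:fail}. For Procedure~\ref{fig:threetrees}, $\hat r_1$ and $\hat r_2$ are in the same tree of $T_2'$ at the end of \resolveset$(R)$, and since $\lca_2(\hat r_1,\hat r_2)=\lca_2(R\cup B)$ this tree contains $\lca_2(R\cup B)$; a \resolveset$(R)$-call with $\hat r_2\in W$ would strand $\hat r_2$ in a subtree below $p_2(u')<\lca_2(R\cup B)$, separated from $\lca_2(R\cup B)$ and unable to be reunited by later edge deletions, contradicting Observation~\ref{obs:fail}. So the separating call lies in lines~\ref{alg:beginresolveB}--\ref{alg:endresolveB} and $u',v'\in B$. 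For Procedure~\ref{fig:twotrees}, the same argument gives $u',v'\in B$ when $\hat u\in R$; when $\hat u=\hat b\in B$, a $B$-call producing $\hat b\in W$ would form the inconsistent triplet $\{u',v',\hat b\}$ inside $B$ (by the triplet argument below), contradicting the compatibility of $B$, so $u',v'\in R$.

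For the color restriction on $W$, I invoke the standard inconsistent-triplet argument. Since $u',v'$ are active siblings in $T_1'$, any other active leaf $y$ satisfies $u'v'|y$ in $T_1$; if moreover $y\in W$, then $\lca_2(u',y)=p_2(u')<\lca_2(u',v')$, so $u'y|v'$ in $T_2$, making $\{u',v',y\}$ inconsistent. Letting $C$ denote the common color of $u',v'$, compatibility of $C$ rules out any other active leaf of $W$ being in $C$. To exclude ``outsiders'' (active leaves not in $R\cup B$) and active leaves of the opposite color other than $\hat\ell$: after the separating cut, the tree of $T_2'$ containing $\hat\ell$ has no active color-$C$ leaves, so no future call made during the $C$-processing phase can execute line~\ref{alg:2cutW} inside it; any such off-color $y\in W$ would therefore remain in the same tree as $\hat\ell$ all the way to line~\ref{alg:prev2}/\ref{alg:alone}, contradicting the guarding condition. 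Hence all active leaves of $W$ share $\hat\ell$'s color.

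The main obstacle is the case $\hat u=\hat b$ of Procedure~\ref{fig:twotrees}, where $C=R$ but the subsequent $B$-processing in lines~\ref{alg:beginresolveB}--\ref{alg:endresolveB} can still cut the tree containing $\hat b$ whenever $W$ also contains active $B$-leaves. I would handle this by choosing $(u',v')$ in line~\ref{alg:retromerge2} to be the \emph{last} line~\ref{alg:2cutW} cut in the iteration that places $\hat b$ into a $W$; then no further $T_2'$-cut affects $\hat b$'s tree after this call, and the isolation argument of the previous paragraph restores the conclusion. Confirming that such a last call always exists and that the retroactive-merge step can be consistently specified to pick it is the delicate part of the proof.
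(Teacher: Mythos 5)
Your decomposition differs from the paper's: the paper's proof considers the \emph{first} moment at which the tree in $T_2'$ containing $\hat r_2$ (resp.\ $\hat u$) becomes unicolored in that leaf's own color, and shows via Observation~\ref{obs:bicolor} that this moment can occur neither at line~\ref{alg:beginwhile} nor through a call to \resolvepair\ on a same-colored pair (both would force a {\it FinalCut} or {\it Merge-After-Cut}); the one surviving case then yields existence of $(u',v')$ and the unicoloredness of $W$ simultaneously. Your proposal has two genuine gaps. First, the existence step does not follow from what you wrote: that $\hat\ell$ is not alone at the start of the iteration but alone at line~\ref{alg:prev2}/\ref{alg:alone} does not force $\hat\ell$ to have been placed into a cut-off set $W$, because every other active leaf of $\hat\ell$'s tree can disappear by other means --- deactivation in place by a merge (line~\ref{alg:2merge}), removal of the deactivated partner by line~\ref{alg:cutu} or line~\ref{alg:2cut1}, or being swept into a $W$ that does \emph{not} contain $\hat\ell$. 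Ruling these out is exactly the content of the paper's cases (1) and (2), which need Observation~\ref{obs:bicolor} together with a nontrivial argument that a suitably bicolored tree exists in $T_2'$ at the start of the iteration. (For Procedure~\ref{fig:threetrees} one could instead argue from Observation~\ref{obs:fail} (3) that the outsider $w$ above $\lca_2(R\cup B)$ can never be separated from $\hat r_2$ except by cutting $\hat r_2$ itself away, but you do not make that argument, and it has no analogue when $\hat u=\hat b$.)

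Second, you correctly flag the case $\hat u=\hat b$ of Procedure~\ref{fig:twotrees} as the hard one, but your ``take the last cut that places $\hat b$ into a $W$'' repair does not close it: after that (necessarily $R$-colored) cut, the tree with active leaf set $W$ contains no active $R$-leaves, yet subsequent calls \resolvepair$(u,v)$ with $u,v\in B$ may still execute line~\ref{alg:2cutW} inside this tree, cutting outsiders away from $\hat b$ without ever placing $\hat b$ into a new cut-off set. Hence $\hat b$ can end up as the only active leaf in its tree (so the guard in line~\ref{alg:ifretro2} is satisfied) even though the $W$ identified in line~\ref{alg:retromerge2} contained active leaves outside $R\cup B$, and your isolation argument produces no contradiction in that situation. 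This is precisely what the paper's ``first unicolored moment'' framing is built to handle, and, as you concede, your proposal leaves it unresolved.
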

Note that either $u'\in R$ and $W\subseteq B$, or $u'\in B$ and $W\subseteq R$. In the latter case, we must have that $W$ is a singleton, since the operation of \resolvepair$(u',v')$ must have occurred after completing \resolveset$(R)$. In the former case, the active leaves in $W$ will be merged by calls to \resolvepair\ until only $\hat u$ remains.
Retroactively merging $W$ and $u'$ means that we want to restore the paths connecting them in both forests.
In order to retroactively merge them, we thus need to show that nodes on the paths between these two trees in $T_1'$ and $T_2'$ are not ``used'' to connect two leaves in another active or inactive tree.
This is what is shown in the proof of the following claim.
For a set of leaves $L$, we let $V_1[L]$ be the nodes in $V[L]$ that are in $T_1$, and we let $V_2[L]$ be the nodes in $V[L]$ that are in $T_2$.
\begin{claim}\label{claim:freenodes}
Let $u'$ be the leaf and $W$ the set identified in line~\ref{alg:retromerge1} or line~\ref{alg:retromerge2}, respectively, and let $L_1, \ldots, L_k$ be the leaf sets of the trees in $T_1'$ at this moment, excluding the trees containing $u'$ and $W$, and let $M_1, \ldots, M_{k'}$  be the leaf sets of the trees in $T_2'$ at this moment, excluding the trees containing $u'$ and $W$. 
Then, $V_1[L_j]\cap V_1[\{u'\}\cup W]=\emptyset$ for all $j=1,\ldots, k$ and $V_2[M_{j'}]\cap V_2[\{u'\}\cup W]=\emptyset$ for all $j'=1,\ldots, k'$.
\end{claim}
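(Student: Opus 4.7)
Since both $T_1'$ and $T_2'$ are obtained from $T_1$ and $T_2$ purely by edge deletions---no node is ever contracted or removed---every node of $T_i$ lies in exactly one tree of $T_i'$, and for any leaf set $L$ that happens to be contained in a single tree of $T_i'$ we have that $V_i[L]$ is also contained in that tree. Consequently $V_i[L_j]\cap V_i[L_{j'}]=\emptyset$ for distinct $j\neq j'$, and the claim reduces to showing that every node of $V_1[\{u'\}\cup W]$ lies either in the $T_1'$-tree containing $u'$ or in the $T_1'$-tree containing $W$, together with the (a posteriori) observation that even when some $n\in V_1[\{u'\}\cup W]$ sits physically inside a third tree $L_j$, $n$ is not on any $T_1$-path between two leaves of $L_j$; the analogous reductions hold for $T_2$. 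The plan is to identify exactly which algorithmic cuts fall on the $T_i$-paths between leaves of $\{u'\}\cup W$ and then invoke the retroactive-merge hypotheses of line~\ref{alg:ifretro1} or line~\ref{alg:ifretro2} to rule out the remainder.

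For $T_2'$ I would argue that the only cuts on the relevant $T_2$-paths are the two cuts made during \resolvepair$(u',v')$ itself: the cut below the original $p_2(u')$ towards $W$ in line~\ref{alg:2cutW}, and the cut below the re-evaluated $p_2(u')$ towards $u'$ in line~\ref{alg:cutu}. Because the re-evaluated $p_2(u')$ lies weakly above the original one, these two cuts partition $V_2[\{u'\}\cup W]$ cleanly between the current $u'$- and $W$-trees, with the original $p_2(u')$ landing inside $u'$'s new tree. Subsequent $T_2'$-cuts are excluded as follows: further \resolvepair\ calls for blue pairs in lines~\ref{alg:beginresolveB}--\ref{alg:endresolveB} operate in the tree containing those blue leaves, and neither $u'$'s tree nor $W$'s tree contains any active blue leaf in the retroactive-merge regime (using Claim~\ref{claim:retro}); and the retroactive-merge hypothesis forces each of $\hat r_1,\hat r_2$ to be the only active leaf in its own $T_2'$-tree, so the conditional $T_2'$-cuts in lines~\ref{alg:prev}--\ref{alg:prev2} are skipped altogether.

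For $T_1'$, the only cut performed by \resolvepair$(u',v')$ is the $T_1'$-cut of line~\ref{alg:cutu} at $\lca_1(u',v')$ toward $u'$. I would show that no other $T_1'$-cut of the algorithm can fall on a $T_1$-path between two leaves of $\{u'\}\cup W$: every other cut has the form ``edge below $\lca_1(x,x')$ toward the deactivated $x$'' for some sibling pair $(x,x')$ considered by \resolvepair\ (in \resolveset$(R)$ or in lines~\ref{alg:beginresolveB}--\ref{alg:endresolveB}) or by lines~\ref{alg:prev}--\ref{alg:prev2}, and such an edge can lie on a $T_1$-path between two leaves of $\{u'\}\cup W$ only if the deactivated leaf $x$ itself is one of those leaves. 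The leaf $u'$ is handled exactly by line~\ref{alg:cutu}, while every leaf of $W$ remains active---and is never the deactivated element of any pair---until lines~\ref{alg:prev}--\ref{alg:prev2}, where the cuts at $\lca_1(\hat r_1,\hat r_2)$ go toward the deactivated $\hat r_i$-side and therefore miss the $u'$-to-$W$ path (recall that $W=\{\hat r_2\}$ in the Procedure~\ref{fig:threetrees} case and $W=\{\hat u\}$ in the relevant Procedure~\ref{fig:twotrees} case, by Claim~\ref{claim:retro} combined with the retroactive-merge condition).

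The main obstacle is the subtle subcase in which some subsequent blue \resolvepair\ call cuts off $v'$ (or its current representative) at some $\lca_1(v',b')$ that is strictly above $\lca_1(u',v')$, so that $\lca_1(u',v')$ physically ends up inside a third $T_1'$-tree $T_{v'}$. In this case I plan to argue that nevertheless $\lca_1(u',v')\notin V_1[L_{T_{v'}}]$: a $T_1$-path between two leaves of $L_{T_{v'}}$ can pass through $\lca_1(u',v')$ only if one endpoint is a descendant of its $u'$-side child, but that entire subtree was excised into $u'$'s tree by the line~\ref{alg:cutu}-cut and hence contains no leaf of $L_{T_{v'}}$. Controlling which leaves may have been absorbed into $v'$'s tree by prior regular merges (line~\ref{alg:2merge}) and ruling out problematic ``side-branch'' absorptions requires an invariant-style argument that uses the no-FinalCut/no-Merge-After-Cut retroactive-merge hypothesis together with Claim~\ref{claim:retro}, in the same spirit as the $U$-safeness analysis of Lemma~\ref{lemma:safe} and Lemma~\ref{lemma:safe2}; this is where the bulk of the technical work lies.
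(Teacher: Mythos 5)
Your treatment of $T_2'$ is essentially the paper's: the two cuts made by \resolvepair$(u',v')$ (below the common parent $p_2(u')=p_2(W)$ in line~\ref{alg:2cutW}, and below the re-evaluated parent in line~\ref{alg:cutu}) place all of $V_2[\{u'\}\cup W]$ inside the two trees containing $u'$ and $W$, and those trees are not touched afterwards because the $u'$-tree is inactive and the $W$-tree is unicolored by Claim~\ref{claim:retro}. That half is fine.

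The $T_1'$ half, however, has a genuine gap. You correctly observe that the real danger is not a cut landing on a path between leaves of $\{u'\}\cup W$, but a node of $V_1[\{u'\}\cup W]$ being claimed by $V_1[L_j]$ for a third tree $L_j$ --- and you correctly trace this to merges absorbing such nodes into the subtree represented by some other active leaf. But you then restrict attention to the single node $\lca_1(u',v')$ and explicitly defer ``the bulk of the technical work'' on ruling out side-branch absorptions. That deferred step is the entire content of the paper's proof for $T_1'$, and it is not a routine variant of the $U$-safeness arguments. Note also that since $u'$ and $W$ have different colors (Claim~\ref{claim:retro}), $\lca_1(\{u'\}\cup W)=\lca_1(R\cup B)$, so $V_1[\{u'\}\cup W]$ contains the whole path from $u'$ up to $\lca_1(R\cup B)$ and back down into $V_1[W]$; many nodes besides $\lca_1(u',v')$ are at risk, and each sits below some subtree whose leaves could be merged ``across'' it. The paper closes this by first proving that $V_1[L_j]\cap V_1[\{u'\}\cup W]\neq\emptyset$ forces the existence of a pair $x,y\in L_j$, both active at the time of \resolvepair$(u',v')$, that are later merged with $V_1[\{x,y\}]\cap V_1[\{u'\}\cup W]\neq\emptyset$ (Claim~\ref{claim:xy}), and then splitting on whether $x,y$ have the same color as $u',v'$: in the opposite-color case the merge is impossible because a leaf of $W$ is still active when Procedure~\ref{fig:threetrees} or~\ref{fig:twotrees} begins, and in the same-color case a bicolored-tree bookkeeping argument via Observation~\ref{obs:bicolor} shows the merge would force a {\it FinalCut} or {\it Merge-After-Cut}, contradicting the hypotheses of line~\ref{alg:ifretro1} or~\ref{alg:ifretro2}. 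Until you supply an argument of this kind, the claim is not proved.
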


By the claim, we can merge the trees containing $u'$ and $W$: we reinsert the missing edges connecting the leaves in $V[\{u'\}\cup W]$ in the two forests, each time adding an edge with one endpoint in the tree containing $W$ and the other endpoint in $V[\{u'\}\cup W]$ but not in the tree containing $W$.
Whenever the addition (undeletion) of an edge merges the tree containing $W$ with $L_j$ or $M_{j'}$ for some $j$ or $j'$, there must be a node in $V[\{u'\}\cup W]$ that is incident to an edge, for which the other endpoint is not in $V[\{u'\}\cup W]$. We delete the latter edge, and continue. By the claim, this edge exists, and deleting this edge does not disconnect $L_j$ or $M_{j'}$.
   
Since the number of trees in both forests decreases by $1$ (since the edge that connects the tree containing $W$ and the tree containing $u'$ does not lead to a deleted edge), we have thus shown that the retroactive merge decreases $|E(T_2)\setminus E(T_2')|$ by $1$.


\subsubsection{Analysis of Procedure~\ref{fig:threetrees}}\label{sec:threetrees}

\begin{lemma}\label{lemma:threetrees}
Let $(T_1',T_2',{\cal L}',y)$ be a valid tuple, that has been preprocessed by procedure \preprocess, and 
let $R\cup B$ be a minimal incompatible active sibling set with $\lca_2(R)=\lca_2(R\cup B)$, for which \resolveset$( R )$ returns {\sc Fail}.
If, after executing lines~\ref{alg:beginwhile}--\ref{alg:endresolveB}, the three remaining active leaves are in three distinct trees in $T_2'$, then the tuple $(\tilde T_1',\tilde T_2',\tilde {\cal L}',\tilde y)$ 
after executing lines~\ref{alg:beginwhile}--\ref{alg:endresolveB} followed by Procedure~\ref{fig:threetrees} is valid, and satisfies
\[D(\tilde T_2',\tilde{\cal L}',\tilde y)-D(T_2',{\cal L}',y) \ge \tfrac 12 \big(|E(T_2')\setminus E(\tilde T_2')|\big).\]
\end{lemma}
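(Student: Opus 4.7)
The plan is to check two things in turn: (i) that the tuple produced by Procedure~\ref{fig:threetrees} is a valid tuple, and (ii) that over the whole iteration of the while-loop (lines~\ref{alg:beginwhile}--\ref{alg:endresolveB} followed by Procedure~\ref{fig:threetrees}) the dual objective increases by at least half of the number of edges cut from $T_2'$.

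For validity, all of the combinatorial partition properties are routine by the arguments of Section~\ref{sec:3correct}, the only nontrivial piece being the retroactive merge, which is justified by Claims~\ref{claim:retro} and~\ref{claim:freenodes}: the former identifies a specific inactive tree containing $u'$ that can be merged with the inactive tree that contains $W$, and the latter guarantees that the edges we need to reinsert do not collide with the subtrees of any other inactive or active tree. For dual feasibility during Procedure~\ref{fig:threetrees}, I would invoke Lemma~\ref{lemma:safe2}: just before the procedure the tuple is $\{\hat r_1,\hat r_2\}$-safe and $\{\hat b\}$-safe. Setting $y_{\hat b}=1$ raises the load only on compatible sets that contain $\hat b$ together with another active leaf in $\hat b$'s tree in $T_2'$; because $\hat r_1,\hat r_2$ lie in different trees, any such leaf is outside $\{\hat b\}$, and $\{\hat b\}$-safeness gave load $\le 0$ before. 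The same argument using $\{\hat r_1,\hat r_2\}$-safeness handles $\hat r_1$ and $\hat r_2$. Since $\{\hat r_1,\hat r_2,\hat b\}$ is an inconsistent triplet (Observation~\ref{obs:fail}), no compatible $L$ contains all three, so the three separate load increases of at most $1$ cannot combine on any single constraint into an infeasibility.

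For the accounting, I track $\Delta D-\tfrac 12 \Delta P$. Writing $\Delta D_1,\Delta P_1$ for the cumulative change in lines~\ref{alg:beginwhile}--\ref{alg:endresolveB} and $\Delta D_2,\Delta P_2$ for Procedure~\ref{fig:threetrees}, the bookkeeping of Lemma~\ref{lemma:success}, together with the fact that line~\ref{alg:movedual} has net zero effect on the dual objective, yields $\Delta D_1-\tfrac 12\Delta P_1 \ge -1+\tfrac 12 s$, where $s$ counts the starred operations (per Table~\ref{tab:resolvepair}) executed in lines~\ref{alg:beginresolveB}--\ref{alg:endresolveB}; no starred operation occurred during the $R$-distilling, since otherwise \resolveset$(R)$ would have returned {\sc Success} via line~\ref{alg:lastif3a}. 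By Claim~\ref{realcut}, cutting $\hat b$ contributes exactly $\tfrac 12$ to $\Delta D_2-\tfrac 12\Delta P_2$, while each of $\hat r_1,\hat r_2$ contributes $\tfrac 12$ if it shares its tree in $T_2'$ with another active leaf at the time of cutting, and $0$ otherwise.

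If at least one of $\hat r_1,\hat r_2$ is not alone in its tree, then $\Delta D_2-\tfrac 12\Delta P_2 \ge 1$ and the inequality $\Delta D-\tfrac 12\Delta P\ge 0$ closes immediately. The main obstacle is the remaining case in which both $\hat r_1$ and $\hat r_2$ are the unique active leaves in their respective trees in $T_2'$. If $s\ge 1$ we still finish, since $-1+\tfrac 12+\tfrac 12\ge 0$. Otherwise $s=0$, so the condition in line~\ref{alg:ifretro1} holds and the retroactive merge of line~\ref{alg:retromerge1} fires; by the remark following Claim~\ref{claim:freenodes}, this lowers $|E(T_2)\setminus E(T_2')|$ by exactly one without changing the dual objective, contributing an additional $\tfrac 12$ so that $\Delta D_2-\tfrac 12\Delta P_2 \ge 1$, and the overall inequality again holds. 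The delicate step is precisely to show that this retroactive merge can actually be carried out consistently in both forests so late in the execution: identifying the right pair $(u',v')$ (Claim~\ref{claim:retro}) and verifying that the nodes along the restored paths are not already used by other trees (Claim~\ref{claim:freenodes}).
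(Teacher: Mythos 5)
Your overall structure (validity via the retroactive-merge claims, feasibility via safeness, and the $\Delta D-\tfrac12\Delta P$ bookkeeping with the three cases) matches the paper's proof, and your accounting is correct, including the observation that no starred operation can occur while distilling $R$ and that the $s=0$, both-alone case is exactly when line~\ref{alg:ifretro1} fires. However, there is a genuine gap in your dual feasibility argument. You dismiss the possibility of a combined load increase by noting that no compatible $L$ contains all three of $\hat r_1,\hat r_2,\hat b$. That only rules out an increase of $3$. It does not rule out an increase of $2$: a compatible set $L$ could a priori contain, say, $\hat b$ together with some active $w\in A\setminus\{\hat b\}$ \emph{and} $\hat r_1$ together with some active $w'\in A_1\setminus\{\hat r_1\}$, in which case its load would rise by $2$; since safeness only guarantees load at most $0$ beforehand, this would give load $2>1$ and break feasibility. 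The inconsistent-triplet observation is silent about such $L$, which contain only two of the three remaining leaves.

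The paper closes this with an argument you need and do not supply: at most one of the three trees in $T_2'$ containing $\hat b,\hat r_1,\hat r_2$ can contain $\lca_2(R\cup B)$, so at most one of the remainder sets $A\setminus\{\hat b\}$, $A_1\setminus\{\hat r_1\}$, $A_2\setminus\{\hat r_2\}$ can contain an active leaf $x$ with $uv|x$ in $T_2$ for $u,v\in R\cup B$. Since every active leaf $x\notin R\cup B$ satisfies $uv|x$ in $T_1$ (it is not a descendent of $\lca_1(R\cup B)$), a compatible $L$ containing two leaves of $R\cup B$ and such an $x$ for which $uv|x$ fails in $T_2$ would contain an inconsistent triplet. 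Hence each compatible set can receive a load increase from at most \emph{one} of the three splits, which is the statement you actually need. With this inserted, your proof coincides with the paper's.
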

\begin{proof}
It follows from the remarks about the retroactive merge at the end of the previous subsection that the Procedure~\ref{fig:threetrees} maintains all properties of a valid tuple, except possibly for the feasibility of the modified dual solution. In addition, we need to show that the increase in the dual objective value can ``pay for'' half of the increase in the primal objective value.

As before, we know by Lemma~\ref{lemma:safe2} that the tuple is valid and $\{\hat r_1,\hat r_2\}$-safe and $\{\hat b\}$-safe at the start of Procedure~\ref{fig:threetrees}. We go through the lines of Procedure~\ref{fig:threetrees} and consider the effect on the dual solution.
\begin{itemize}
\item[\ref{alg:cutb}.]
By Claim~\ref{realcut}, the active leaf set $A$ of the tree in $T_2'$ containing $\hat b$ before executing line~\ref{alg:cutb} contains at least one leaf in addition to $\hat b$. Executing this line, decreases $z_{A}$ by $1$ and increases $z_{A\setminus \{\hat b\}}$ and $y_{\hat b}$ by $1$. This increases the load on compatible sets $L$ containing $\hat b$ and some leaf in $A\setminus\{\hat b\}$, and by $\{\hat b\}$-safeness, the load on these sets was at most 0. Hence, the dual solution remains feasible. The objective value of the dual solution increases by $1$.
\item[\ref{alg:prev}--\ref{alg:prev2}.] 
Let $A_1$ and $A_2$ be the active leaves of the tree in $T_2'$ containing $\hat r_1$ and $\hat r_2$, respectively, before executing lines~\ref{alg:prev}--\ref{alg:prev2}. 
The effect of these lines on the dual solution depends on whether $A_1,A_2$ contain other leaves or not. If they do not, then the dual solution is effectively unchanged, since if $A_i=\{\hat r_i\}$, we simply decrease $z_{A_i}$ by $1$ and we increase $y_{\hat r_i}$ by $1$.
If $A_i$ contains some leaf in addition to $r_i$, we also increase $z_{A_i\setminus\{\hat r_i\}}$ by $1$.

Note that the load on a compatible set $L$ increases by these lines, if $L$ contains $\hat r_i$ and a leaf in $A_i\setminus\{\hat r_i\}$ for $i$ equal to 1 and/or 2.
By the fact the dual solution was $\{\hat r_1,\hat r_2\}$-safe at the start of the procedure, we know that the load on $L$ was at most 0 at the start of the procedure.

Furthermore, we claim that a compatible set $L$ can get an increase in its load from only one of the three ``splits'', i.e., either because it contains $\hat b$ and a leaf in $A\setminus \{\hat b\}$, or because it contains $\hat r_1$ and a leaf in $A_1\setminus \{\hat r_1\}$, or because it contains $\hat r_2$ and a leaf in $A_1\setminus \{\hat r_2\}$. 
This is because at most one the trees in $T_2'$ at the start of the procedure contains $\lca_2(R\cup B)$, and hence at most one of $A\setminus\{\hat b\}, A_1\setminus \{\hat r_1\}$ and $A_2\setminus\{\hat r_2\}$ contains leaves $x$ such that $uv|x$ for $u,v\in R\cup B$. If $L$ contains a leaf $x$ such that $uv|x$ does not hold for $u,v\in R\cup B$, then it cannot contain two leaves in $R\cup B$.

Thus, the load increases by at most 1 on any compatible set $L$, and a set for which the load increases had load 0 at the start of the procedure.
The remaining lines do not affect the dual solution, and hence, we have shown that the dual solution remains feasible.
\end{itemize}
We now consider the total change in the primal and dual objective value.
Let $\Delta P_1$ be the number of edges in $E(T_2')\setminus E(\tilde T_2')$  due to lines~\ref{alg:beginwhile}--\ref{alg:endresolveB}, and let $\Delta P_2$ be the number of edges in $E(T_2')\setminus E(\tilde T_2')$  due to lines~\ref{alg:cutb}--\ref{alg:prev2} of Procedure~\ref{fig:threetrees}. 
Similarly, let $\Delta D_1$ be the total change in the dual objective value by lines~\ref{alg:beginwhile}--\ref{alg:endresolveB}, and
$\Delta D_2$ the change in the dual objective due to Procedure~\ref{fig:threetrees}.

As in the proof of Lemma~\ref{lemma:onetree}, we have $\Delta D_1 \ge \tfrac 12 \Delta P_1 -1$. Furthermore, we know that if at least one {\it FinalCut} or {\it Merge-After-Cut} is executed, that $\Delta D_1\ge \tfrac 12\Delta P_1-1 + \tfrac 12$.

Each edge deleted from $T_2'$ by lines~\ref{alg:cutb}--\ref{alg:prev2} contribute the same amount to $\Delta P_2$ as to $\Delta D_2$, so $\Delta P_2=\Delta D_2$, or, equivalently, $\Delta D_2=\tfrac12 \Delta P_2 + \tfrac12 \Delta P_2$.
Letting $c=1$ if at least one {\it FinalCut} or {\it Merge-After-Cut} is executed and 0 otherwise, we thus have that 
$\Delta D_1+\Delta D_2\ge \tfrac 12\left(\Delta P_1+\Delta P_2\right) -1 + \tfrac 12c+\tfrac12\Delta P_2$.

By Claim~\ref{realcut}, we know that $\Delta P_2\ge 1$. Therefore, if $c=1$, or if $\Delta P_2\ge 2$, then 
$\Delta D_1+\Delta D_2\ge \tfrac 12\left(\Delta P_1+\Delta P_2\right)$. 
If neither of those holds, then the retroactive merge ensures that $|E(T'_2)\setminus E(\tilde T_2')|=\Delta P_1+\Delta P_2-1$, and so we again have
$\Delta D_1+\Delta D_2\ge \tfrac 12|E(T'_2)\setminus E(\tilde T_2')|$. 
\end{proof}

\subsubsection{Analysis of Procedure~\ref{fig:twotrees}}\label{sec:twotrees}

The following claim will be needed to show that if $\hat v_1, \hat v_2$ are active siblings in $T_2'$, then the dual solution remains feasible when we set $y_{\hat v_1}$ to $1$. The proof is deferred to Section~\ref{sec:defer}.
\begin{claim}\label{claim:rbmerge}
If $\hat v_1$ and $\hat v_2$ are active siblings in $T_2'$ in line~\ref{alg:ifsib} of Procedure~\ref{fig:twotrees}, then one of them is $\hat b$.
\end{claim}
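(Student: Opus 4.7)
The plan is to proceed by contradiction. Suppose $\hat v_1,\hat v_2$ are active siblings in $T_2'$ at line~\ref{alg:ifsib} with neither being $\hat b$; then $\{\hat v_1,\hat v_2\}=\{\hat r_1,\hat r_2\}$ and $\hat u=\hat b$. Because the cut of $\hat b$ performed in line~\ref{alg:alone} or line~\ref{alg:notalone} only modifies the tree of $T_2'$ that contains $\hat b$, and $\hat b$ is by hypothesis in a tree distinct from the tree of $\hat r_1,\hat r_2$, the pair $\hat r_1,\hat r_2$ must already be active siblings in $T_2'$ \emph{at the start} of Procedure~\ref{fig:twotrees}. Our aim is to show this is impossible.

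To drive the contradiction I would maintain the following invariant during the $B$-processing loop (lines~\ref{alg:beginresolveB}--\ref{alg:endresolveB}): whenever $\hat r_1,\hat r_2$ lie in the same tree $T$ of $T_2'$, the subtree $S^{*}$ of $T_2'$ rooted at $\lca_2(\hat r_1,\hat r_2)=\lca_2(R\cup B)$ inside $T$ contains an active leaf other than $\hat r_1,\hat r_2$. Right after \resolveset$(R)$ returned {\sc Fail}, line~\ref{alg:iflastsib} did not trigger, so $\hat r_1,\hat r_2$ are not active siblings in $T_2'$, establishing the invariant as the base case (this is exactly Observation~\ref{obs:fail}(4)). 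If the invariant persists through every iteration of $B$-processing, then at the start of Procedure~\ref{fig:twotrees} the subtree $S^{*}$ still has an extra active leaf, so $\hat r_1,\hat r_2$ are not active siblings---contradiction.

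The inductive step rests on three structural facts about \resolvepair$(u,v)$ with $u,v\in B$. First, every edge cut made in $B$-processing lies strictly below $\lca_2(R\cup B)=\lca_2(\hat r_1,\hat r_2)$, because after the relabeling in line~\ref{alg:2cutW} we have $p_2(u)<\lca_2(u,v)\le\lca_2(R\cup B)$, and the cuts in lines~\ref{alg:finalcut}/\ref{alg:cutu} are likewise below $p_2(u)$; hence $\hat r_1,\hat r_2,\lca_2(\hat r_1,\hat r_2)$ stay mutually connected whenever $\hat r_1,\hat r_2$ remain in a common tree. Second, by the definition of $p_2(u)$ the subtree on the $u$-side of $p_2(u)$ contains only $u$ as an active leaf, so lines~\ref{alg:finalcut} and~\ref{alg:cutu} remove only $u$ from $S^{*}$, and the $W$ that line~\ref{alg:2cutW} cuts off cannot contain $\hat r_1$ or $\hat r_2$ (else $\hat r_1$ and $\hat r_2$ would land in different trees). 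Third, in the merge cases (line~\ref{alg:2merge} and line~\ref{alg:cutmerge}) the surviving leaf $v$ stays active and, since $\lca_2(u,v)\le\lca_2(\hat r_1,\hat r_2)$, remains in $S^{*}$ as a witness. The main obstacle is the FinalCut/line~\ref{alg:cutu} case when $u$ lies in $\hat r_1,\hat r_2$'s tree while $v$ does not: here $v$ cannot serve as the replacement witness and we need another active leaf in $S^{*}$. I would resolve this by tracing the prior witness---if $x\in B\setminus\{\hat b\}$ was the witness it can only have been removed by an earlier \resolvepair\ call whose surviving partner supplies a new witness in $S^{*}$; if $x\notin R\cup B$ then $x$ stays active throughout $B$-processing and, by Fact~2 above combined with $\hat r_1,\hat r_2$ remaining in a common tree, $x$ cannot be in any $W$ that would have separated it from $S^{*}$. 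Either way a witness is preserved, completing the induction and the proof.
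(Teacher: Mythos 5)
Your overall strategy---assume $\{\hat v_1,\hat v_2\}=\{\hat r_1,\hat r_2\}$, reduce to ``$\hat r_1,\hat r_2$ are already active siblings at the start of Procedure~\ref{fig:twotrees}'', and then refute this by maintaining a third active leaf in the subtree rooted at $\lca_2(\hat r_1,\hat r_2)$ throughout the $B$-processing---is a reasonable route, and your base case via Observation~\ref{obs:fail}(4) matches what the paper uses. However, the inductive step has two genuine gaps. First, your claim that a witness $x\notin R\cup B$ ``cannot be in any $W$'' is unjustified and false in general: line~\ref{alg:2cutW} of \resolvepair\ routinely cuts off sets $W$ consisting of active leaves \emph{outside} $R\cup B$ (that is its purpose), so such an $x$ can absolutely be separated from $S^{*}$. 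The invariant does survive in that situation, but for a different reason you never state: the call that cut off $W$ has $u,v\in B$ active in $\hat r_1,\hat r_2$'s tree, and its surviving leaf $v$ remains active in $S^{*}$ as a new witness.

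Second, and more seriously, your ``main obstacle''---a \emph{FinalCut} removing the last witness $u$ from $\hat r_1,\hat r_2$'s tree while $v$ sits in another tree---is not actually resolved by your witness-tracing, which is circular (tracing back to an earlier removed witness just re-raises the same question one step earlier, and the $x\notin R\cup B$ branch rests on the false claim above). The missing idea is precisely the one the paper's proof is built on: the relabeling in line~\ref{alg:cond1} guarantees that $\lca_2(u,v)$ is \emph{not} in the tree containing $u$; but if $u$ were in the tree containing $\hat r_1,\hat r_2$, that tree would contain $\lca_2(\hat r_1,\hat r_2)=\lca_2(R\cup B)$ and hence, by connectivity, the node $\lca_2(u,v)$ on the path from $u$ to $\lca_2(R\cup B)$ --- a contradiction. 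So a \emph{FinalCut} can never touch $\hat r_1,\hat r_2$'s tree at all. The paper's proof avoids your witness bookkeeping entirely by splitting on whether the tree containing $\hat r_1,\hat r_2$ after \resolveset$(R)$ contains an active leaf of $B$: if not, the tree is never modified and Observation~\ref{obs:fail}(4) already forbids siblinghood; if so, the only operation that could remove the last active $B$-leaf without leaving a successor in the tree is exactly the \emph{FinalCut} excluded above. You should incorporate the line~\ref{alg:cond1} argument; without it the proof does not go through.
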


\begin{lemma}\label{lemma:twotrees}
Let $(T_1',T_2',{\cal L}',y)$ be a valid tuple, that has been preprocessed by procedure \preprocess, and 
let $R\cup B$ be a minimal incompatible active sibling set with $\lca_2(R)=\lca_2(R\cup B)$, for which \resolveset$( R )$ returns {\sc Fail}.
If, after executing lines~\ref{alg:beginwhile}--\ref{alg:endresolveB}, the three remaining active leaves are in two distinct trees in $T_2'$, then the tuple $(\tilde T_1',\tilde T_2',\tilde {\cal L}',\tilde y)$ 
after executing lines~\ref{alg:beginwhile}--\ref{alg:endresolveB} followed by Procedure~\ref{fig:twotrees} is valid, and satisfies
\[D(\tilde T_2',\tilde{\cal L}',\tilde y)-D(T_2',{\cal L}',y) \ge \tfrac 12 \big(|E(T_2')\setminus E(\tilde T_2')|\big).\]
\end{lemma}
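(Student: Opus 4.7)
The plan is to follow the structure of the proof of Lemma~\ref{lemma:threetrees}. First, the first three properties of a valid tuple are immediate from the description of Procedure~\ref{fig:twotrees}, and the last two follow from the arguments in Section~\ref{sec:3correct} combined with the discussion of retroactive merges after Claim~\ref{claim:freenodes}: Claim~\ref{claim:retro} guarantees that the pair $u',W$ in line~\ref{alg:retromerge2} exists and consists of a leaf and a set of leaves of the same ``color,'' and Claim~\ref{claim:freenodes} ensures that the merge can reinsert the missing edges without disconnecting any other tree. What remains is (i) dual feasibility after Procedure~\ref{fig:twotrees} and (ii) the bound $\Delta D\ge\tfrac12\Delta P$. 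By Lemma~\ref{lemma:safe2}, the tuple at the start of Procedure~\ref{fig:twotrees} is valid and both $\{\hat r_1,\hat r_2\}$-safe and $\{\hat b\}$-safe.

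For dual feasibility I would proceed line by line. Lines~\ref{alg:alone} and~\ref{alg:notalone} are handled exactly as the analogous cuts in Lemma~\ref{lemma:threetrees}: if $\hat u$ is the only active leaf in its tree in $T_2'$ the dual is unchanged, and otherwise $z_A$ is replaced by $z_{A\setminus\{\hat u\}}$ and $y_{\hat u}\leftarrow 1$, so the load only rises on sets $L$ containing $\hat u$ together with another active leaf in $A\setminus\{\hat u\}$, where the appropriate safety property (the one covering $\hat u$) gives prior load $\le 0$. For line~\ref{alg:rbmerge}, Claim~\ref{claim:rbmerge} identifies $\hat v_1$ with $\hat b$, so deactivating $\hat v_1$ and setting $y_{\hat b}\leftarrow 1$ is the symmetric merge on the $\hat b$-side and $\{\hat b\}$-safety supplies the bound; a set $L$ whose load was also bumped by a preceding line~\ref{alg:notalone} would have to contain $\hat u$ together with an active leaf in $\hat u$'s tree in $T_2'$ and $\hat b$ together with an active leaf in $\hat b$'s tree, which combined with the inconsistent triplet $\{\hat r_1,\hat r_2,\hat b\}$ and the facts from Observation~\ref{obs:fail} forces $L$ to be incompatible. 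For line~\ref{alg:lastpair} I would verify that deactivating $\hat u$ makes $\{\hat v_1,\hat v_2\}$ a compatible active sibling set in $T_1'$ and that the tuple is $\{\hat v_1,\hat v_2\}$-safe, so that Lemma~\ref{lemma:safe} applies. Finally, line~\ref{alg:cutp} cuts a real edge by Claim~\ref{realcut}, and an inconsistent-triplet argument analogous to the one in Procedure~\ref{fig:threetrees} rules out double-counted load increases.

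For the accounting, I would write $\Delta P_1,\Delta D_1$ for lines~\ref{alg:beginwhile}--\ref{alg:endresolveB} and $\Delta P_2,\Delta D_2$ for Procedure~\ref{fig:twotrees}. Since \resolveset$(R)$ returned {\sc Fail}, the structure of Procedure~\ref{fig:2approxp1} forces that no \emph{FinalCut} or \emph{Merge-After-Cut} was executed inside \resolveset$(R)$ (otherwise its third else-if branch would have returned {\sc Success}), so combining Table~\ref{tab:resolvepair} with the initial $-1$ from line~\ref{alg:initdual} gives $\Delta D_1\ge\tfrac12\Delta P_1-1+\tfrac12 k_B$, where $k_B$ counts starred operations in lines~\ref{alg:beginresolveB}--\ref{alg:endresolveB}. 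I would then case split on whether line~\ref{alg:rbmerge} or lines~\ref{alg:lastpair}/\ref{alg:cutp} are executed, and within the latter on whether a starred operation occurs in line~\ref{alg:lastpair} and on whether line~\ref{alg:alone} or line~\ref{alg:notalone} was taken. In each combination $\Delta P_2$ and $\Delta D_2$ can be read off from Table~\ref{tab:resolvepair} and Claim~\ref{realcut}, and a direct calculation confirms $\Delta D_1+\Delta D_2\ge\tfrac12(\Delta P_1+\Delta P_2)$; the only combination that produces a $\tfrac12$ deficit is exactly the one gated by the condition in line~\ref{alg:ifretro2} (line~\ref{alg:alone} taken, $k_B=0$, no starred operation in line~\ref{alg:lastpair}), where the retroactive merge reduces $|E(T_2)\setminus E(\tilde T_2')|$ by exactly $1$ relative to $\Delta P_1+\Delta P_2$ and closes the deficit.

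The hard part will be the dual-feasibility verification for sequences in which several lines could in principle raise the load on the same compatible set $L$, particularly line~\ref{alg:notalone}--line~\ref{alg:rbmerge} in the case where $\hat v_1,\hat v_2$ are active siblings, and the interaction of line~\ref{alg:notalone}, line~\ref{alg:lastpair}, and line~\ref{alg:cutp} in the other case; ruling these out cleanly will require carefully combining $\{\hat r_1,\hat r_2\}$-safety and $\{\hat b\}$-safety with the compatibility of $L$, the inconsistency of the triplet $\{\hat r_1,\hat r_2,\hat b\}$, and the structural facts from Observation~\ref{obs:fail} about the leaves $w\notin R\cup B$ sitting in the tree containing the remaining active leaves.
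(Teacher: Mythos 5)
Your proposal follows essentially the same route as the paper's proof: validity via Lemma~\ref{lemma:safe2} and the retroactive-merge claims, line-by-line dual feasibility using $\{\hat r_1,\hat r_2\}$- and $\{\hat b\}$-safety together with incompatibility arguments based on the fact that at most one of the relevant trees in $T_2'$ contains $\lca_2(R\cup B)$, and the same $\Delta P/\Delta D$ bookkeeping in which the only deficit case is exactly the one gated by line~\ref{alg:ifretro2} and closed by the retroactive merge. The only cosmetic difference is that for line~\ref{alg:lastpair} the paper argues directly (via Observation~\ref{obs:increase}-style reasoning about which sets gain load) rather than re-establishing $\{\hat v_1,\hat v_2\}$-safety and invoking Lemma~\ref{lemma:safe}, but both routes rest on the same facts and you have correctly identified the delicate multi-line load interactions as the part requiring care.
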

\begin{proof}
We first argue that the tuple after executing Procedure~\ref{fig:twotrees} has all the properties of a valid tuple, except possibly for dual feasibility. The only operation that is executed that we did not see before, is the handling of $\hat v_1,\hat v_2$, which may not both be in $R$ or $B$. However, we have cut off the only other active leaf in $R\cup B$, i.e., $\hat u$, in line~\ref{alg:alone} or line~\ref{alg:notalone}, and we thus know that $\hat v_1,\hat v_2$ are indeed active siblings in $T_1'$ when we merge them or execute \resolvepair$(\hat v_1,\hat v_2)$ in the next lines of the procedure.

By Lemma~\ref{lemma:safe2}, we know that the dual solution is valid and $\{\hat b\}$-safe and $\{\hat r_1,\hat r_2\}$-safe at the start of the procedure.
We now consider the dual solution, by looking at the different ways in which Procedure~\ref{fig:twotrees} may be executed.
We let $\Delta P_2$ be the number of edges in $E(T_2')\setminus E(\tilde T_2')$  due to Procedure~\ref{fig:twotrees}, and let $\Delta D_2$ the change in the dual objective due to Procedure~\ref{fig:twotrees}.
\begin{itemize}
\item
$\hat u$ is cut off, followed by a merge of active sibling pair $\hat v_1$ and $\hat v_2$.

As in lines~\ref{alg:prev}--\ref{alg:prev2} of Procedure~\ref{fig:threetrees}, the effect of cutting off $\hat u$ on the dual solution is effectively zero if $\hat u$ was the only active leaf in its tree in $T_2'$, and otherwise we have deleted one edge from $T_2'$ and we increase the dual objective by $1$. The load is increased by $1$ only on sets $L$ containing $\hat u$ and another active leaf in the tree in $T_2'$ that contained $\hat u$. Note that $\hat u\in \{\hat b,\hat r_1,\hat r_2\}$, and the dual solution thus remains feasible, by the fact that the dual solution was $\{\hat b\}$-safe and $\{\hat r_1,\hat r_2\}$-safe.

We now consider the effect of merging $\hat v_1$ and $\hat v_2$. Recall that $\hat v_1=\hat b$ because of the relabeling and Claim~\ref{claim:rbmerge}. Assume without loss of generality that $\hat v_2=\hat r_2$ and $\hat u=\hat r_1$.
If $A$ is the set of active leaves in the tree in $T_2'$ containing $\hat v_1,\hat v_2$, then the dual is changed by decreasing $z_A$ by $1$, and increasing $y_{\hat b}$ and $z_{A\setminus \{\hat b\}}$ by $1$.
This increases the dual objective by $1$, and it increases the load only on sets $L$ containing both $\hat b$ and a leaf in $A\setminus\{\hat b\}$. The load on such a set $L$ must have been at most 0 at the start of the procedure, by the fact that the dual solution was $\{\hat b\}$-safe.

Finally, we argue that the load on a compatible set cannot increase twice by the above: Suppose it did, then $L$ must contain $\hat u=\hat r_1$ and $\hat b$ and a leaf in $A\setminus \{\hat b\}$ and an active leaf in the tree in $T_2'$ that remains after cutting off $\hat u$.
At the start of the procedure, it cannot have been the case that both the tree in $T_2'$ containing $\hat v_1,\hat v_2$ and the tree in $T_2'$ containing $\hat u$ contained $\lca_2(R\cup B)$. Hence, for one of these trees the leaves $x\not\in B\cup R$ do not satisfy $\hat r_1\hat b|x$ in $T_2$, and thus, such a set $L$ is not compatible. The dual solution therefore remains feasible.

We have that $\Delta P_2$ is either 0 or 1, and $\Delta D_2=1+\Delta P_2$. 
\item
$\hat u$ is cut off, followed by \resolvepair$(\hat v_1,\hat v_2)$, after which $\hat v_1$ is cut off.

Noting as before that executing line~\ref{alg:alone} does not effectively change the dual solution, we have three operations that potentially change the dual solution and increase the load on a set $L$:
\begin{itemize}
\item If line~\ref{alg:notalone} is executed, let $A_{\hat u}$ be the active leaves of the tree in $T_2'$ containing $\hat u$ before cutting off $\hat u$.
Then $z_{A_{\hat u}}$ is decreased by $1$, and $z_{A_{\hat u}\setminus\{\hat u\}}$ and $y_{\hat u}$ are increased by $1$. The load is thus increased only on a set $L$ if it contains $\hat u$, and some leaf in $A_{\hat u}\setminus \{\hat u\}$.
The dual objective value increases by $1$.
\item
To consider the effect of the load on a set $L$ when executing \resolvepair$(\hat v_1,\hat v_2)$, note that $\hat v_1,\hat v_2$ are in the same tree in $T_2'$ and are not active siblings in $T_2'$ when executing \resolvepair$(\hat v_1,\hat v_2)$. Let $\hat v_2$ be the leaf that remains active after executing \resolvepair$(\hat v_1,\hat v_2)$, and let $A_{\hat v_1}\setminus\{\hat v_1\}$ be the set $W$ that is cut off by line~\ref{alg:2cutW} of \resolvepair. 
From the discussion in the proof of Lemma~\ref{lemma:safe}, the load increases only for a set $L$ containing $\hat v_1$ and a leaf in $A_{\hat v_1}\setminus\{\hat v_1\}$.
\item
When executing line~\ref{alg:cutp}, let $A_{\hat v_2}$ be the active leaves in the tree in $T_2'$ containing $\hat v_2$ before cutting of $\hat v_2$.
By Claim~\ref{realcut}, $A_{\hat v_2}\setminus\{\hat v_2\}\neq\emptyset$.
Then $z_{A_{\hat v_2}}$ is decreased by $1$, and $z_{A_{\hat v_2}\setminus\{\hat v_2\}}$ and $y_{\hat v_2}$ are increased by $1$. The load is thus increased on a set $L$ if it contains $\hat v_2$, and some leaf in $A_{\hat v_2}\setminus \{\hat v_2\}$.
The dual objective value increases by $1$.
\end{itemize}
By the fact that the dual solution was $\{\hat b\}$-safe and $\{\hat r_1,\hat r_2\}$-safe, the load on any set $L$ that has its load increased by one or more of the above must have had load 0 at the start of the procedure.  
Suppose there is a set $L$ that has its load increased by more than $1$. Then $L$ must contain two leaves from $\{\hat r_1,\hat r_2,\hat b\}$, and leaves in two of the sets $A_{\hat u}\setminus\{\hat u\}, A_{\hat v_1}\setminus\{\hat v_1\}, A_{\hat v_2}\setminus\{\hat v_2\}$ described above. Now, each of the sets $A_{\hat x}\setminus \{\hat x\}$ for $\hat x=\hat u,\hat v_1,\hat v_2$ are in the same tree in $T_2'$ at the end of the procedure, and at most one of these trees can contain $\lca_2(R\cup B)$. Hence, at most one of these three sets can contain leaves that are not in an inconsistent triplet with two leaves in $R\cup B$. Hence, a set that has its load increased by more than $1$ must be incompatible.
The dual solution therefore remains feasible.

Let $c=1$ if a {\it Merge-After-Cut} was performed in \resolvepair$(\hat v_1,\hat v_2)$ and $c=0$ otherwise, and let $d=1$ if line~\ref{alg:notalone} was executed to cut off $\hat u$ in $T_2'$, and $d=0$ otherwise.
We then have that $\Delta P_2=d+ 2 + (1-c)$: the ``2'' comes from the edge that was deleted from $T_2'$ to cut off $W=A_{\hat v_1}\setminus \{\hat v_1\}$ in \resolvepair$(\hat v_1,\hat v_2)$, and the edge that was deleted to cut off $\hat v_2$. 

We also have that $\Delta D_2 = d+2$: if $d=1$, then line~\ref{alg:notalone} increases the dual objective by $1$; \resolvepair$(\hat v_1,\hat v_2)$, when $\hat v_1,\hat v_2$ are in the same tree but not active siblings in $T_2'$, increases the dual objective value by $1$ (see Table~\ref{tab:resolvepair}); cutting of $\hat v_2$ from its tree in $T_2'$ increased the dual objective value by $1$.

We thus have $\Delta D_2=\tfrac12\Delta P_2+\tfrac 12(1+ c+d)$.
\end{itemize}

We now let $\Delta P_1$ be the number of edges in $E(T_2')\setminus E(\tilde T_2')$  due to lines~\ref{alg:beginwhile}--\ref{alg:endresolveB}, and we let $\Delta D_1$ be the total change in the dual objective value by lines~\ref{alg:beginwhile}--\ref{alg:endresolveB}. 
We let $c'=1$ if at least one {\it FinalCut} or {\it Merge-After-Cut} was executed in  lines~\ref{alg:beginwhile}--\ref{alg:endresolveB}. 
As in the proof of Lemma~\ref{lemma:threetrees}, we have that $\Delta D_1\ge \tfrac 12\Delta P_1-1 + \tfrac 12c'$.

From the discussion above, in the case when the procedure merges $\hat v_1$ and $\hat v_2$ as an active sibling pair in $T_2'$ in line~\ref{alg:rbmerge}, then
$\Delta D_1+\Delta D_2\ge \tfrac 12\Delta P_1-1 + \tfrac 12c' + 1+\Delta P_2 \ge \tfrac12\left(\Delta P_1+\Delta P_2\right)$.

Otherwise, we have
\[\Delta D_1+\Delta D_2\ge \tfrac 12\Delta P_1-1 + \tfrac 12c' + \tfrac 12\Delta P_2+\tfrac12(1+c+d)
\ge \tfrac12\left(\Delta P_1+\Delta P_2\right)-\tfrac12+\tfrac12(c'+c+d).\]

Therefore, if $c'+c+d\ge1$, we we have $\Delta D_1+\Delta D_2\ge\tfrac12\left(\Delta P_1+\Delta P_2\right)$. Furthermore, we note that if $c'+c+d=0$, then the condition in line~\ref{alg:ifretro2} is satisfied, and thus the retroactive merge ensures in this case that $|E(T'_2)\setminus E(\tilde T_2')|=\Delta P_1+\Delta P_2-1$, and we again have
$\Delta D_1+\Delta D_2\ge \tfrac 12|E(T'_2)\setminus E(\tilde T_2')|$. 
\end{proof}

Combining  Lemmas~\ref{lemma:success}, \ref{lemma:onetree}, \ref{lemma:threetrees} and~\ref{lemma:twotrees}, and noting that the dual objective value is a lower bound on the objective value of the optimal solution, we have thus proved our main result:

\setcounter{theorem}{2}
\addtocounter{theorem}{-1}
\begin{theorem}
Algorithm~\ref{fig:2approx} is a 2-approximation algorithm for the Maximum Agreement Forest problem.
\end{theorem}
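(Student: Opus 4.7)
The plan is to assemble the previously established lemmas into a bound of the form $|E(T_2)\setminus E(T_2')|\le 2 D(T_2',{\cal L}',y)$ maintained throughout the Red-Blue Algorithm, and then invoke weak LP duality. I would begin by arguing that the initial tuple $(T_1,T_2,{\cal L},y_0)$ with $y_0\equiv 0$ is valid: the dual solution associated with it has $z_{\cal L}=1$ and all other variables $0$, which is feasible for (D) with objective value $0$, and the primal quantity $|E(T_2)\setminus E(T_2')|$ is likewise $0$. So at the start the invariant $|E(T_2)\setminus E(T_2')|\le 2(D(T_2',{\cal L}',y)+1)$ holds trivially (the $+1$ accounts for the constant in the dual objective).

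Next I would show that each pass through the outer while-loop in Algorithm~\ref{fig:2approx} preserves validity of the tuple, and increases the dual objective by at least half the number of edges newly cut from $T_2'$. The \preprocess\ calls are free: they never cut edges from $T_2'$ and maintain dual feasibility and objective value, as noted just after Procedure~\ref{fig:preprocess}. For the body of the while-loop, there are two cases depending on what \resolveset$(R)$ returns. If it returns {\sc Success}, Lemma~\ref{lemma:success} directly gives the half-edge-per-dual-unit bound for that iteration. If it returns {\sc Fail}, Lemma~\ref{lemma:safe2} shows that after shifting $y_{\lca_1(R)}$ up to $y_{\lca_1(R\cup B)}$ and distilling $B$, the resulting tuple is valid and simultaneously $\{\hat r_1,\hat r_2\}$-safe and $\{\hat b\}$-safe; the three leftover active leaves form an inconsistent triplet spanning one, two, or three trees in $T_2'$, and Lemmas~\ref{lemma:onetree}, \ref{lemma:twotrees}, and~\ref{lemma:threetrees} respectively show that the corresponding Procedure~\ref{fig:onetree}, \ref{fig:twotrees}, or~\ref{fig:threetrees} both produces a valid tuple and yields the required $\Delta D\ge \tfrac12\Delta P$ accounting for this pass. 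Telescoping across iterations, I obtain $|E(T_2)\setminus E(T_2')|\le 2D(T_2',{\cal L}',y)$ at the end, and validity at termination means the final forests $T_1',T_2'$ give an agreement forest (by the last two bullets of the definition of valid, exactly as in Section~\ref{sec:3correct}).

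The termination argument is straightforward: each call to \resolvepair\ or to one of the end-of-iteration procedures makes at least one active leaf inactive, and \preprocess\ only makes further progress, so $|{\cal L}'|$ is strictly decreasing across iterations of the outer loop. When ${\cal L}'=\emptyset$, every tree in $T_1'$ and $T_2'$ is inactive; by validity these partition ${\cal L}$ into the same family of compatible sets, which is a feasible agreement forest whose objective value is exactly $|E(T_2)\setminus E(T_2')|$ (equivalently, one less than the number of trees in $T_2'$). Applying the final bound and weak LP duality to (D), this objective is at most $2 D(T_2',{\cal L}',y)\le 2\cdot \text{OPT}$, giving the claimed 2-approximation.

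The main obstacle, which is already absorbed into the technical lemmas, is the bookkeeping in the {\sc Fail} branch: the initial $-1$ at $y_{\lca_1(R)}$ must be recouped, either by two ``starred'' \resolvepair\ operations yielding $\Delta P=2\Delta D-1$ each, or by the retroactive merges of Claims~\ref{claim:retro} and~\ref{claim:freenodes} that reduce $|E(T_2)\setminus E(T_2')|$ by $1$ without changing the dual. In my write-up I would emphasize only that these have been handled case-by-case, and focus on concatenating the per-iteration inequalities. The slight subtlety I would double-check is that the constant $-1$ in the dual objective is paid for once (by the terminal iteration, in which the last tree becomes inactive), so that the final inequality against OPT is tight up to the factor of $2$.
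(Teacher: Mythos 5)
Your proposal is correct and follows essentially the same route as the paper, which likewise proves the theorem by combining Lemmas~\ref{lemma:success}, \ref{lemma:onetree}, \ref{lemma:threetrees} and~\ref{lemma:twotrees} (each giving $\Delta D\ge\tfrac12\Delta P$ per iteration together with validity) and then invoking weak duality. The only cosmetic wobble is the ``$+1$'' in your initial invariant: since both $|E(T_2)\setminus E(T_2')|$ and $D(T_2',{\cal L}',y)$ start at $0$, telescoping directly yields $|E(T_2)\setminus E(T_2')|\le 2D(T_2',{\cal L}',y)$ with no extra constant to track.
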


}

\section{Implementation Details}\label{sec:impl}
We implemented the Red-Blue approximation algorithm in Java, and tested it on instances with $|{\cal L}| = 2000$ leaves that were generated as follows: the number of leaves in the left subtree is set equal to a number between $1$ and $|{\cal L}|-1$ drawn uniformly at random, and a subset of this size is chosen uniformly at random from the label set. Then this procedure recurses until it arrives at a subtree with only $1$ leaf --- this will be the whole subtree.
 
After generating $T_1$ as described above, the tree $T_2$ was created by doing $50$ random Subtree Prune-and-Regraft operations (where random means that the root of the subtree that is pruned was chosen uniformly at random, as well as the edge which is split into two edges, so that the new node created can be the parent of the pruned subtree, under the conditions that this is a valid SPR-operation). This construction allows us to deduce an upper bound of 50 on the optimal value.  Our algorithm finds a dual solution that in $44\%$ of the 1000 runs is equal to the optimal dual solution, and in $37\%$ of the runs is $1$ less than the optimal solution. The observed average approximation ratio is about $1.92$.
After running our algorithm, we run a simple greedy search algorithm which repeatedly looks for two trees in the agreement forest that can be merged (i.e., 
such that the resulting forest is still a feasible solution to MAF).
The solution obtained after executing the greedy algorithm decreases the observed approximation ratio to less than $1.28$. 
The code is available at~\url{http://frans.us/MAF}.

\section{Conclusion}\label{sec:concl}
\iftoggle{abs}{}{We have shown how to construct an agreement forest for two rooted binary input trees $T_1$ and $T_2$ along with a feasible dual solution to a new LP relaxation for the problem. The objective value of the dual solution is at least half the number of components in the agreement forest.
Since the objective value of any dual solution gives a lower bound on the optimal value, this implies that our algorithm is a 2-approximation algorithm for MAF.
This improves on the previous best approximation guarantee of 2.5 by Shi et al.~\cite{Shietal15}.}

Our algorithm and analysis raise a number of questions. 
First of all, although we believe that, conceptually, our algorithm is quite natural, the actual algorithm is complicated, and it would be interesting to find a simpler 2-approximation algorithm.
Secondly, it is clear that our algorithm can be implemented in polynomial time, but the exact order of the running time is not clear. The bottleneck seems to be the finding of a minimal incompatible active sibling set, although it may be possible to implement the algorithm in a way that simultaneously processes sibling pairs as in \resolvepair, while it is looking for a minimal incompatible active sibling set.

\paragraph*{Acknowledgements}
We thank Neil Olver and Leen Stougie for fruitful discussions.

\bibliography{maf}{}
\iftoggle{abs}{}{\appendix\section{Deferred Proofs}\label{sec:defer}
\setcounter{claim}{2}
\addtocounter{claim}{-1}

\begin{claim}
Just before line~\ref{alg:cutb} and line~\ref{alg:cutp}, the leaf $\hat b$ and $\hat v_2$, respectively, is not the only active leaf in its tree in $T_2'$. 
\end{claim}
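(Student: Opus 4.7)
Plan. I would prove Claim~\ref{realcut} by contradiction, assuming that $\hat b$ (resp.\ $\hat v_2$) is the only active leaf in its tree in $T_2'$ at the indicated moment, and then exhibiting another active leaf that must still lie in the same tree, namely the witness leaf $w\notin R\cup B$ provided by Observation~\ref{obs:fail}, which sits strictly above $\lca_2(R\cup B)$ in $T_2$ and which is active in the tree containing $\hat r_1,\hat r_2$ immediately after \resolveset$(R)$ returns {\sc Fail}.

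The first step is a preliminary observation: since \resolveset$(R)$ returns {\sc Fail}, no \emph{FinalCut} and no \emph{Merge-After-Cut} can have been performed inside \resolveset$(R)$, because otherwise the condition in line~\ref{alg:lastif3a}--\ref{alg:lastif3} (or one of the earlier {\sc Success}-returning branches) would have been triggered and we would not have reached Procedure~\ref{fig:threetrees} or~\ref{fig:twotrees}. Consequently, every edge of $T_2'$ that is cut during the current iteration of the while-loop lies strictly below $\lca_2(R\cup B)$: every \resolvepair$(u,v)$ call with $u,v\in R\cup B$ cuts only edges strictly below $\lca_2(u,v)$, and $\lca_2(u,v)$ is at or below $\lca_2(R\cup B)$. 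In particular, the portion of $T_2'$ connecting $w$ to $\lca_2(R\cup B)$ stays intact throughout the iteration, so $w$ always lies in the unique tree of $T_2'$ whose nodes include $\lca_2(R\cup B)$ and its ancestors.

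The second step is to show that this distinguished tree is the one containing $\hat b$ (resp.\ $\hat v_2$). For Procedure~\ref{fig:threetrees}, I would argue that among the three distinct trees containing $\hat r_1,\hat r_2,\hat b$, the one containing the ancestors of $\lca_2(R\cup B)$ must be the tree of $\hat b$: the cuts made during lines~\ref{alg:beginresolveB}--\ref{alg:endresolveB} all occur strictly below $\lca_2(R\cup B)$, and by tracing, for each such \resolvepair$(u,v)$ with $u,v\in B$, which side of the cut edge contains $\hat b$ versus $\hat r_1,\hat r_2$, one can show that $\hat b$ stays on the side containing the ancestor path out of $\lca_2(R\cup B)$. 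For Procedure~\ref{fig:twotrees}, I would case-split on whether $\hat u=\hat b$ or $\hat u\in\{\hat r_1,\hat r_2\}$; in both subcases, the relabeled $\hat v_2$ ends up on the $w$-side of the tree after the cut of $\hat u$ and the subsequent call to \resolvepair$(\hat v_1,\hat v_2)$.

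The main obstacle I expect is the subcase of Procedure~\ref{fig:twotrees} in which a \emph{Merge-After-Cut} occurs inside \resolvepair$(\hat v_1,\hat v_2)$, collapsing the tree drastically. I plan to handle this by noting that line~\ref{alg:2cutW} cuts an edge strictly below $p_2(\hat v_1)$, and by the relabeling $p_2(\hat v_1)\neq \lca_2(\hat v_1,\hat v_2)$, so the cut is strictly below $\lca_2(\hat v_1,\hat v_2)$, which itself lies at or below $\lca_2(R\cup B)$. Hence $w$, which is not a descendant of $\lca_2(R\cup B)$, cannot lie in the subtree $W$ that gets cut off, and therefore $w$ remains an active leaf in $\hat v_2$'s tree after the Merge-After-Cut, contradicting the assumption that $\hat v_2$ was alone.
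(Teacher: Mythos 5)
Your overall strategy matches the paper's: both arguments hinge on the witness leaf $w\notin R\cup B$ from Observation~\ref{obs:fail}~(3), together with the fact that every edge deleted during the current iteration of the while-loop lies in the subtree of $T_2$ rooted at $\lca_2(R\cup B)$, so that $w$ stays attached to $\lca_2(R\cup B)$ throughout. (Your preliminary observation that \resolveset$(R)$ returning {\sc Fail} rules out {\it FinalCut}s inside \resolveset$(R)$ is true but is not what makes this work: even a {\it FinalCut} deletes only an edge below $\lca_2(u,v)$, and {\it FinalCut}s can still occur while $B$ is being distilled.)

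The gap is in your second step, which is where the real work lies and which you dispose of with ``by tracing \ldots one can show'' and ``in both subcases \ldots ends up on the $w$-side.'' What must be proved is that the leaf about to be cut in line~\ref{alg:cutb} (resp.\ line~\ref{alg:cutp}) lies in the tree of $T_2'$ containing $\lca_2(R\cup B)$. When $\hat r_1,\hat r_2$ are still together this is immediate, since $\lca_2(\hat r_1,\hat r_2)=\lca_2(R\cup B)$; but when they are separated (always the case in Procedure~\ref{fig:threetrees}, possible in Procedure~\ref{fig:twotrees}) you must show that $\hat b$ is in that tree, and your phrasing ``$\hat b$ stays on the side containing the ancestor path'' presupposes that $\hat b$ was ever in that tree, which need not hold. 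The paper's argument is a dichotomy: if the tree containing $\lca_2(R\cup B)$ has an active $B$-leaf when the distilling of $B$ begins, it retains one throughout --- the crucial point being that a {\it FinalCut} in \resolvepair$(u,v)$ with $u,v\in B$ deactivates the leaf whose tree does \emph{not} contain $\lca_2(u,v)$ (by the relabeling in line~\ref{alg:cond1}), and since $\lca_2(u,v)$ lies on the path from $u$ to $\lca_2(R\cup B)$, the $B$-leaf sitting in the $\lca_2(R\cup B)$-tree is never the one removed; hence $\hat b$ ends up there. If instead that tree has no active $B$-leaf at the start, then none of its edges are deleted while $B$ is distilled, so by Observation~\ref{obs:fail}~(2) the leaves $\hat r_1,\hat r_2$ would still be together, contradicting the case hypothesis. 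Without this dichotomy and the {\it FinalCut}-relabeling argument, it is consistent with everything you have written that $\hat b$ sits alone in a tree strictly below $\lca_2(R\cup B)$ --- exactly the situation the claim must exclude. Your treatment of the {\it Merge-After-Cut} inside \resolvepair$(\hat v_1,\hat v_2)$ is correct, but it only becomes relevant once this missing step is supplied.
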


\begin{proofof}{Claim~\ref{realcut}}
We consider $T_2'$ in line~\ref{alg:triplet}, and we note that $\hat r_1,\hat r_2$ and $\hat b$ are contained in multiple trees in $T_2'$.
We show that, if $\hat r_1,\hat r_2$ are in the same tree in $T_2'$, then that tree contains at least one active leaf $w$ that is not a descendent of $\lca_2(R\cup B)$. Otherwise, we show that the tree in $T_2'$ containing $\hat b$ contains at least one active leaf $w$ that is not a descendent of $\lca_2(R\cup B)$.
This suffices to prove the claim, since the leaf $w$ will be in the same tree as $\hat b$ if line~\ref{alg:cutb} is executed, or in the same tree as $\hat v_2$ if line~\ref{alg:cutp} is executed.

First, suppose $\hat r_1$ and $\hat r_2$ are in the same tree in $T_2'$. Then this tree contains $\lca_2(R\cup B)=\lca_2(\hat r_1,\hat r_2)$. It then follows from Observation~\ref{obs:fail} (3) and the fact that lines~\ref{alg:beginresolveB}--\ref{alg:endresolveB} cannot delete an edge in $T_2'$ above $\lca_2(R\cup B)$ that the tree containing $\hat r_1,\hat r_2$ contains a leaf  $w$ that is not a descendent of $\lca_2(R\cup B)$.

Otherwise, suppose $\hat r_1$ and $\hat r_2$ are in different trees in $T_2'$. We will show that this implies that $\hat b$ must be in the same tree as $\lca_2(R\cup B)$, and thus that the tree containing $\hat b$ in $T_2'$ contains at least one leaf $w$ that is not a descendent of $\lca_2(R\cup B)$, by the same reasonining as above. 
First of all, note that if the tree containing $\lca_2(R\cup B)$ contains at least one active leaf in $B$ at the start of line~\ref{alg:beginresolveB}, then the same holds after line~\ref{alg:endresolveB}: this is because otherwise there must be a {\it FinalCut} that cuts off the last leaf, say $u$, in $B$ from this tree in a call to \resolvepair$(u,v)$, with $v\in B$. But since $\lca_2(u,v)$ is on the path from $u$ to $\lca_2(R\cup B)$, the tree containing $u$ and $\lca_2(R\cup B)$ contains $\lca_2(u,v)$, and thus we would have relabeled $u$ and $v$ and cut off the other leaf in the pair, by line~\ref{alg:cond1} of \resolvepair. 
Thus $\hat b$ must be in the same tree as $\lca_2(R\cup B)$, unless this tree contained no active leaves in $B$ at the start of line~\ref{alg:beginresolveB}. But in the latter case, no edges are cut from the tree containing $\lca_2(R\cup B)$ by lines~\ref{alg:beginresolveB}--\ref{alg:endresolveB}, and thus, using Observation~\ref{obs:fail} (2), $\hat r_1$ and $\hat r_2$ would still be in the same tree in $T_2'$, contradicting our assumption.
\end{proofof}


\begin{claim}
The leaves $u',v'$ in line~\ref{alg:retromerge1} and line~\ref{alg:retromerge2} exist. Furthermore, the set $W$ in line~\ref{alg:retromerge1} and line~\ref{alg:retromerge2} contains only active leaves that have the same color as $\hat r_2$ and $\hat u$, respectively.
\end{claim}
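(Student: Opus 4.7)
The plan is to identify $(u', v', W)$ as the arguments and cut-off set of the specific \resolvepair\ call, executed during the current iteration of the main while loop, that isolates $\hat r_2$ (for line~\ref{alg:retromerge1}) or $\hat u$ (for line~\ref{alg:retromerge2}) into a singleton tree in $T_2'$, and then to show that this call automatically satisfies both the existence and the color property asserted by the claim.

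First I would establish that such a witnessing call exists and is of the form required. Under the hypotheses triggering the retroactive merge, the only way $T_2'$ changes in the current iteration is through line~\ref{alg:2cutW} followed by line~\ref{alg:cutu} inside some \resolvepair\ call: \resolveset$(R)$ returns {\sc Fail} only when no {\em FinalCut} or {\em Merge-After-Cut} was executed during it, and the hypotheses in lines~\ref{alg:ifretro1} and~\ref{alg:ifretro2} forbid {\em FinalCut} and {\em Merge-After-Cut} in the post-\resolveset\ processing as well. Moreover, \preprocess\ at the end of the previous iteration guarantees that $\hat r_2$ (resp.\ $\hat u$) was not alone in its tree in $T_2'$ at the start of the current iteration, yet by the time we reach line~\ref{alg:prev} of Procedure~\ref{fig:threetrees} (resp.\ line~\ref{alg:alone} of Procedure~\ref{fig:twotrees}) it is alone. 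Hence some \resolvepair\ call along the way must perform the cut that reduces its tree to a singleton; this is the witnessing call.

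The witnessing call splits the tree containing $u'$ and $v'$ into three pieces: the cut-off subtree with active-leaf set $W$, the singleton $\{u'\}$, and the ``rest'' containing $v'$. Writing $\ell \in \{\hat r_2, \hat u\}$ for the leaf to be isolated, $\ell$ must land in the piece that equals $\{\ell\}$. The possibility $\ell = u'$ is excluded because $\ell$ remains active while $u'$ is deactivated by the call. The possibility $\ell = v'$ with ``rest''${} = \{\ell\}$ forces the pre-cut active-leaf set of the tree to be exactly $\{u', v'\} \cup W$, so after line~\ref{alg:2cutW} only $u'$ and $v'$ remain active in that tree; these two are then necessarily active siblings in $T_2'$, triggering {\em Merge-After-Cut} and contradicting the hypothesis. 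Therefore $\ell \in W$ with $W = \{\ell\}$, which at once proves existence and the color property, since $W = \{\ell\}$ is a single active leaf trivially of the same color as $\ell$.

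Finally, the color restrictions on $(u', v')$ demanded by the two procedures are satisfied. For Procedure~\ref{fig:threetrees}, which requires $u', v' \in B$, Observation~\ref{obs:fail} implies that $\hat r_1$ and $\hat r_2$ lie in the same tree in $T_2'$ at the end of \resolveset$(R)$, so the witnessing call cannot occur during \resolveset$(R)$; it must lie in lines~\ref{alg:beginresolveB}--\ref{alg:endresolveB}, where every \resolvepair\ call is on a pair from $B$. For Procedure~\ref{fig:twotrees}, which only requires $u', v' \in R \cup B$, this holds automatically whether the witnessing call happens during \resolveset$(R)$ or during the processing of $B$. The main obstacle I expect is the rule-out of $\ell = v'$ in Procedure~\ref{fig:twotrees}, where $\ell$ and $v'$ can a priori share the same color, so the clean color-mismatch argument available in Procedure~\ref{fig:threetrees} does not apply and the {\em Merge-After-Cut} argument is essential.
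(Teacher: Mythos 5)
There is a genuine gap, and it is not the one you flag at the end of your proposal. Your argument hinges on locating a single \resolvepair\ call whose cut leaves $\hat r_2$ (resp.\ $\hat u$) as the only active leaf of its tree in $T_2'$, from which you conclude $W=\{\ell\}$. But the reduction of that tree to an active singleton need not be accomplished by a cut at all: the witnessing cut can remove a set $W$ containing $\ell$ together with \emph{several} other active leaves of $\ell$'s color, and those extra leaves are subsequently deactivated by plain merges (line~\ref{alg:2merge}), which do not delete any edge of $T_2'$. Concretely, in Procedure~\ref{fig:twotrees} with $\hat u=\hat b$, a call \resolvepair$(u',v')$ with $u',v'\in R$ made during \resolveset$(R)$ may cut off $W=\{b_1,b_2\}\subseteq B$ in line~\ref{alg:2cutW}; later, $b_1$ and $b_2$ become active siblings in both forests and are merged in lines~\ref{alg:beginresolveB}--\ref{alg:endresolveB}, leaving $\hat u=b_2$ alone in its tree. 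Here the operation that isolates $\hat u$ is a merge, so your search for an ``isolating cut'' finds nothing, and the true witnessing set $W$ is not a singleton --- exactly the situation the paper acknowledges in the remark following the claim (``the active leaves in $W$ will be merged by calls to \resolvepair\ until only $\hat u$ remains''). Consequently both halves of your argument fail: the existence step (``some call must perform the cut that reduces its tree to a singleton'') is unsound, and the conclusion $W=\{\ell\}$ is simply false in general.

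The paper's proof avoids this by reasoning about colors rather than cardinalities: it considers the \emph{first} moment at which the tree of $T_2'$ containing $\ell$ becomes unicolored in $\ell$'s color, and distinguishes whether this already holds when $R$ and $B$ are defined, is achieved by a \resolvepair\ call on a same-colored sibling pair, or is achieved by a call on an opposite-colored pair. The first two cases are shown --- using the guarantee from \preprocess\ that a suitable bicolored tree exists together with Observation~\ref{obs:bicolor} --- to force a {\it FinalCut} or {\it Merge-After-Cut}, contradicting the hypotheses of lines~\ref{alg:ifretro1} and~\ref{alg:ifretro2}; the third case yields precisely a cut of a set $W\ni\ell$ all of whose active leaves share $\ell$'s color, which is what the claim asserts. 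If you replace ``active singleton'' by ``unicolored'' throughout your outline you are led to essentially this argument. The parts of your proposal that do survive are the observation that, under the stated hypotheses, only line~\ref{alg:2cutW} followed by line~\ref{alg:cutu} can modify $T_2'$ inside \resolvepair, and the use of Observation~\ref{obs:fail} to place the witnessing call for Procedure~\ref{fig:threetrees} inside the processing of $B$.
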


\begin{proofof}{Claim~\ref{claim:retro}}
To simplify notation, we will let $x$ denote $\hat r_2$ or $\hat u$ respectively. Let $X$ be the color of $x$, i.e., $X$ is $R$ or $B$.
We say an active tree in $T_2'$ is $X$-bicolored tree if it contains active leaves in $X$ and active leaves that are not in $X$, and we will say it is an $X$-unicolored tree if all its active leaves are in $X$.
Consider the first moment when the tree containing $x$ in $T_2'$ is $X$-unicolored, i.e., when all active leaves in the tree in $T_2'$ containing $x$ are in $X$. 
There are three options: (1) this is already the case at the point where the current sets $R$ and $B$ were defined, i.e., in line~\ref{alg:beginwhile}, 
(2) this is achieved through a call to \resolvepair\ for an active sibling pair $\tilde u, \tilde v \in X$, and (3) this is achieved through a call to \resolvepair\ for an active sibling pair $\tilde u, \tilde v \not\in X$.

We will show that if (1) or (2) happens, then Algorithm~\ref{fig:2approx} must have executed at least one {\it FinalCut} or {\it Merge-After-Cut} in some call to \resolvepair\ in lines~\ref{alg:beginwhile}--\ref{alg:endresolveB}. Since this contradicts the condition in line~\ref{alg:ifretro1}, respectively line~\ref{alg:ifretro2}, it must be the case that (3) happens, and this is exactly what is stated in the claim.

For (2), note that a call to \resolvepair$(\tilde u,\tilde v)$ where $\tilde u,\tilde v\in X$ are an active sibling pair in $T_1'$ can only change a $X$-bicolored tree into an $X$-unicolored tree if it cuts off a set of leaves that are not in $X$ and what remains is a tree containing only active leaves in $X$, i.e., either line~\ref{alg:finalcut}  in \resolvepair, or line~\ref{alg:2cutW} followed by line~\ref{alg:cutmerge}. Hence, in case (2) the call to \resolvepair$(\tilde u, \tilde v)$ executes a {\it FinalCut} or a {\it Merge-After-Cut}.

For (1), note that the tree containing $x$ in $T_2'$ in line~\ref{alg:beginwhile} must contain at least one other active leaf in $X$, say $x'$, since $x$ was not deactivated by the preprocessing. Since we also did not merge these leaves in \preprocess, they must not have been active siblings in $T_1'$. Hence, there must have been some other tree in $T_2'$ in line~\ref{alg:beginwhile} that contains $x''\in X$ such that $x'x''|x$ in $T_1$ or $xx''|x'$ in $T_1$. Repeating this argument if necessary shows that there exists an $X$-bicolored tree in $T_2'$ at the start of the while-loop. Furthermore, the argument also shows that there exists such an $X$-bicolored tree, say with active leaf set $A$, such that $\lca_2(A)$ is a strict descendent of $\lca_2(x,x')$, and hence also a strict descendent of $\lca_2(R\cup B)$, in the original tree $T_2$.

We now show that there must have been at least one {\it FinalCut} or {\it Merge-After-Cut} executed on the leaves in $A\cap (R\cup B)$.

First, suppose $A$ contains leaves in $R$, i.e., the tree in $T_2'$ containing $A$ at the start of the while-loop is $R$-bicolored. Note that $\hat r_1\not\in A, \hat r_2\not\in A$: when \resolveset$(R)$ fails, $\hat r_1$ and $\hat r_2$ are in the same tree in $T_2'$, and hence, this must have also been true at the start of the while-loop. But since $\lca_2(\hat r_1,\hat r_2)=\lca_2(R\cup B)$, this means that $\hat r_1, \hat r_2$ cannot both be in $A$, as this contradicts the fact that $\lca_2(A)$ is a strict descendent of $\lca_2(R\cup B)$. 
But if $A$ does not contain $\hat r_1,\hat r_2$, then all leaves in $A\cap R$ will be inactive after executing \resolveset$(R)$. Hence, by Observation~\ref{obs:bicolor}, at least one {\it FinalCut} or {\it Merge-After-Cut} must have been performed during \resolveset$(R)$.  

Now, suppose $A$ contains no leaves in $R$. Then $X$ must be $B$. The execution of \resolveset$(R)$ does not affect this tree, and since the leaves $x,x'$ are merged at some point during lines~\ref{alg:beginresolveB}--\ref{alg:endresolveB}, the leaves in $A\cap B$ must all be deactivated by lines~\ref{alg:beginresolveB}--\ref{alg:endresolveB}.  Hence, we can again invoke Observation~\ref{obs:bicolor}, at least one {\it FinalCut} or {\it Merge-After-Cut} must have been performed in lines~\ref{alg:beginresolveB}--\ref{alg:endresolveB}.
\end{proofof}


\begin{claim}
Let $u'$ be the leaf and $W$ the set identified in line~\ref{alg:retromerge1} or line~\ref{alg:retromerge2}, respectively, and let $L_1, \ldots, L_k$ be the leaf sets of the trees in $T_1'$ at this moment, excluding the trees containing $u'$ and $W$, and let $M_1, \ldots, M_{k'}$  be the leaf sets of the trees in $T_2'$ at this moment, excluding the trees containing $u'$ and $W$. 
Then, $V_1[L_j]\cap V_1[\{u'\}\cup W]=\emptyset$ for all $j=1,\ldots, k$ and $V_2[M_{j'}]\cap V_2[\{u'\}\cup W]=\emptyset$ for all $j'=1,\ldots, k'$.
\end{claim}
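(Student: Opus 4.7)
My plan is to track how the forests $T_1'$ and $T_2'$ evolve between time $\tau$ (right after the call $\resolvepair(u',v')$ that cut off $W$) and time $\tau^*$ (the moment of the retroactive merge), exploiting the restrictive trigger conditions of line~\ref{alg:ifretro1} and line~\ref{alg:ifretro2}. The aim is to maintain the stronger invariant that every node of $V_1[\{u'\}\cup W]$ lies in the node set of the tree containing $u'$ or a tree containing a leaf of $W$ (and analogously in $T_2'$); disjointness with $V_1[L_j]$ for any other $L_j$ then follows because distinct trees in a forest have pairwise disjoint node sets and $V_1[L_j] \subseteq V(L_j)$.

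I would first establish the base case at $\tau$. Just before the call, the single-active-tree invariant places every leaf of $\{u'\}\cup W$ in one tree $T_1^*$ of $T_1'$, and the semantics of $\resolvepair$ (which performs line~\ref{alg:2cutW} within the common tree of $u',v'$ in $T_2'$) implies that $u',v'$, and $W$ share a single tree $T_2^*$ in $T_2'$. Since every path in $T_i$ between two leaves of $\{u'\}\cup W$ lies within $T_i^*$, we have $V_i[\{u'\}\cup W] \subseteq V(T_i^*)$. The two cuts performed by the call split each $T_i^*$ into components, each of which contains $u'$ or a leaf of $W$, so the invariant holds at $\tau$.

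I would then argue inductively over every operation in the interim. The trigger conditions force that no $\resolvepair$ in the relevant window executes a \emph{FinalCut} or \emph{Merge-After-Cut}, so each such call is either a simple merge (line~\ref{alg:2merge}) or a cut-and-separate (line~\ref{alg:2cutW} followed by line~\ref{alg:cutu}). Simple merges do not change the forest. Cut-and-separate operations and the individual-leaf cuts of Procedures~\ref{fig:threetrees}/\ref{fig:twotrees} each isolate a single (possibly merged) subtree $S$. For each such cut I must show that if any $v \in V_1[\{u'\}\cup W]$ winds up in $S$, then $S$ contains $u'$ or a leaf of $W$. The key structural fact is that a simple line~\ref{alg:2merge} merge requires active-sibling status in \emph{both} $T_1'$ and $T_2'$, so the leaves accumulated onto any active leaf $\ell$ before its eventual cut lie in a coordinated geometric region of $T_1$ and $T_2$. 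Combined with the earlier cut of $u'$ in $T_1'$ at $\tau$ (which has already severed the $u'$-branch at $\lca_1(u',v')$ from the active tree), this prevents $\ell$'s merged subtree from ever including leaves on both sides of the splitting nodes of $V_1[\{u'\}\cup W]$, and hence prevents those splitting nodes from lying in $V_1[L_S]$. The corresponding argument in $T_2'$ is parallel, using that $W$ was cut off as a single subtree in $T_2'$ at $\tau$ and the rigid form of subsequent line~\ref{alg:2cutW} cuts.

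The hardest part, I expect, is controlling the case in which the cut leaf $\ell$ has absorbed a chain of line~\ref{alg:2merge} merges, so the cut subtree $S$ reaches across nontrivial portions of $T_1$. To rule out that $V_1[L_S]$ meets $V_1[\{u'\}\cup W]$ at an internal node such as $\lca_1(u',v')$ (which can migrate into $S$ when $\ell = v'$ or a descendant of $v'$'s merged representative), I would exploit the double-active-sibling requirement of line~\ref{alg:2merge} together with the no-\emph{Merge-After-Cut}/-\emph{FinalCut} restriction to show that every leaf absorbed into $\ell$ lies strictly on $v'$'s side of $\lca_1(u',v')$, so its contribution to the Steiner tree of $L_S$ never crosses into $V_1[\{u'\}\cup W]$. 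The analogous geometric confinement in $T_2'$ will follow from the fact that, by the triggering conditions and Claim~\ref{claim:retro}, the leaves of $W$ remain in a unicolored region whose interaction with the active tree is tightly restricted.
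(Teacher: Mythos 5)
Your skeleton matches the paper's: the $T_2'$ half (the tree containing $u'$ goes inactive, the tree containing $W$ is unicolored and hence untouched by later calls) is essentially the paper's argument, and your inductive invariant for $T_1'$ is the contrapositive of the paper's key reduction (its Claim~\ref{claim:xy}): a node of $V_1[\{u'\}\cup W]$ can only enter a foreign tree when some later \resolvepair$(x,y)$ \emph{merges} a pair with $V_1[\{x,y\}]\cap V_1[\{u'\}\cup W]\neq\emptyset$. The gap is in how you exclude such merges. Your proposed mechanism --- that every leaf absorbed into a representative lies strictly on $v'$'s side of $\lca_1(u',v')$ --- only guards the node $\lca_1(u',v')$, but $V_1[\{u'\}\cup W]$ contains the entire path from $u'$ up through $\lca_1(R\cup B)$ and down to the leaves of $W$. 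A dangerous merge can be of a pair $x,y$ far from $v'$ whose $T_1$-lca sits anywhere on that path (e.g.\ an ancestor of $u'$ strictly between $\lca_1(u',v')$ and $\lca_1(R)$, with $x$ hanging off it on the side away from $u'$); no "geometric confinement" around $v'$ addresses this, and the double-active-sibling property of line~\ref{alg:2merge} alone does not forbid it.

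What actually closes this case in the paper is a two-pronged argument you do not supply. If $x,y$ have the \emph{opposite} color to $u',v'$, then $W$ (same color as $x,y$ by Claim~\ref{claim:retro}) must be a descendent of $\lca_1(x,y)$, which is impossible while a leaf of $W$ is still active --- and it is, until Procedure~\ref{fig:threetrees}/\ref{fig:twotrees} begins. If $x,y$ have the \emph{same} color, the obstruction is not in $T_1$ at all: one sets $\tilde U$ to be the active same-colored leaves below $\lca_1(x,y)$, observes that $u'$, $v'$ and $W$ all lie in a single $\tilde U$-bicolored tree of $T_2'$ at the time of \resolvepair$(u',v')$, and that by the time $x$ and $y$ are merged all of $\tilde U$ is deactivated or in $\tilde U$-unicolored trees; Observation~\ref{obs:bicolor} then forces a {\it FinalCut} or {\it Merge-After-Cut} in between, contradicting the trigger conditions of lines~\ref{alg:ifretro1}/\ref{alg:ifretro2}. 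You cite the no-{\it FinalCut}/{\it Merge-After-Cut} hypothesis as an ingredient, but the link from a bad merge in $T_1'$ to a forced starred operation in $T_2'$ is the heart of the proof and is absent from the proposal; without it the induction step cannot be completed.
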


\begin{proofof}{Claim~\ref{claim:freenodes}}
We first consider $T_2'$. Let $v'$ be such that the execution of \resolvepair$(u',v')$ cut off set $W$ in $T_2'$. Since no {\it Merge-After-Cut} was executed, we know that $u'$ was also cut off by \resolvepair$(u',v')$.
Note that before this execution of \resolvepair$(u',v')$, $p_2(u')=p_2(W)$; we will refer to this node as $p$. The edge below $p$ was deleted to cut off $W$ in line~\ref{alg:2cutW}. Then, in line~\ref{alg:cutu} the edge below the new parent of $u'$ was deleted. Therefore, $p$ is in the tree in $T_2'$ containing $u'$, and  thus all nodes in $T_2$ in $V[\{u'\}\cup W]$ are either in the tree in $T_2'$ containing $W$ or the tree containing $u'$. After \resolvepair$(u',v')$, the tree containing $u'$ in $T_2'$ is inactive, and for the tree in $T_2'$ containing $W$, the active leaves are either all in $R$ or all in $B$ by Claim~\ref{claim:retro}. Therefore, the executions of \resolvepair\ until we reach line~\ref{alg:retromerge1} or line~\ref{alg:retromerge2}, respectively, do not affect the edge set of the tree in $T_2'$ containing $W$.  Hence for the nodes in $T'_2$ we can conclude that $V_2[M_{j'}]\cap V_2[\{u'\}\cup W]=\emptyset$ for all $j'=1,\ldots, k'$.

Now consider $T_1'$. Suppose, by means of contradiction that $V_1[L_{j}]\cap V_1[\{u'\}\cup W] \neq \emptyset$ for some $L_j$. 
We have the following claim.

\setcounter{claim}{6}
\addtocounter{claim}{-1}
\begin{claim}\label{claim:xy}
If $V_1[L_{j}]\cap V_1[\{u'\}\cup W] \neq \emptyset$, then there exist $x,y\in L_j$ that are active at the moment that \resolvepair$(u',v')$ was executed, such that
$V_1[\{x,y\}]\cap  V_1[\{u'\}\cup W]\neq \emptyset$ and there was some execution of \resolvepair$(x,y)$ after \resolvepair$(u',v')$ in which $x$ and $y$ are merged.
\end{claim}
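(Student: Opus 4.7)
My plan is to establish Claim~\ref{claim:xy} in three steps. Let $L_j^*\subseteq L_j$ denote the subset of leaves of $L_j$ that are active at the moment of \resolvepair$(u',v')$; I will argue in turn (i) $|L_j^*|\ge 2$, (ii) $V_1[L_j^*]\cap V_1[\{u'\}\cup W]\neq\emptyset$, and (iii) one of the merges that subsequently reduces $L_j^*$ to a single active leaf yields the pair $x,y$ demanded by the claim. A standing observation, used throughout, is that no retroactive merge occurs between the moment of \resolvepair$(u',v')$ and the current line~\ref{alg:retromerge1} or~\ref{alg:retromerge2}, since retroactive merges happen exactly at those lines; in this interval the inactive trees of $T_1'$ are frozen, and the unique $T_1'$-path between two nodes of a common tree agrees with the unique $T_1$-path between them.

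The witness $v\in V_1[L_j]\cap V_1[\{u'\}\cup W]$ lies in both $L_j$'s current tree (since $V_1[L_j]\subseteq L_j$'s tree) and the active tree at the moment of \resolvepair$(u',v')$ (since $V_1[\{u'\}\cup W]$ is spanned by leaves active then); distinct trees of $T_1'$ are disjoint, so $L_j$'s current tree cannot have existed as an inactive tree at that earlier moment, and hence at least one leaf of $L_j$ must be active there, giving $L_j^*\neq\emptyset$. To rule out $|L_j^*|=1$, suppose $\zeta$ is the sole representative. Then every leaf of $L_j$ is a $T_1$-descendant of the child $c_\zeta$ of $p_1(\zeta)$ toward $\zeta$, because $L_j$ is precisely the merged group below $c_\zeta$. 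However, no leaf $s\in\{u'\}\cup W$ can be a $T_1$-descendant of $c_\zeta$: such an $s$ would be an active leaf in the same $T_1'$-tree as $\zeta$ and, by uniqueness of $T_1'$-paths, would have to lie below $c_\zeta$ in $T_1'$, contradicting that $\zeta$ is the unique active leaf in that $T_1'$-subtree. Consequently $V_1[L_j]$ lies entirely in the $T_1$-subtree rooted at $c_\zeta$ while $V_1[\{u'\}\cup W]$ avoids that subtree, contradicting the hypothesis. The same subtree-avoidance argument, applied to each representative $\ell^*\in L_j^*$ with its merged group $G(\ell^*)\subseteq L_j$ (consisting of $\ell^*$ together with all leaves merged into it by the time of \resolvepair$(u',v')$), gives $V_1[G(\ell^*)]\cap V_1[\{u'\}\cup W]=\emptyset$; combining this with the decomposition $V_1[L_j]=V_1[L_j^*]\cup\bigcup_{\ell^*\in L_j^*}V_1[G(\ell^*)]$ yields step (ii), so we may fix a witness $v'\in V_1[L_j^*]\cap V_1[\{u'\}\cup W]$.

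For step (iii), any merge involving an $L_j^*$-leaf $x$ must have its partner $y$ also in $L_j^*$: the merge places $x$ and $y$ in the same $T_1'$-tree permanently, so $y$ eventually lies in $L_j$'s current tree, whence $y\in L_j$ and (by being active at the merge) $y\in L_j^*$. Therefore exactly $|L_j^*|-1$ merges reduce $L_j^*$ down to its surviving representative, and these correspond bijectively to the branching internal nodes of the minimal $T_1$-subtree spanning $L_j^*$, each merge of $(x,y)$ contributing the node $\lca_1(x,y)$. If $v'$ is a branching internal node of $V_1[L_j^*]$, take the merged pair $(x,y)$ with $\lca_1(x,y)=v'$. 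If $v'$ is non-branching, take the unique merge combining the group of $L_j^*$-leaves lying entirely below $v'$ in $T_1$ with the rest of $L_j^*$; the resulting pair $(x,y)$ has $x$ below $v'$ and $y$ outside $v'$'s subtree, so $v'$ lies on the $T_1$-path from $x$ to $y$. Either way $v'\in V_1[\{x,y\}]\cap V_1[\{u'\}\cup W]$ and $x,y\in L_j^*$ are active at the moment of \resolvepair$(u',v')$. Finally, the call \resolvepair$(x,y)$ must execute a merge-type branch (line~\ref{alg:2merge} or~\ref{alg:cutmerge}): a cut-off would place $x,y$ in distinct $T_1'$-trees, and since no retroactive merge occurs before line~\ref{alg:retromerge1}/\ref{alg:retromerge2} they could never reunite in $L_j$'s tree.

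The main obstacle will be the structural claim in step (iii) that the $|L_j^*|-1$ merges encode exactly the branching internal nodes of $V_1[L_j^*]$, together with the careful handling of the non-branching case where I must pick the merge that crosses $v'$. The subtree-avoidance arguments in steps (i)--(ii) are conceptually straightforward but depend crucially on the frozen-inactive-trees invariant in the interval between \resolvepair$(u',v')$ and line~\ref{alg:retromerge1}/\ref{alg:retromerge2}, which is what lets us transport the witness $v$ between the two $T_1'$-snapshots.
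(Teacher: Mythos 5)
Your proof is correct in substance but takes a genuinely different route from the paper's. The paper argues \emph{dynamically}: at the moment of \resolvepair$(u',v')$ no node of $V_1[\{u'\}\cup W]$ lies in the subtree represented by any active leaf outside $W\cup\{u'\}$ (such a represented subtree contains exactly one active leaf, hence neither $u'$ nor a leaf of $W$, hence no node on a path between them), while at the moment the tree with leaf set $L_j$ becomes inactive its represented subtree contains all of $V_1[L_j]$ and therefore a node of $V_1[\{u'\}\cup W]$; since a represented subtree can only grow through a merge, the first merge that absorbs such a node supplies the pair $(x,y)$, and the absorbed node lies in $V_1[\{x,y\}]$. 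You instead argue \emph{statically}: you isolate the set $L_j^*$ of $L_j$-leaves active at the time of \resolvepair$(u',v')$, push the witness node into $V_1[L_j^*]$ via the subtree-avoidance argument, and then read the required pair off the merge tree of $L_j^*$, which you identify with the Steiner topology of $L_j^*$ in $T_1$. Your version is longer and needs the branching/non-branching case split plus the bijection between merges and branching nodes, but it pins down explicitly \emph{which} merge works; the paper's first-crossing argument sidesteps all of that. Three points to shore up. First, step (iii) silently assumes that all $|L_j^*|-1$ merges have already occurred by the time of line~\ref{alg:retromerge1}/\ref{alg:retromerge2}, i.e.\ that the tree with leaf set $L_j$ is already inactive; this is true but requires the paper's observation that every node of $V_1[\{u'\}\cup W]$ is a descendent of $\lca_1(R\cup B)$ and every leaf below $\lca_1(R\cup B)$ lies in an inactive tree at that point --- without it, the particular merge covering your witness might not yet have happened. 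Second, merges also occur outside \resolvepair\ (in \preprocess, in line~\ref{alg:finalmerge} of \resolveset, and in Procedures~\ref{fig:onetree} and~\ref{fig:twotrees}); you should note that none of these can fall in the window between \resolvepair$(u',v')$ and the retroactive merge under the conditions of lines~\ref{alg:ifretro1}/\ref{alg:ifretro2}, so the merge of $x$ and $y$ is indeed realized by a call to \resolvepair$(x,y)$. Third, a cosmetic issue: you reuse the symbol $v'$ for your witness node, which collides with the leaf $v'$ of \resolvepair$(u',v')$; rename it.
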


First, we note that all leaves in $L_j$ must be inactive, since all nodes in $V_1[\{u'\}\cup W]$ are descendents of $\lca_1(R\cup B)$, and all leaves that are descendents of $\lca_1(R\cup B)$ in $T_1'$ are in inactive trees.
The moment that the last leaf in $L_j$ was made inactive, we cut off the {\it subtree} in $T_1'$ represented by this leaf (i.e., the leaves that were (recursively) merged with it).  

We now consider the active leaves excluding $W\cup\{u'\}$ just before the execution of \resolvepair$(u',v')$ and the subtrees in $T_1'$ that they represent.  Note that these subtrees are not components of $T_1'$, but that, for a given active leaf $u$, the subtree represented by $u$ is one of the two subtrees rooted at the children of $p_1(u)$, namely the subtree that contains $u$. 
By definition, these subtrees contain exactly one active leaf, and therefore these subtrees do not contain a leaf in $W\cup \{u'\}$.
It follows that none of the nodes in $V_1[\{u'\}\cup W]$, are in such a subtree: Since all active leaves are in the same tree in $T_1'$, and since the leaves in $\{u'\}\cup W$ are active, a subtree containing a node in $V_1[\{u'\}\cup W]$ would also contain either $u'$ or $W$.

Hence, there must have been some moment, between the execution of \resolvepair$(u',v')$ and the moment of the retroactive merge, where some node in $V_1[\{u'\}\cup W]$ became part of the subtree represented by some active leaf, say $y$, and the only way in which the subtree represented by $y$ can change is when some active leaf $x$ is merged with $y$. Furthermore, the node in $V_1[\{u'\}\cup W]$ that becomes part of the subtree represented by $y$ must be in $V_1[\{x,y\}]$. We have thus proven Claim~\ref{claim:xy}.
\vspace*{.001\baselineskip}

We now show how Claim~\ref{claim:xy} can be used to prove Claim~\ref{claim:freenodes}.
First, suppose $x,y$ are a different color from $u',v'$, i.e., if $u',v'\in R$, then $x,y\in B$ and vice versa. Note that then $W$ has the same color as $x,y$, by Claim~\ref{claim:retro}. Since $V_1[\{x,y\}]$ intersects $V_1[\{u'\}\cup W]$, and $u'$ has a different color from $W$, we have that $W$ is a descendent of $\lca_1(x,y)$. But note that one leaf in $W$, either $\hat r_2$ or $\hat u$, is still active at the start of Procedure~\ref{fig:threetrees} and Procedure~\ref{fig:twotrees}. Hence, $x$ and $y$ cannot have been merged before the start of this procedure. Furthermore, if there was a merge in Procedure~\ref{fig:twotrees} (in line~\ref{alg:rbmerge} or line~\ref{alg:lastpair}) we would not have executed a retroactive merge, i.e., we would not have reached line~\ref{alg:retromerge2}. 

We now consider the case that $x,y$ have the same color as $u',v'$. Let $U$ be the color of $x,y,u',v'$, i.e., $U=R$ if $x,y,u',v'\in R$ and $U=B$ otherwise. We let $\tilde U$ be the leaves in $U$ that are active just before executing \resolvepair$(u',v')$ that are descendents of $\lca_1(x,y)$.
We will use Observation~\ref{obs:bicolor} to show that there must have been a {\it FinalCut} or a {\it Merge-After-Cut} executed in a call to \resolvepair\ with arguments $\tilde u_1,\tilde u_2\in \tilde U$, between time $t_1$ (just before the call to \resolvepair$(u',v')$) and $t_2$ (just after the call to \resolvepair$(x,y)$ in which $x$ and $y$ are merged). 

Since $\tilde U$ contains the active leaves in the subtree of $T_1'$ rooted at $\lca_1(x,y)$ which is a subtree of the subtree of $T_1'$ rooted at $\lca_1(U)$, 
executions of \resolvepair\ between time $t_1$ and $t_2$ are performed for active sibling pairs in $T_1'$ that are either both in $\tilde U$, or both in $U \setminus \tilde U$. Furthermore, the latter do not affect any part of $T_2'$ that is not below $\lca_2(\tilde U)=\lca_2(x,y)$, and we may therefore assume that all calls to \resolvepair\ between time $t_1$ and $t_2$ are performed for active sibling pairs $u,v\in \tilde U$.
In fact, we may assume for simplicity that $\lca_2(x,y)$ is the root of the tree in $T_2'$ containing $x$ and $y$, since this does not affect the outcome of any call to \resolvepair\ between time $t_1$ and $t_2$.

We let $A(x,y)$ be the leaves that are descendents of $\lca_2(x,y)$ in the original tree $T_2$, and that are active at time $t_1$. 
Note that $u'\in A(x,y)$, since $u'$ must be a descendent of $\lca_1(x,y)$ by the condition that $V_1[\{x,y\}]\cap V_1[W\cup \{u'\}]\neq\emptyset$, and by the fact that $u',x,y\in U$ and $U$ is compatible, $u'$ must then also be a descendent of $\lca_2(x,y)$.
Also, note that $u'$ and $v'$ were active siblings in $T_1'$ when executing \resolvepair$(u',v')$, so also $v'\in A(x,y)$. Finally, this implies that also $W\subset A(x,y)$, since $W$ is the set of active leaves that is cut off in line~\ref{alg:2cutW} of \resolvepair$(u',v')$. Note that $W,u',v'$ are all in the same tree in $T_2'$ at time $t_1$, and hence at this moment, there exists a tree in $T_2'$ containing active leaves $A\subseteq A(x,y)$ that is $\tilde U$-bicolored. 
At time $t_2$, we have just merged $x$ with $y$, and so the only leaf in $\tilde U$ that is still active is $y$.
Furthermore, by our simplifying assumption that $\lca_2(x,y)$ is the root of the tree in $T_2'$ containing $x$ and $y$, the tree in $T_2'$ containing $y$ has no other active leaves, and is therefore $\tilde U$-unicolored. 
We have thus shown that at time $t_1$, there is a tree in $T_2'$ with active leaf set $A$ that is $\tilde U$-bicolored, and at time $t_2$, all leaves in $\tilde U$ (and hence in $A\cap \tilde U$) are inactive or in $\tilde U$-unicolored trees. Furthermore, we argued that we may assume that all calls to \resolvepair\ between time $t_1$ and $t_2$ are performed for active sibling pairs $u,v\in \tilde U$.

By Observation~\ref{obs:bicolor}, there must therefore have been some call to \resolvepair\ that performed a {\it FinalCut} or {\it Merge-After-Cut} between time $t_1$ and $t_2$.
\end{proofof}

\setcounter{claim}{5}
\addtocounter{claim}{-1}

\begin{claim}
If $\hat v_1$ and $\hat v_2$ are active siblings in $T_2'$ in line~\ref{alg:ifsib} of Procedure~\ref{fig:twotrees}, then one of them is $\hat b$.
\end{claim}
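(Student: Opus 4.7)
My plan is to argue by contradiction. Suppose $\hat v_1$ and $\hat v_2$ are active siblings in $T_2'$ at line~\ref{alg:ifsib} and neither equals $\hat b$. Then $\hat u=\hat b$ and $\{\hat v_1,\hat v_2\}=\{\hat r_1,\hat r_2\}$, meaning $\hat b$ lies in a tree in $T_2'$ disjoint from the tree $T^*$ containing $\hat r_1,\hat r_2$, and the subtree of $T^*$ rooted at $\lca_2(\hat r_1,\hat r_2)$ contains only $\hat r_1,\hat r_2$ as active leaves.

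The first step is to place $\hat b$ topologically in $T_2$. Since $R$ and $B$ lie on opposite sides of $\lca_1(R\cup B)$ in $T_1'$, we have $\hat r_1\hat r_2\mid \hat b$ in $T_1$. By Observation~\ref{obs:fail}(1), $\{\hat r_1,\hat r_2,\hat b\}$ is inconsistent, so this relation fails in $T_2$; hence $\hat b$ is a descendent of $\lca_2(\hat r_1,\hat r_2)$ in $T_2$. Consequently $\hat b$ starts out inside the $T_2$-subtree rooted at $\lca_2(\hat r_1,\hat r_2)$, and since it is not inside the corresponding subtree in $T_2'$ at line~\ref{alg:ifsib}, some edge on the $T_2$-path from $\hat b$ to $\lca_2(\hat r_1,\hat r_2)$ must have been deleted during \resolveset$(R)$ or during the $B$-processing loop on lines~\ref{alg:beginresolveB}--\ref{alg:endresolveB}.

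Next I invoke Observation~\ref{obs:fail}(4) to produce an active leaf $x\notin R$ lying in the tree containing $\hat r_1,\hat r_2$ at the end of \resolveset$(R)$, satisfying $\hat r_1 x\mid\hat r_2$ or $\hat r_2 x\mid\hat r_1$ in $T_2$; in particular $x$ is in the subtree of $T_2'$ rooted at $\lca_2(\hat r_1,\hat r_2)$ at that moment. For $\hat r_1,\hat r_2$ to be active siblings in $T_2'$ at line~\ref{alg:ifsib}, every active leaf in this subtree other than $\hat r_1,\hat r_2$ must be removed during the $B$-processing loop. Leaves in $B$ can be deactivated by merging or cut off, whereas leaves not in $R\cup B$ can only be cut off. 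I would trace each call \resolvepair$(u',v')$ during processing $B$: any cut it produces removes, via line~\ref{alg:2cutW}, a nonempty subtree $W$ below $p_2(u')$ (nonempty by the very definition of $p_2(u')$), while keeping $u'$ on the ``main'' side with $v'$ until its possible subsequent deactivation. The key technical step is then to show, by analyzing how each such cut interacts with the subtree at $\lca_2(\hat r_1,\hat r_2)$, that it is impossible to simultaneously (i) remove every active witness (including $x$) from this subtree and (ii) separate $\hat b$ from this subtree, without leaving an active leaf trapped strictly between $\hat r_1$ and $\hat r_2$ in $T_2'$.

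I expect the main obstacle to be the case analysis that accompanies this trace. One must separately treat the case where the edge separating $\hat b$ is cut during \resolveset$(R)$ (which, since \resolveset$(R)$ failed and so no FinalCut or Merge-After-Cut was executed there, must occur as part of line~\ref{alg:2cutW} with $\hat b\in W$) from the case where it is cut during processing $B$, and within each case sub-case on whether the final witness $x$ lies in $B$ or in $\mathcal{L}\setminus(R\cup B)$. The main levers will be the compatibility of triplets, the relation $\lca_2(R)=\lca_2(R\cup B)$, and the fact that $p_2(u')$ is always strictly below $\lca_2(u',v')$ after the relabeling in \resolvepair, which forces any cut that expels $\hat b$ from $\hat r_1,\hat r_2$'s subtree to leave behind either $u'$ or an active leaf of $W$ on the wrong side of $\lca_2(\hat r_1,\hat r_2)$, contradicting siblinghood of $\hat r_1,\hat r_2$ in $T_2'$.
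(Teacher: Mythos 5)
Your setup is sound --- the contradiction hypothesis, the placement of $\hat b$ below $\lca_2(\hat r_1,\hat r_2)$ via the inconsistency of the triplet, and the appeal to Observation~\ref{obs:fail}(4) for a witness $x$ are all correct --- but the proof is not actually carried out. The entire burden of the argument is packed into the sentence beginning ``The key technical step is then to show\dots,'' and that step is precisely the claim to be proved. You describe which cases would arise and which ``levers'' you would pull, but you never establish the crucial combinatorial fact that the cuts made during \resolveset$(R)$ and during lines~\ref{alg:beginresolveB}--\ref{alg:endresolveB} cannot simultaneously empty the subtree at $\lca_2(\hat r_1,\hat r_2)$ of all other active leaves and expel $\hat b$. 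In particular, a single-cut analysis of the kind you sketch in your last sentence does not suffice: a cut may well leave an active leaf ``trapped'' between $\hat r_1$ and $\hat r_2$ temporarily, only for a \emph{later} cut to remove it, so you need an invariant or an argument about the last such removal, neither of which appears in the proposal.

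For comparison, the paper's proof avoids tracing individual cuts by splitting on whether the $T_2'$-tree containing $\hat r_1,\hat r_2$ still contains an \emph{active leaf of $B$} when \resolveset$(R)$ fails. If it does, that tree contains $\lca_2(R\cup B)=\lca_2(\hat r_1,\hat r_2)$, hence contains $\lca_2(u,v)$ for any $u,v\in B$ with $u$ in that tree; the relabeling rule in line~\ref{alg:cond1} of \resolvepair\ then makes it impossible for a \emph{FinalCut} to remove the last active $B$-leaf from that tree, and no other operation can do so either --- so an active $B$-leaf persists there, contradicting that $\hat b$ (the sole surviving $B$-leaf) sits in a different tree. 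If it does not, the $B$-loop never touches that tree, so $\hat r_1,\hat r_2$ would already have to be active siblings in $T_2'$ when \resolveset$(R)$ fails, directly contradicting Observation~\ref{obs:fail}(4). If you want to salvage your approach, the cleanest repair is to adopt this dichotomy rather than tracking the witness $x$ and the edge separating $\hat b$ through the whole sequence of cuts.
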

\begin{proofof}{Claim~\ref{claim:rbmerge}}
Suppose by contradiction that $\hat v_1$ and $\hat v_2$ are $\hat r_1$ and $\hat r_2$.
First, suppose the tree containing $\hat r_1,\hat r_2$ contained some active leaf in $B$ after \resolveset$(R)$.  Since, at the start of Procedure~\ref{fig:twotrees}, $\hat b$ is the only active leaf in $B$, and $\hat b$ is not in the same tree as $\hat r_1,\hat r_2$, it must be the case that some execution of \resolvepair$(u,v)$ with $u,v\in B$ cut off the leaf $u$, and after this operation the tree in $T_2'$ containing $\hat r_1,\hat r_2$ contains no active leaves in $B$. But note that, since $u,v\in B$,  the path from $u$ to $\lca_2(R\cup B)$ must contain $\lca_2(u,v)$, and thus, since $\lca_2(\hat r_1,\hat r_2)=\lca_2(R\cup B)$, the tree containing $u,\hat r_1$ and $\hat r_2$ before \resolvepair$(u,v)$ contained $\lca_2(u,v)$. But this contradicts the conditon in line~\ref{alg:cond1} of \resolvepair\ for defining $u$ if $u$ and $v$ are in different trees in $T_2'$.

Hence, it must be the case that, if $\hat v_1,\hat v_2$ are $\hat r_1,\hat r_2$, then the tree in $T_2'$ containing $\hat r_1,\hat r_2$ after \resolveset$(R)$ contained no leaves in $B$.
Note that this immediately gives a contradiction, since then the tree containing $\hat r_1,\hat r_2$ was not changed after \resolveset$(R)$, and thus $\hat r_1,\hat r_2$ must have been active siblings at the moment when \resolveset$(R)$ failed, but this contradicts Observation~\ref{obs:fail}~(4).
Thus, $\hat v_1$ and $\hat v_2$ cannot be $\hat r_1$ and $\hat r_2$, so one of them must be $\hat b$.
\end{proofof}}
\end{document}